\theoremstyle{plain}
\newtheorem{theorem}{Theorem}[section]
\newtheorem{lemma}[theorem]{Lemma}
\newtheorem{corollary}[theorem]{Corollary}
\newtheorem{proposition}[theorem]{Proposition}
\theoremstyle{definition}
\newtheorem{example}[theorem]{Example}
\theoremstyle{remark}
\newtheorem{remark}[theorem]{Remark}
\numberwithin{equation}{section}
\DeclareMathOperator{\Tr}{Tr}
\DeclareMathOperator{\tr}{Tr}
\def\geqslant{\ge}
\def\leqslant{\le}
\def\bq{\begin{eqnarray}}
\def\eq{\end{eqnarray}}
\def\bqq{\begin{eqnarray*}}
\def\eqq{\end{eqnarray*}}
\def\nn{\nonumber}
\def\eps{\varepsilon}
\def\wto{\rightharpoonup}
\newcommand{\norm}[1]{\left\lVert #1 \right\rVert}
\newcommand\1{{\ensuremath {\mathds 1} }}
\newcommand{\brar}{\right|}
\newcommand{\bral}{\left\langle}
\newcommand{\ketr}{\right\rangle}
\newcommand{\ketl}{\left|}
\newcommand{\HartE}{e_{\rm H}}
\newcommand{\gammaP}{\gamma_{\Psi}}
\newcommand{\dM}{{\rm d}M}
\renewcommand{\epsilon}{\varepsilon}
\def\cF {\mathcal{F}}
\def\cB {\mathcal{B}}
\def\R {\mathbb{R}}
\def\C {\mathbb{C}}
\def\cS {\mathcal{S}}
\def\cP {\mathcal{P}}
\def\E {\mathcal{E}}
\def\cE {\mathcal{E}}
\def\cK{\mathcal{K}}
\def\R {\mathbb{R}}
\def\C {\mathbb{C}}
\def\gS{\mathfrak{S}}
\def\cS {\mathcal{S}}
\def\E {\mathcal{E}}
\def\cM {\mathcal{M}}
\def\gH{\mathfrak{H}}
\newcommand\ii{{\ensuremath {\infty}}}
\newcommand\pscal[1]{{\ensuremath{\left\langle #1 \right\rangle}}}
\renewcommand{\leq}{\leqslant}
\renewcommand{\geq}{\geqslant}
\newcommand{\eH}{\ensuremath{e_{\text{\textnormal{H}}}}}
\newcommand{\cEH}{\ensuremath{\cE_{\text{\textnormal{H}}}}}
\title[Derivation of Hartree's theory]{Derivation of Hartree's theory for generic mean-field Bose systems}
\author[M. Lewin]{Mathieu Lewin}
\address{CNRS \& Laboratoire de Math\'ematiques (UMR 8088), Universit\'e de Cergy-Pontoise, F-95000 Cergy-Pontoise, France.}
\email{mathieu.lewin@math.cnrs.fr}
\author[P.T. Nam]{Phan Th\`anh Nam}
\address{CNRS \& Laboratoire de Math\'ematiques (UMR 8088), Universit\'e de Cergy-Pontoise, F-95000 Cergy-Pontoise, France.}
\email{phan-thanh.nam@u-cergy.fr}
\author[N. Rougerie]{Nicolas Rougerie}
\address{Universit\'e Grenoble 1 \& CNRS,  LPMMC (UMR 5493), B.P. 166, F-38 042 Grenoble, France}
\email{nicolas.rougerie@grenoble.cnrs.fr}
\date{November 6, 2013}
\subjclass[2010]{Primary 81V70, 35Q40.}
\begin{document}

\begin{abstract}
In this paper we provide a novel strategy to prove the validity of Hartree's theory for the ground state energy of bosonic quantum systems in the mean-field regime. For the known case of trapped Bose gases, this can be shown using the strong quantum de Finetti theorem, which gives the structure of infinite hierarchies of $k$-particles density matrices. Here we deal with the case where some particles are allowed to escape to infinity, leading to a lack of compactness. Our approach is based on two ingredients: (1) a weak version of the quantum de Finetti theorem, and (2) geometric techniques for many-body systems. Our strategy does not rely on any special property of the interaction between the particles. In particular, our results cover those of Benguria-Lieb and Lieb-Yau for, respectively, bosonic atoms and boson stars.
\end{abstract}

\maketitle

\setcounter{tocdepth}{2}
\tableofcontents

\section{Introduction}

In this paper, we consider a quantum system composed of a very large number of interacting particles.  Because of the correspondingly large number of degrees of freedom, it is extremely hard to describe the precise behavior of a system of this kind. It is thus often useful to resort to approximate theories which are simpler to deal with. One of the major issues in many-body physics is then to justify the validity of the effective models, that is, to relate them to the many-body problem in a particular regime.

For bosons, the simplest effective theory can be obtained by assuming that all particles are in the same quantum state. This leads to the celebrated nonlinear model introduced by Hartree in~\cite{Hartree-28}.\footnote{More precisely, the theory was intended to be used for the electrons in an atom but it had to be corrected later by Fock since electrons are not bosons, but rather fermions.} In this theory, the particles in the system behave as if they were independent, but submitted to a common \emph{mean-field} potential due to all the other particles.

The purpose of this paper is to prove that Hartree's theory gives a correct approximation of the ground state of the many-body bosonic system in the limit of large particle number $N$ with the intensity of the pair interaction assumed to decrease proportionally to $N ^{-1}$ (which is often called the \emph{mean-field regime}). As we will recall below, there are many results of this kind in the literature, most of them dealing with particular systems. Here we propose a novel method which allows to deal with a very large class of many-body systems and does not depend on the special form of the interactions. We will be particularly interested in the case where some particles are allowed to escape to infinity, leading to a possible lack of compactness.

\subsubsection*{\bf The model}
Let us consider a system composed of $N$ identical bosons. The one-particle space is any separable Hilbert space $\gH$ and the whole system is therefore described by the $N$-fold symmetric tensor product $\gH^N:=\bigotimes_s^N\gH$. We assume that the $N$-body Hamiltonian takes the following form:
\begin{equation}
H_N:=\sum_{j=1}^N T_j+\frac{1}{N-1}\sum_{1\leq k<\ell\leq N}w_{k\ell}.
\label{eq:intro hamil} 
\end{equation}
Here $T$ is a self-adjoint operator on $\gH$ that accounts for the self energy of the particles, and $w$ is a symmetric operator on the two-particle space $\gH^2$, which corresponds to the pair interactions between the particles. As usual, $T_j = 1 \otimes \ldots \otimes T \otimes \ldots \otimes 1$ denotes the associated operator acting on the $j$-th particle and $w_{k \ell}$ is the potential acting on the pair $(k,\ell)$ of particles. 
We will always assume that $H_N$ is bounded from below.\footnote{In the mean-field regime, this automatically implies $H_N\geq -CN$.} Note that, even if we stick to Hamiltonians of the form~\eqref{eq:intro hamil} for simplicity, most of our results will indeed be valid for a much larger class of mean-field Hamiltonians. 

The fact that we are considering the mean-field regime is apparent in the factor $1/(N-1)$ in front of the interaction term in \eqref{eq:intro hamil}. It has the effect of keeping the single particle energy and the interaction energy of the same order of magnitude, so that one may expect a well-defined limit problem. Note that this factor could be replaced by any constant behaving like $1/N$ in the limit $N\to\ii$, without changing the result; the use of $1/(N-1)$ only simplifies some expressions. While this is certainly not the only scaling one may consider, it is simple and instructive, and has been very often considered in the past as a model case for the rigorous derivation of mean-field theories in many-body physics.

For physically relevant examples, we may typically think of bosons living in a bounded set $\Omega\subset \R^d$ and $T=-\Delta$ in $\gH=L^2(\Omega)$ with appropriate boundary conditions, or think of particles in $\R^d$ and $T=-\Delta +V(x)$ in $\gH=L^2(\R^d)$. In the latter case, either $V(x)\to \infty$ as $|x|\to \infty$ ($V$ is then called a trapping potential), or $V(x)\to 0$ as $|x|\to \infty$ but $V(x)$ is negative somewhere to bind (some of) the particles. The one-body operator $T$ may also involve the magnetic Laplacian $\left( -i\nabla + A(x) \right) ^2 +V(x)$ corresponding to a given vector potential $A$ on $\R ^3$, or a pseudo-relativistic operator $\sqrt{m^2c^4-\Delta} - mc^2+V(x)$. The two-body potential $w$ is often the multiplication operator by an even real-valued function $w(x-y)$ that decays at infinity. 

The Hartree functional is obtained by restricting the energy functional (quadratic form) of $H_N$ to uncorrelated functions of the form $\Psi=u^{\otimes N}$, where $u$ is a normalized vector of $\gH$. This leads to the nonlinear Hartree energy
\begin{equation}\label{eq:intro Hartree}
\frac{\pscal{u^{\otimes N},H_Nu^{\otimes N}}}N=\pscal{u,Tu}_\gH+\frac12\pscal{u\otimes u,w\, u\otimes u}_{\gH^2}:=\cEH(u).
\end{equation}
If $E(N)=\inf \sigma(H_N)$ denotes the bottom of the spectrum (ground state energy) of $H_N$, it is then expected that 
\begin{equation}\label{eq:intro hartree lim}
\lim_{N\to \infty} \frac{E(N)}{N} = \HartE 
\end{equation}
where $\HartE$ is the minimal Hartree energy 
\begin{equation}\label{eq:intro defi hartree}
\eH:=\inf_{\substack{\norm{u}=1}}\cEH(u).
\end{equation}
Since $\Psi=u^{\otimes N}$ can be used as a trial state, it is obvious that $E(N)N^{-1}\le e_{\rm H}$ for all $N\geq2$. The lower bound is much more subtle and it means that the purely uncorrelated ansatz $u^{\otimes N}$ does capture the first order of the ground state energy in the limit $N\to\ii$. We will come back below to the consequences it may have on the ground state of $H_N$ (when it exists), such as Bose-Einstein condensation.

Our aim in this paper is to provide a general strategy to justify the convergence \eqref{eq:intro hartree lim}, which is applicable to a very large class of models. The proof of \eqref{eq:intro hartree lim} is available in the literature for numerous special cases, including ``bosonic atoms'' \cite{BenLie-83,Solovej-90,Bach-91,BacLewLieSie-93,Kiessling-12}, boson stars \cite{LieThi-84,LieYau-87}, the homogeneous Bose gas \cite{Seiringer-11}, trapped Bose gases \cite{GreSei-13}, the Lieb-Liniger model \cite{LieLin-63,SeiYngZag-12}, and many others. More abstract models are discussed in \cite{FanSpoVer-80,VdBLewPul-88,RagWer-89,Werner-92}. The experimental observation of Bose-Einstein condensates in cold atomic gases has motivated a lot of interest for models of many-body bosonic systems. In this context the mean-field limit can be considered as a toy model, which is easier to analyze than the Gross-Pitaevskii limit~\cite{LieSeiSolYng-05,LieSei-06}. The latter limit corresponds to the case of a dilute gas. We will not consider it here but we hope that our method might in the future be useful to deal with it as well.

It is striking that the Hartree approximation is valid in very different physical situations. For example, in bosonic atoms the interactions are repulsive and the particles are submitted to an attractive potential generated by fixed nuclei, whereas in boson stars the interactions are attractive and the system is fully translation-invariant. This generality holds despite the fact that most proofs rely on specific properties of the Hamiltonian (and in particular of the two-body potential $w$). The main message of the present article is that the validity of the Hartree approximation does actually not rely on any specific properties of the Hamiltonian, but is rather a consequence of the \emph{special structure of the set of bosonic density matrices for large $N$}. 

Before explaining this, let us insist on the fact that we are interested here in the large-$N$ behavior of the ground state energy $E(N)$ of the Hamiltonian. There are many works on the related (but still different) derivation of the time-dependent Hartree theory from the time-dependent Schrödinger equation associated with the Hamiltonian $H_N$, see for instance~\cite{Hepp-74,GinVel-79,Spohn-80,BarGolMau-00,ElgErdSchYau-06,ElgSch-07,AmmNie-08,ErdSchYau-09,FroKnoSch-09,RodSch-09,KnoPic-10,Pickl-11}. In this case one starts close to a Hartree state at time zero, and then proves that the Schr\"odinger flow stays close to the corresponding trajectory of the Hartree state. It is fair to say that the validity of the Hartree approximation has been proved under much more general assumptions in the time-dependent case than for the ground state energy $E(N)$. Our work will therefore place the time-independent problem on the same footing as the time-dependent problem.

\subsubsection*{\bf Representability and de Finetti theorems} 
Our starting point is the formulation of the problem in terms of reduced density matrices~\cite{LieSei-09}. The $k$-particle density matrix $\gamma_\Psi^{(k)}$ of a pure $N$-body state $\Psi\in\gH^N$ is defined by 
\begin{equation}\label{eq:intro hierarchy}
\gamma^{(k)}_{\Psi}:=\tr_{k+1\to N}|\Psi\rangle\langle\Psi|
\end{equation}
for $0\leq k\leq N$, where $\tr_{k+1\to N}$ denotes the partial trace with respect to the last $N-k$ variables and $|\Psi\rangle\langle\Psi|$ is the orthogonal projection onto the state $\Psi$. Thus $\gamma^{(k)}_{\Psi}$ is a positive trace class operator on $\gH ^k$ with $\Tr_{\gH^k} \gamma^{(k)}_{\Psi}=1$ (note the normalization convention). 

The main interest of density matrices is that the energy per particle can be expressed\footnote{And of course, more generally, a problem involving at most $k$-particle interactions only depends on the $k$-particle density matrix. We stick to the two body case for clarity.}$^{,}$\footnote{Here and elsewhere in the article, the expression $\Tr \big(T\gamma_\Psi^{(1)}\big)$ should be understood in the quadratic form sense as $\Tr( \sqrt{T+C} \gamma_\Psi^{(1)} \sqrt{T+C})-C \Tr \gamma_\Psi^{(1)}$, where $C=\inf\sigma(T)$.} only in terms of $\gamma_\Psi^{(2)}$:
\begin{equation}\label{eq:intro ener matrices}
\frac{\bral \Psi, H_N \Psi \ketr}{N} = \tr_\gH \left( T \gammaP ^{(1)} \right) + \frac{1}{2} \tr_{\gH^2} \left( w \gammaP ^{(2)} \right) = \frac{1}{2} \Tr_{\gH^2} (H_2 \gamma^{(2)}_{\Psi}).
\end{equation}
We see that, thanks to the mean-field factor $1/(N-1)$ in front of the interaction,  the expression of the energy is even completely independent of $N$. The $N$ dependence is hidden in the constraint that $\gamma_\Psi^{(2)}$ must arise from an $N$-body state $\Psi$. One may thus reformulate the ground state energy of our system as 
\begin{equation}
 \label{eq:intro ener min matrices}
\frac{E(N)}{N}= \frac{1}{2}\inf\left\{ \tr_{\gH^2} \left( H_ 2 \gamma ^{(2)}\right) ,\: \gamma ^{(2)} \in \tilde\cP^{(2)} _{N}   \right\}
\end{equation}
where 
\begin{equation*}
\tilde\cP^{(2)}_N = \left\{ \gamma ^{(2)} \in \gS ^1 (\gH^2)\ :\ \exists \: \Psi \in \gH ^N, \ \norm{\Psi} = 1,\ \gamma ^{(2)} = \gammaP ^{(2)}  \right\}
\end{equation*}
is the set of all the two-particle density matrices arising from a pure $N$-body state $\Psi$ (``$N$-\emph{representable}'' two-particle density matrices). Here $\gS^1(\mathfrak{K})$ is the space of all trace-class operators on the Hilbert space $\mathfrak K$. 

It is often very useful to work with \emph{mixed states} instead of pure states. A mixed state is an operator $G$ on $\gH^N$ which is a  convex combination of pure states: $G=\sum_{i} n_i|\Psi_i\rangle\langle\Psi_i|$ with $n_i\geq0$ and $\sum_i n_i=1$. As the density matrices $\gamma_\Psi^{(k)}$ depend linearly on the operator $|\Psi\rangle\langle\Psi|$, their definition can easily be extended to mixed states:
\begin{equation}\label{eq:intro hierarchy mixed}
\gamma^{(k)}_{G}:=\tr_{k+1\to N} G.
\end{equation}
The $N$-body energy of a mixed state is now
\begin{equation}\label{eq:intro ener matrices mixed}
\frac{\tr\big(H_N G \big)}{N} = \tr_\gH \left( T \gamma_G^{(1)} \right) + \frac{1}{2} \tr_{\gH^2} \left( w \gamma_G^{(2)} \right) = \frac{1}{2} \Tr_{\gH^2} (H_2 \gamma^{(2)}_{G}).
\end{equation}
Diagonalizing $G$ and using that the energy is linear in the two-body density matrix, one sees that minimizing over mixed states gives the same answer as if one minimizes over pure states only. Therefore, we can also write
\begin{equation}
 \label{eq:intro ener min matrices mixed}
\frac{E(N)}{N}= \frac{1}{2}\inf\left\{ \tr_{\gH^2} \left( H_ 2 \gamma ^{(2)}\right) ,\: \gamma ^{(2)} \in \cP^{(2)} _{N}   \right\}
\end{equation}
where
\begin{equation*}
\cP^{(2)}_N = \left\{ \gamma ^{(2)} \in \gS^1 (\gH^2)\ :\ \exists \: 0\leq G\in\gS^1(\gH^N),\ \tr_{\gH^N}G=1,\ \gamma ^{(2)} = \gamma_G ^{(2)}  \right\}
\end{equation*}
is the set of mixed-representable two-particle density matrices, which coincides with the convex hull of the set $\tilde\cP^{(2)}_N$. Of course we may define in a similar fashion the sets $\tilde\cP^{(k)}_N$ and $\cP^{(k)}_N$ of $k$-particle density matrices, arising from pure and mixed states, respectively.

For fixed $N$, rewriting the ground state energy as in~\eqref{eq:intro ener min matrices} or~\eqref{eq:intro ener min matrices mixed} is not particularly helpful. While the one-particle set $\cP^{(1)}_N$ is known to be the set of all positive trace class operators on $\gH$ with trace $1$, it is indeed a famous open problem to characterize the set $\cP^{(2)}_N$. This is called the \emph{$N$-representability problem}, usually stated for fermions~\cite{ColYuk-00}. For bosons, it is possible to describe the sets $\cP^{(k)}_N$ in the limit $N\to\ii$, as we now explain.

Taking partial traces it is easy to see that the sets $\cP ^{(k)}_N$ form a decreasing sequence: 
$$\cP_{N+1}^{(k)}\subset \cP_N^{(k)}.$$
One may then see Problem \eqref{eq:intro ener min matrices mixed} as the minimization of a fixed energy functional on a variational set that gets more and more constrained as $N$ increases. The energy $E(N)/N$ is thus increasing with $N$ and with a leap of faith one may hope that our variational problem will converge to the one posed on the limit of the sequence of sets $\cP ^{(k)}_N$, that is on their intersection:
\begin{equation}\label{eq:intro lim N representable}
\cP^{(k)}:=\bigcap_{N\geq1}\cP^{(k)}_N,
\end{equation}
the set of $k$-body density matrices that are $N$-representable\footnote{Note that the set $\cP^{(k)}$ is empty for fermions since $\gamma_N^{(k)}\leq {N\choose k}^{-1}$ and $\Tr \gamma_N^{(k)}=1$ with our choice of normalization.} for any $N\geq k$. If we are allowed to exchange the infimum in \eqref{eq:intro ener min matrices mixed} and the limit \eqref{eq:intro lim N representable}, we then formally obtain 
\begin{equation}
\boxed{\lim_{N\to\ii}\frac{E(N)}{N}= \frac{1}{2}\inf\left\{ \tr_{\gH^2} \left( H_ 2 \gamma ^{(2)}\right) ,\: \gamma ^{(2)} \in \cP^{(2)}\right\}.}
\label{eq:formal-limit}
\end{equation}
Provided we can pass to the limit to write \eqref{eq:formal-limit}, the validity of Hartree's theory then follows from the fact that $\cP^{(2)}$ is the convex hull of the two-particle density matrices of Hartree states, $\ketl u ^{\otimes 2} \ketr\bral u ^{\otimes 2} \brar$, as we now explain.

Describing the structure of the limiting sets $\cP^{(k)}$ is precisely the object of the so-called \emph{quantum de Finetti theorem}, proved by St{\o}rmer and Hudson-Moody in~\cite{Stormer-69,HudMoo-75} and recalled in Theorem~\ref{thm:DeFinetti} below. This result is a quantum generalization of the famous classical de Finetti, also called Hewitt-Savage, theorem \cite{DeFinetti-31,DeFinetti-37,Dynkin-53,HewSav-55,DiaFre-80} about symmetric probability measures having an infinite number of variables. The importance of such results for mean-field theory has been known for a long time in the context of classical statistical mechanics~\cite{BraHep-77,Spohn-81,MesSpo-82,CagLioMarPul-92,Kiessling-93}. In~\cite{VdBLewPul-86,VdBLewPul-88,VdBDorLewPul-90} Varadhan's large deviation principle (similar in spirit to the \emph{classical} de Finetti theorem) was used to understand the Bose-Einstein condensation of certain quantum systems. The quantum version of the de Finetti theorem was then used to treat a larger class 
of problems in~\cite{FanSpoVer-80,PetRagVer-89,RagWer-89,Werner-92}.

In our language, the quantum de Finetti theorem simply states that, for any fixed $k$, \emph{the limiting set $\cP^{(k)}$ is the convex hull of the $k$-particle density matrices of Hartree states}, the latter being its extremal points. The operators $\gamma^{(k)}\in \cP^{(k)}$ can therefore all be written in the form
$$\gamma^{(k)}=\int_{S\gH}|u^{\otimes k}\rangle\langle u^{\otimes k}|\,d\mu(u)$$
where $\mu$ is a Borel probability measure on the sphere $S\gH$ of the one-particle Hilbert space $\gH$. 
Now, we can compute the right side of~\eqref{eq:formal-limit}: 
\begin{equation*}
\frac12\tr_{\gH^2} \left( H_ 2 \gamma ^{(2)}\right)=\frac12\int_{S\gH}\pscal{u^{\otimes 2},H_2 u^{\otimes 2}}\,d\mu(u)=\int_{S\gH}\cEH(u)\,d\mu(u)\geq \eH
\end{equation*}
where in the last inequality it is used that $\mu$ is a probability measure. Hence the right side of~\eqref{eq:formal-limit} is nothing but $\eH$. 

Of course, in the argument that we have sketched above, the main difficulty is to justify the formal limit~\eqref{eq:formal-limit}. A typical case that can be easily dealt with is that of trapped particles, that is, when the single-particle Hamiltonian $T$ has a compact resolvent. Simple examples consist of non-relativistic particles living in a bounded domain, or $T=-\Delta +V$ with $V$ a trapping potential. Proceeding as sketched above, one can easily justify Hartree's approximation for trapped bosons. We quickly study this situation in Section \ref{sec:trapped}. This is very much in the spirit of the earlier works \cite{FanSpoVer-80,RagWer-89} and we include it mainly for pedagogical purposes. Interesting cases covered by this approach include the homogeneous and trapped Bose gases.

\medskip

In many practical cases, however, the particles are not all trapped and some can escape to infinity. It should then be verified that those escaping to infinity are still correctly described by Hartree's theory. The situation is therefore much more complex and a more detailed analysis is necessary. It is the main object of this article to provide a strategy to carry over this detailed analysis.

In infinite dimensions, one has to be very careful of the topology which is used for investigating the limit of $\cP^{(k)}_N$. The set $\cP^{(k)}$ defined in~\eqref{eq:intro lim N representable} is actually the limit of $\cP_N^{(k)}$ for the trace norm. However, for unconfined systems it is often useful to use a weak topology instead of the strong one. The trace-class $\gS^1 (\gH^k)$ is the dual of the space $\cK(\gH^k)$ of compact operators, which is separable. Hence $\gS^1 (\gH^k)$ can as well be endowed with the corresponding weak-$\ast$ topology and its unit ball is sequentially compact for this topology. A natural question which arises in the case of a lack of compactness is then that of the \emph{weak-$\ast$ limit} of the sets $\cP^{(k)}_N$. We therefore introduce the set
\begin{equation}
\cP_{\rm w}^{(k)}:=\left\{ \gamma ^{(2)} \in \gS^1 (\gH^k),\ :\ \exists \: \gamma_{N_j}^{(k)}\in\cP^{(k)}_{N_j},\  \gamma_{N_j}^{(k)}\wto_\ast \gamma^{(k)} ~\text{as}~N_j\to \infty\right\}.
\end{equation}
In Section~\ref{sec:de_Finetti} we will prove a weak version of the quantum de Finetti theorem (see Theorem~\ref{thm:weak-De-Finetti}) which implies that $\cP_{\rm w}^{(k)}$ is the convex hull of all the weak limits of $k$-particle density matrices of Hartree states and, therefore, any $\gamma^{(k)}\in\cP_{\rm w}^{(k)}$ can be written in the form
\begin{equation}
\gamma ^{(k)} = \int_{B\gH}|u^{\otimes k}\rangle\langle u^{\otimes k}|\,d\mu(u)
\label{eq:melange_intro} 
\end{equation}
where $\mu$ is now a Borel probability measure on the \emph{unit ball} $B\gH=\{u\in\gH\ :\ \norm{u}\leq1\}$ of the one-particle Hilbert space $\gH$, instead of the unit sphere.

Our Theorem~\ref{thm:weak-De-Finetti} below is actually stronger and it states that if we have a sequence of $N$-body states such that $\gamma_{N_j}^{(k)}\wto_*\gamma^{(k)}$ for all $k\geq1$, then the limiting density matrices $\gamma^{(k)}$ can all be written as in~\eqref{eq:melange_intro} and they \emph{all share the same measure $\mu$}. By using this fact, we can easily prove the validity of Hartree's theory for systems in which the particles escaping to infinity carry a non-negative energy, that is, when the energy is a weakly lower semi-continuous function of the one- and two-particle density matrices. This is ensured if the particles may not form any bound states at infinity. We give interesting examples of such systems in Section~\ref{sec:wlsc} below. In particular, we are able to provide a short proof for bosonic atoms, as considered first by Benguria and Lieb in~\cite{BenLie-83}.

The weak version of the quantum de Finetti theorem is related to recent results of Ammari and Nier~\cite{AmmNie-08,AmmNie-11} (see Section \ref{sec:de_Finetti} for a more precise discussion), but we follow a different approach, based on geometric methods for many-body systems~\cite{Lewin-11}. Our proof of the weak quantum de Finetti theorem is in fact only the first step towards a more precise understanding of the lack of compactness for general systems and we will repeatedly use more refined arguments in the paper. 

Indeed, when the particles escaping to infinity have a nontrivial behavior, due for instance to attractive interaction potentials, looking at the weak limits of density matrices is not at all sufficient. The set $\cP_{\rm w}^{(k)}$ somehow only describe the particles which have not escaped, and the information on the other ones is completely lost. The accurate description of the lack of compactness will be done in this article, using the geometric methods of~\cite{Lewin-11}. These couple the (somehow algebraic) properties of many-particle systems with techniques from nonlinear analysis in the spirit of the concentration-compactness theory~\cite{Lieb-83,Lions-84}. Our approach is now very general and it allows to cover many quantum systems, independently of the special form of their interaction. This is the main achievement of this article. For instance, we will recover the famous result of Lieb and Yau on boson stars~\cite{LieYau-87} without using any particular property of the Newton potential. Note that 
our method is based on compactness arguments and it does not give any quantitative estimate on the discrepancy between the full many-body problem and its mean-field approximation, in contrast with operator-based methods that use specific properties of the interaction (see for example \cite{BenLie-83,Seiringer-11,LieYau-87,SeiYngZag-12}). 

\subsubsection*{\bf Typical result}
For the convenience of the reader, we state now a typical result that can be obtained from our method, and which we will prove in Section~\ref{sec:non-trapped} below. 
We consider the $N$-body Hamiltonian
\begin{equation}
H_N^V=\sum_{j=1}^N\big(\big(m^2-\Delta_{x_j})^{s}-m^{2s}+V(x_j)\big)+\frac{1}{N-1}\sum_{1\leq k<\ell\leq N}w(x_k-x_\ell)
\label{eq:H_N_intro} 
\end{equation}
on the bosonic space $\gH^N=\bigotimes_s^NL^2(\R^d)$ with $d\ge 1$. Here $m\ge 0$ and $s\in (0,1]$ are given constants. The case $s=1$ corresponds to non-relativistic particles, whereas $s=1/2$ describes a pseudo-relativistic system similar to the boson stars studied in~\cite{LieThi-84,LieYau-87}. To make $H_N^V$  a symmetric operator on $\gH^N$, as usual we assume that 
\bq \label{eq:assumption-V-w-1}
V:\R^d \to \R~~\text{\it and}~~w:\R^d \to \R~~\text{\it are measurable and}~ w(x)=w(-x).
\eq
The case of confined systems is the simplest, as explained above, and we refer to Section \ref{sec:trapped} where our results in this case are stated. Here we think of $H_N ^V$ as describing an unconfined system so both $V$ and $w$ decay at infinity, which we formalize in the following

\medskip

\noindent \textbf{Assumption on the decay at infinity of $V$ and $w$.}
{\it There exists some $R>0$ such that
\begin{equation}
V\1_{\R^d\setminus B_R}=f_1+f_2\quad\text{and}\quad
w\1_{\R^d\setminus B_R}=f_3+f_4,
\label{eq:assumption-V-w-2}
\end{equation}
where $f_j\in L^{p_j}(\R^d)$ with either $\max\{1,d/(2s)\}< p_j<\ii$, or $p_j=\ii$ and $f_j\to0$ at infinity, in the sense that $|\{ x\in \R^d: |f_j(x)|>\eps  
\}| <\infty$ for all $\eps>0$.}

\medskip

We also need to specify which possible local singularities $V$ and $w$ may have, which is the subject of the following 

\medskip

\noindent \textbf{Assumption on the local singularities of $V$ and $w$.}
{\it There exists non-negative constants $C$, $\alpha_\pm$ and $\beta_\pm$, with $\alpha_- + \beta_-<1$, such that
\begin{equation}\label{eq:assumption-V-w-3}
V_{\pm}(x)\leq \alpha_\pm (-\Delta)^{s/2}+C\quad\text{and}\quad w_{\pm}(x)\leq \beta_\pm (-\Delta)^{s/2}+C,
\end{equation}
where $f_+=\max\{f,0\}$ and $f_-=\max\{-f,0\}$ are respectively the positive and negative parts of $f$. 
}

\medskip

Since, by~\eqref{eq:assumption-V-w-2}, $V$ and $w$ are subcritical outside of the ball $B_R$, the bounds in ~\eqref{eq:assumption-V-w-3} are only interesting for the local parts $V\1_{B_R}$ and $w\1_{B_R}$. Note that under our assumption~\eqref{eq:assumption-V-w-3}, the local singularities are allowed to be comparable to the kinetic energy and, in particular, $V$ is not necessarily a compact perturbation of the kinetic operator. The upper bounds on $V_+$ and $w_+$ are not really necessary but they simplify the presentation. 

It is instructive to think of the caricature where both $V$ and $w$ are smooth functions of compact supports, which obviously satisfy our assumptions. The validity of Hartree's theory in this simple case is already a non trivial problem and does not seem to have been proven before. The conditions stated above are much more general however. In particular they are satisfied by Newton or Coulomb potentials when $s\geq1/2$. 

Under the previous assumptions on $V$ and $w$, the Hamiltonian $H^V_N$ is bounded from below and we denote by $E^V(N)$ its ground state energy. The corresponding Hartree functional reads
\begin{multline*}
\cEH^V(u)=\pscal{u,\left(\big(m^2-\Delta)^{s/2}-m^s+V\right)u}\\+\frac12\int_{\R^d}\int_{\R^d}w(x-y)|u(x)|^2|u(y)|^2\,dx\,dy 
\end{multline*}
and we denote by $\eH^V(\lambda)$ its infimum on the sphere of radius $\sqrt{\lambda}$:
$$\eH^V(\lambda)=\inf_{\substack{u\in H^s(\R^d)\\ \norm{u}_{L ^2}^2=\lambda}}\cEH^V(u).$$

Our main result contains two parts. The first item $(i)$ deals with the validity of Hartree's theory at the level of the energy, independently of the strength of the external potential $V$ (which may as well be $V\equiv0$). The second and third items of the statement give precisions about the density matrices of any sequence of approximate ground states, in particular when $V$ is sufficiently negative to bind some (or all) of the particles.

\begin{theorem}[\textbf{Validity of Hartree's theory}]\label{thm:general_intro}\mbox{}\\
Assume that $V$ and $w$ satisfy the previous assumptions~\eqref{eq:assumption-V-w-1},~\eqref{eq:assumption-V-w-2} and~\eqref{eq:assumption-V-w-3}.

\medskip

\noindent$(i)$ We always have
\begin{equation}
\boxed{\lim_{N\to\ii}\frac{E^V(N)}{N}=\eH^V(1).}
\label{eq:limit_energy_intro}
\end{equation}

\medskip

\noindent $(ii)$ Denote by $\Psi_N$ a sequence of approximate (normalized) ground states in $\gH^N$, that is, such that $\pscal{\Psi_N,H_N^V\Psi_N}= E^V(N)+o(N)$, and by $\gamma^{(k)}_N$ the corresponding density matrices. Then there exists a subsequence $(N_j)_{j\geq 1}$ and a Borel probability measure $\mu$ on the unit ball $B\gH=\{u\in\gH\; :\; \norm{u}\leq1\}$, supported on the set 
\begin{equation}
\cM^V=\Big\{u\in B\gH\ :\ \cEH^V(u)=e^V_H(\norm{u}^2)=e^V_H(1)-e^0_H(1-\norm{u}^2)\Big\},
\label{eq:def_M_V_intro} 
\end{equation}
such that 
\begin{equation}
\boxed{\gamma^{(k)}_{N_j}\wto_*\int_{\cM^V}|u^{\otimes k}\rangle\langle u^{\otimes k}|\,d\mu(u)}
\label{eq:limit_weak_intro}
\end{equation}
weakly-$\ast$ in $\gS^1(\gH^k)$, for all $k\geq1$.

\medskip

\noindent $(iii)$ Assume now that the binding inequality 
\begin{equation}
\eH^V(1)<\eH^V(\lambda)+\eH^0(1-\lambda)
\label{eq:no_dichotomy}
\end{equation}
is satisfied for all $0\leq\lambda<1$. Then the previous measure $\mu$ is supported on $S\gH$ and the limit~\eqref{eq:limit_weak_intro} for $\gamma^{(k)}_{N_j}$ is strong in the trace-class. In particular, if $\eH^V(1)$ admits a unique minimizer $u_H$, up to a phase, then there is \emph{complete Bose-Einstein condensation} on it:
\begin{equation}
\boxed{\gamma^{(k)}_{N_j}\to|u_H^{\otimes k}\rangle\langle u_H^{\otimes k}| \mbox{  strongly in } \gS ^1(\gH ^k)} 
\label{eq:limit_BEC}
\end{equation} 
for any fixed $k\geq1$.
\end{theorem}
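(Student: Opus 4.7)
The plan is to match a trivial trial-state upper bound with a lower bound coming from the weak quantum de Finetti theorem and a geometric localization that accounts for mass escaping to infinity; parts (ii) and (iii) then fall out of the cases of equality in the lower bound. The upper bound $E^V(N)/N\le\eH^V(1)$ is immediate from $\pscal{u^{\otimes N},H_N^V u^{\otimes N}}/N=\cEH^V(u)$ optimized over normalized $u$. For the matching lower bound I would take a sequence of approximate ground states $\Psi_N$, form the density matrices $\gamma_{N}^{(k)}$, and extract by a diagonal argument a subsequence $(N_j)$ along which $\gamma_{N_j}^{(k)}\wto_*\gamma^{(k)}$ in $\gS^1(\gH^k)$ for every $k\ge 1$. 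The weak quantum de Finetti theorem (Theorem~\ref{thm:weak-De-Finetti}) then supplies a \emph{single} Borel probability measure $\mu$ on the unit ball $B\gH$ with $\gamma^{(k)}=\int_{B\gH}|u^{\otimes k}\rangle\langle u^{\otimes k}|\,d\mu(u)$ for all $k$.

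The heart of the proof is to control the energy in the weak limit, since $H_N^V$ is not weakly lower semicontinuous --- mass can leak to infinity. Here I would invoke the geometric techniques of~\cite{Lewin-11} to localize the symmetric $N$-body state into an ``inside'' and an ``outside'' piece via smooth cut-offs $\chi_R,\eta_R$ with $\chi_R^2+\eta_R^2=1$ and $\chi_R\equiv 1$ on $B_R$, working in truncated bosonic Fock space so as not to break the Bose statistics. Sending $N_j\to\infty$ and then $R\to\infty$, the inside contribution asymptotically produces $\Tr(T\gamma^{(1)})+\tfrac12\Tr(w\gamma^{(2)})=\int_{B\gH}\cEH^V(u)\,d\mu(u)$, the cross terms between the two regions vanish by the decay assumption~\eqref{eq:assumption-V-w-2}, and the outside piece, feeling only the tails of $V$ and $w$, is bounded below by the free Hartree energy at the escaped mass $1-\norm{u}^2$. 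One therefore obtains
\[
\liminf_{N_j\to\infty}\frac{E^V(N_j)}{N_j}\;\ge\;\int_{B\gH}\Big(\cEH^V(u)+\eH^0\bigl(1-\norm{u}^2\bigr)\Big)d\mu(u).
\]
Combining this with the elementary $\cEH^V(u)\ge\eH^V(\norm{u}^2)$ and the subadditivity inequality $\eH^V(\lambda)+\eH^0(1-\lambda)\ge \eH^V(1)$ --- itself obtained from a two-bump Hartree trial state with one bump of mass $\lambda$ near the support of $V$ and a second bump of mass $1-\lambda$ translated to infinity --- closes the lower bound for~(i).

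For~(ii), saturation $\mu$-a.e.\ of every inequality in the chain above is precisely the characterization $u\in\cM^V$ of~\eqref{eq:def_M_V_intro}, so~\eqref{eq:limit_weak_intro} is just the de Finetti integral representation restricted to this support. For~(iii), the strict binding~\eqref{eq:no_dichotomy} forces $\norm{u}=1$ for $\mu$-a.e.\ $u$, whence $\Tr\gamma^{(1)}=\int\norm{u}^2 d\mu=1=\Tr\gamma_{N_j}^{(1)}$; coupled with weak-$\ast$ convergence this upgrades to trace-norm convergence at $k=1$, and the de Finetti formula together with dominated convergence propagates strong convergence to every $k\ge 1$. If $\eH^V(1)$ has a unique minimizer $u_H$ up to a phase, $\mu$ is concentrated on the circle $\{e^{i\theta}u_H:\theta\in[0,2\pi)\}$, on which the pure tensors $|u^{\otimes k}\rangle\langle u^{\otimes k}|$ are constant, yielding the complete condensation~\eqref{eq:limit_BEC}.

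The main obstacle is the geometric splitting step: one must decompose a symmetric $N$-body state into inside/outside parts without breaking the bosonic statistics, show that the pair-interaction cross terms vanish as $R\to\infty$, and handle the local singularities permitted by~\eqref{eq:assumption-V-w-3}, where $V$ and $w$ may only be relatively form-bounded (not relatively compact) with respect to $(m^2-\Delta)^{s/2}$. The first two points will be addressed by the Fock-space localization machinery of~\cite{Lewin-11} together with the decay assumption~\eqref{eq:assumption-V-w-2}, while the third relies crucially on the sub-criticality condition $\alpha_-+\beta_-<1$ to absorb the singular negative parts into the kinetic form and keep the inside and outside energy functionals uniformly bounded below in a quadratic-form sense.
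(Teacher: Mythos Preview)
Your overall architecture matches the paper's: trivial upper bound, weak de~Finetti on an approximate ground-state sequence, geometric $\chi_R/\eta_R$ splitting, inside piece $\to\int\cEH^V(u)\,d\mu$, outside piece $\to\int\eH^0(1-\norm{u}^2)\,d\mu$, and then subadditivity. Parts~(ii) and~(iii) are handled correctly.

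There is, however, a genuine gap in your treatment of the outside piece. You assert that it ``is bounded below by the free Hartree energy at the escaped mass $1-\norm{u}^2$,'' but this is the hardest step of the whole proof and your sketch does not indicate how to get there. After $\eta_R$-localization the outside contribution is a \emph{many-body} energy of the form $N^{-1}\sum_{k}\Tr_{\gH^k}\big[\big(\sum_i K_i+\tfrac{1}{N-1}\sum_{i<j}w_{ij}\big)G_{N,k}^\eta\big]$; the $\eta_R$-localized particles have escaped to infinity, their density matrices converge weakly to zero, and the weak de~Finetti theorem applied to them yields nothing. To reach $\int\eH^0(1-\norm{u}^2)\,d\mu$ one needs two separate ingredients that you have not identified:
\begin{enumerate}
\item[(a)] an \emph{independent} proof that the purely translation-invariant many-body problem itself satisfies $\lim_{k\to\infty}E^0(k)/k=\eH^0(1)$, and in fact uniformly in the coupling constant (the paper's equicontinuity of the functions $b_k(\lambda)$);
\item[(b)] the link (Theorem~\ref{thm:other-localization}) between the Fock-space localization weights and the de~Finetti measure, $\sum_k f(k/N)\Tr G_{N,k}^{\chi}\to\int f(\norm{\chi_R u}^2)\,d\mu$, which via the duality $\Tr G_{N,k}^\eta=\Tr G_{N,N-k}^\chi$ is what converts the outside sum into an integral against $\mu$.
\end{enumerate}
Point~(a) is not a technicality: in the translation-invariant case approximate ground states can exhibit \emph{vanishing} (no nontrivial weak limit after any translation), so a direct de~Finetti argument fails there too. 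The paper devotes an entire subsection to this, using the Lieb--Yau trick of sacrificing one particle via the symmetry to manufacture an artificial attractive one-body potential $-\eps w_-$ that breaks translation invariance and restores compactness, and then letting $\eps\to 0$. Without this prior result, your lower bound for the outside piece is circular: you are effectively assuming the theorem (in the $V\equiv 0$ case) in order to prove it.
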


\begin{remark}[Generalizations]
Our approach also applies for bosons in a magnetic field corresponding to replacing the fractional Laplacian $(m^2-\Delta)^{s}$ by its magnetic version $(m^2+|\nabla+iA(x)|^2)^{s}$, see Remark~\ref{rmk:magnetic}. We only need that $|A|^{2s}$ satisfies similar assumptions as $V$. We are also able to deal with particles hoping on a lattice, see Remark~\ref{rmk:lattice}.
\end{remark}

It is a classical fact used in variational methods that the non-strict inequality
$$\eH^V(1)\leq\eH^V(\lambda)+\eH^0(1-\lambda)$$
is verified for all $0\leq\lambda\leq1$. The set $\cM^V$ defined in~\eqref{eq:def_M_V_intro} contains all the minimizers of the variational problems $\eH^V(\lambda)$ for all $\lambda\in[0,1]$ which satisfy the equality 
$$\eH^V(1)=\eH^V(\lambda)+\eH^0(1-\lambda).$$
The interpretation of this condition is that a mass $1-\lambda$ can be sent to infinity without changing the lowest energy of the system.

The role of the strict binding inequality~\eqref{eq:no_dichotomy} is precisely to prevent the particles to escape at infinity, and they are very often encountered in nonlinear models. Using Lions' terminology, the assumption~\eqref{eq:no_dichotomy} is here to avoid \emph{dichotomy}, that is, to ensure that it is not favorable to split a minimizing sequence in pieces.

In the translation-invariant case ($V=0$), the many-body Hamiltonian does not have any ground state and there are sequences of approximate ground states for which the density matrices $\gamma_N^{(k)}$ all weakly tend to $0$, even after a space translation (this is called \emph{vanishing} in Lions' terminology). This is of course not in contradiction with~\eqref{eq:limit_weak_intro} since in this case $\cM^V=\cM^0$ contains $u=0$. On the other hand, there are other sequences (made of Hartree states for instance) which converge to a ground state of the Hartree functional, when it exists.

It is interesting to note that the strict binding inequality~\eqref{eq:no_dichotomy} is only assumed for the effective Hartree theory, and that it implies the expected behavior for the many-particle states. However, it is very important not to confuse~\eqref{eq:no_dichotomy} with the corresponding binding condition for the many-particle Hamiltonian $H_N^V$. By the HVZ theorem, the infimum of the essential spectrum of $H_N^V$ is
\begin{equation}
\inf\sigma_{\rm ess}\big(H^V_N\big)=\inf_{k=1,...,N}\big(E^V(N-k)+E^0(k)\big), 
\label{eq:HVZ intro}
\end{equation}
see, for example,~\cite[Thm. 12]{Lewin-11}. In particular, we have $\inf\sigma_{\rm ess}\big(H^V_N\big)\leq E^V(N-1)$ and, applying the theorem to $E^V(N-1)$, we find that 
$$\lim_{N\to\ii}\frac{E^V(N)}{N}=\lim_{N\to\ii}\frac{\inf\sigma_{\rm ess}\big(H^V_N\big)}{N}=\eH^V(1)$$
as well. Therefore, the lowest eigenvalue $E^V(N)$ (when it exists) always behaves the same to the first order as the bottom of the essential spectrum. One has to go to the next order in the large-$N$ expansion in order to distinguish the first eigenvalue from the bottom of the essential spectrum~\cite{LewNamSerSol-13}. The Hartree binding condition~\eqref{eq:no_dichotomy} only counts whether it is interesting to send a number of order $N$ of particles to infinity, whereas the HVZ criterion deals with any number of such particles, especially a number of order one.

Note that the link between the Hartree minimizer (when it exists) and the many-particle ground states is only expressed here in terms of the density matrices $\gamma_N^{(k)}$. It is \emph{wrong} in general that an approximate ground state $\Psi_N\in\gH^N$ is close to a state $u^{\otimes N}$ for the norm of $\gH^N$. One has to go to the next order in $N$ to understand precisely the link between the two wave functions, see~\cite{LewNamSerSol-13}.

If there is complete Bose-Einstein condensation (BEC) as in~\eqref{eq:limit_BEC}, and if the unique Hartree ground state is non-degenerate, then it is shown by Lewin, Nam, Serfaty and Solovej in~\cite{LewNamSerSol-13} that the energy can be expanded as
$$E^V(N)=N\eH^V(1)+e_{\rm B}^V+o(1)$$
where $e_{\rm B}^V$ is the ground state energy of an effective operator in Fock space called the \emph{Bogoliubov Hamiltonian}. Since the validity of Hartree's theory and complete BEC were \emph{assumptions} in~\cite{LewNamSerSol-13}, the present work supplements the article~\cite{LewNamSerSol-13}.

\medskip

Before going to the more technical parts of the paper we summarize informally its main message:
\begin{enumerate}
\item The validity of Hartree's theory in the mean-field limit may be viewed as a consequence of the structure of the set of bosonic $N$-particle density matrices for large $N$. In this limit the representability problem can be given a satisfactory answer via the quantum de Finetti theorem. For confined systems, no other ingredient is needed. 
\item For unconfined systems where particles are allowed to escape to infinity, a deeper analysis is required and it can be realized by combining a weak version of the quantum de Finetti theorem together with the geometric methods for many-body systems described in~\cite{Lewin-11}.
\item Our method is general and can be applied in many different situations (non-relativistic or relativistic particles living in a domain, in the whole space or on a lattice, with or without external fields, etc).
\end{enumerate}

\medskip

\paragraph*{\bf Organization of the paper.} In the next section we discuss two versions of the quantum de Finetti theorem. After having recalled the usual statement of St{\o}rmer and Hudson-Moody, we prove in Theorem~\ref{thm:weak-De-Finetti} a weak version which will be very useful throughout the paper. In Section~\ref{sec:trapped} we quickly explain how to deal with confined systems (in which there is no lack of compactness at infinity), using the usual quantum de Finetti theorem. This is mainly introduced here for pedagogical purposes. Then, in Section~\ref{sec:non-trapped} we study the Hamiltonian~\eqref{eq:H_N_intro}. We start with the case of repulsive systems (for instance $w\geq0$), for which the energy is a weakly lower semi-continuous function of the one- and two-particle density matrices. In this case the proof is an immediate consequence of the weak quantum de Finetti theorem. This covers bosonic atoms for example. Then, in Section~\ref{sec:tr-in}, we investigate purely translation-invariant 
systems, by using some ideas of Lieb and Yau~\cite{LieYau-87}, coupled to the geometric methods of~\cite{Lewin-11}. Finally,  we prove Theorem~\ref{thm:general_intro} in Section~\ref{sec:general-proof} and discuss some of its generalizations in Section~\ref{sec:extensions}. An alternative proof of the weak de Finetti theorem is presented in Appendix~\ref{app:Hudson-Moody-proof}.

\bigskip

\noindent\textbf{Acknowledgment.} The authors acknowledge financial support from the European Research Council under the European Community's Seventh Framework Programme (FP7/2007-2013 Grant Agreement MNIQS 258023).

\section{A weak quantum de Finetti theorem and geometric localization}\label{sec:de_Finetti}

The quantum de Finetti theorem is about the structure of bosonic states on the infinite tensor product of a given algebra. A simple formulation can be given in terms of density matrices only, following~\cite{HudMoo-75}.

\begin{theorem}[\bf Quantum de Finetti]\label{thm:DeFinetti}\mbox{}\\
Let $\gH$ be any separable Hilbert space and denote by $\gH^k:=\bigotimes_s^k\gH$ the corresponding bosonic $k$-particle space. Consider a hierarchy $\{\gamma^{(k)}\}_{k=0}^\infty$ of non-negative self-adjoint operators, where each $\gamma^{(k)}$ acts on $\gH^k$. We assume that the hierarchy is \emph{consistent} in the sense that
\begin{equation}
\tr_{k+1\to k+n}\gamma^{(k+n)}=\gamma^{(k)}
\label{eq:consistent}
\end{equation}
for all $k,n\geq0$. We also assume that $\gamma^{(0)}=1$, which then implies $\tr_{\gH^k}\gamma^{(k)}=1$ for all $k\geq0$.

Then there exists a unique Borel probability measure $\mu$ on the sphere $S\gH$ of $\gH$, invariant under the group action of $S^1$, such that
\begin{equation}
\boxed{\gamma^{(k)}=\int_{S\gH}|u^{\otimes k}\rangle\langle u^{\otimes k}| \, d\mu(u)}
\label{eq:melange}
\end{equation}
for all $k\geq0$.

If moreover we have $\tr(T\gamma^{(1)})<\ii$ for some self-adjoint operator $T\geq0$ on $\gH$, then $\mu$ is supported in the quadratic form domain $Q(T)$ of $T$. That is, $\mu \big( S\gH\setminus Q(T) \big)=0$.
\end{theorem}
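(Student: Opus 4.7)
The plan is to realize the consistent hierarchy as a single exchangeable state on an infinite tensor product, decompose that state over its extreme points via Choquet's theorem, having first identified the extreme exchangeable states as product states, and then read off the measure $\mu$ from the Choquet data.

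First, I would use consistency to build the infinite-system state. Let $\mathcal{A}_k = \mathcal{K}(\gH^{\otimes k})$ embedded in $\mathcal{A}_{k+1}$ via $A\mapsto A\otimes \mathrm{id}$, and let $\mathcal{A}_\infty = \overline{\bigcup_k \mathcal{A}_k}$ be the resulting quasi-local $C^*$-algebra; since $\gH$ is separable, so is $\mathcal{A}_\infty$, hence its state space is metrizable compact convex. The formula
\[
\omega(A) := \Tr_{\gH^{\otimes k}}\bigl(\widetilde{\gamma}^{(k)} A\bigr) \quad \text{for } A\in \mathcal{A}_k,
\]
with $\widetilde{\gamma}^{(k)}$ the extension of $\gamma^{(k)}$ by zero from $\gH^k$ to $\gH^{\otimes k}$, defines a state on $\bigcup_k \mathcal{A}_k$: the consistency relation~\eqref{eq:consistent} makes it independent of $k$, bosonic symmetry of $\gamma^{(k)}$ makes it invariant under the natural permutation action on $\mathcal{A}_\infty$ (i.e.\ exchangeable), and $\gamma^{(0)}=1$ gives $\omega(\mathrm{id})=1$. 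Conversely $\omega$ determines the whole hierarchy by duality, so the theorem becomes a statement about exchangeable states on $\mathcal{A}_\infty$.

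The main step is to identify the extreme points of the convex set $\mathcal{S}_{\mathrm{exch}}$ of exchangeable states as precisely the product states $\omega_\phi(A_1\otimes\cdots\otimes A_k) = \prod_j \langle \phi, A_j \phi\rangle$ with $\phi\in S\gH$, and to check that $\mathcal{S}_{\mathrm{exch}}$ is a Choquet simplex. Compactness and convexity in the weak-$*$ topology are automatic; the nontrivial content is clustering: for extremal $\omega$, a non-commutative ergodic or law-of-large-numbers argument applied to the symmetric-group averages $N^{-1}\sum_{j=1}^N A_j$ forces $\omega$ to factorize on cylinder sets, after which extremality pins the one-particle marginal down to a rank-one projection $|\phi\rangle\langle \phi|$. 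Choquet's theorem then supplies a unique probability measure $\widetilde{\mu}$ on the extreme boundary with $\omega = \int \omega_\phi\, d\widetilde{\mu}(\phi)$, which lifts to an $S^1$-invariant Borel probability $\mu$ on $S\gH$ satisfying~\eqref{eq:melange}. The quadratic-form statement is a direct corollary: approximating $T$ by its spectral truncations and passing to the limit by monotone convergence yields
\[
\Tr(T\gamma^{(1)}) = \int_{S\gH} \|T^{1/2} u\|_\gH^2\, d\mu(u),
\]
so finiteness of the left-hand side forces $\mu\bigl(S\gH\setminus Q(T)\bigr)=0$.

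The principal obstacle is the clustering/extremality argument of the previous paragraph: this is the genuinely quantum version of the classical fact that extreme exchangeable measures are i.i.d., and it is subtle precisely because non-commuting single-particle observables cannot be jointly diagonalized, so one cannot simply condition on a tail $\sigma$-algebra as in the classical Hewitt-Savage proof. Two routes suggest themselves. The $C^*$-algebraic route sketched above is conceptually clean but abstract. Alternatively, one could try a direct constructive strategy: diagonalize $\gamma^{(k)}$ for each $k$, exploit the totality of $\{u^{\otimes k}: u\in S\gH\}$ in $\gH^k$ together with bosonic symmetry, and pass to $k\to\infty$ using the consistency relation to extract $\mu$ directly from the eigenvector data; the measures obtained at successive levels will then automatically agree by the uniqueness on each finite level. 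I would attempt this second route first, since it should interface naturally with the weak quantum de Finetti theorem developed later in Section~\ref{sec:de_Finetti}.
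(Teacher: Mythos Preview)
Your outline follows the St{\o}rmer--Hudson--Moody route that the paper simply cites rather than proves; the paper does not supply its own argument for Theorem~\ref{thm:DeFinetti}, only the remark that the $Q(T)$ clause follows by passing to the quadratic-form Hilbert space. So there is nothing to compare against in the paper itself, and your $C^\ast$-algebraic sketch is broadly the right shape. There is, however, a genuine gap in your extremality step.

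You write that ``extremality pins the one-particle marginal down to a rank-one projection $|\phi\rangle\langle\phi|$''. This is not what extremality within the exchangeable states gives you. St{\o}rmer's theorem says only that extreme exchangeable states on $\mathcal{A}_\infty$ are product states $\omega^{\otimes\infty}$ for an \emph{arbitrary} one-particle state $\omega$; the marginal $\gamma_\omega$ can perfectly well be mixed. Your construction of $\omega$ used the bosonic symmetry of $\gamma^{(k)}$ only to obtain permutation invariance $U_\sigma\gamma^{(k)}U_\sigma^\ast=\gamma^{(k)}$, and that weaker symmetry is all that the Choquet/clustering argument sees. What forces $\gamma_\omega$ to be rank one is the stronger Bose--Einstein condition $S_k\gamma^{(k)}=\gamma^{(k)}$, i.e.\ that $\gamma^{(k)}$ is supported on the symmetric subspace $\gH^k=\bigotimes_s^k\gH$, not merely permutation-invariant on $\gH^{\otimes k}$. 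The paper makes exactly this point in Remark~\ref{rmk:absolute} (where dropping Bose symmetry yields a measure on mixed one-body states) and in Appendix~\ref{app:Hudson-Moody-proof}, where the reduction to pure states is carried out via
\[
\Tr\gamma^{(n)}=\Tr S_n\gamma^{(n)}=\int \Tr(S_n\gamma^{\otimes n})\,d\tilde\mu(\gamma)\leq\int\Tr\gamma^{\otimes n}\,d\tilde\mu(\gamma)=\Tr\gamma^{(n)},
\]
forcing $\Tr(S_n\gamma^{\otimes n})=\Tr\gamma^{\otimes n}$ for $\tilde\mu$-a.e.\ $\gamma$ and hence $\gamma=|u\rangle\langle u|$ by~\cite[Proposition~3]{HudMoo-75}. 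You need to insert this step (or an equivalent one) after the Choquet decomposition; without it your measure lives on the wrong space and~\eqref{eq:melange} does not follow.
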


The result is the quantum equivalent of the famous Hewitt-Savage theorem for classical systems~\cite{DeFinetti-31,DeFinetti-37,Dynkin-53,HewSav-55,DiaFre-80,Lions-CdF}. The latter deals with a hierarchy of symmetric probability measures $\mu^{(k)}$ on $\Omega^N$ such that $\mu^{(k)}(A)=\mu^{(k+n)}(A\times\Omega^{n})$ for any $k,n\geq0$ and any measurable set $A\subset\Omega^k$. The quantum de Finetti Theorem~\ref{thm:DeFinetti} was proved in~\cite{Stormer-69,HudMoo-75} (see~\cite{Gottlieb-05,ChrKonMitRen-07} for related content). The usual statement does not include the part concerning the operator $T$, but this part is easily shown by using the Hilbert space structure associated with the quadratic form of $T$ instead of the original one. 

If we are given a sequence of states $\Psi_N\in\gH^N$ with $N\to\ii$, we obtain a sequence of density matrices, $(\gamma_{\Psi_N}^{(k)})_{0\leq k\le N}$, which form a consistent hierarchy in the sense of~\eqref{eq:consistent}, but only for $k\leq N$:
\begin{equation}\label{eq:hierarchy finite}
\begin{cases}
\gamma^{(N)}_{\Psi_N}:=|\Psi_N\rangle\langle\Psi_N|,\\
\gamma^{(k)}_{\Psi_N}:=\tr_{k+1\to N}\gamma^{(N)}_{\Psi_N}&\text{for $0\leq k\leq N$,}\\
 \gamma^{(k)}_{\Psi_N}:=0&\text{for $k\geq N+1$}.
\end{cases} 
\end{equation}
As each $\gamma_{\Psi_N}^{(k)}$ is non-negative and has a trace normalized to 1, it is bounded in the trace-class $\gS^1(\gH^k)$ for every fixed $k\ge 1$. Therefore it has a subsequence which converges weakly-$\ast$ in the trace class, $\gamma_{\Psi_{N_j}}^{(k)}\wto_*\gamma^{(k)}$. By the diagonal procedure, we can make all the density matrices weakly-$\ast$ converge along the same subsequence $N_j\to\ii$. So we obtain in the limit an infinite hierarchy of density matrices $(\gamma^{(k)})_{k\geq0}$, where each $\gamma^{(k)}$ acts on $\gH^k$.

In general the sequence $(\gamma^{(k)})_{k\geq0}$ is \emph{not} consistent because the trace is not continuous for the weak-$\ast$ topology in infinite dimension. Indeed, when passing to the weak limit we find by Fatou's lemma for trace-class operators
\begin{align*}
\gamma^{(k)}=\underset{j\to\ii}{\text{w-lim}}\;\gamma_{ \Psi_{N_j}}^{(k)}&=\underset{j\to\ii}{\text{w-lim}}\;\tr_{k+1\to k+n}\gamma_{ \Psi_{N_j}}^{(k+n)}\\&\geq\tr_{k+1\to k+n}\underset{j\to\ii}{\text{w-lim}}\;\gamma_{ \Psi_{N_j}}^{(k+n)}=\tr_{k+1\to k+n}\gamma^{(k+n)},
\end{align*}
where w-lim denotes the weak-$\ast$ limit in $\gS^1(\gH^k)$. Equality will not hold in general in the above equation and so we only have \emph{a priori}
\begin{equation}
\gamma^{(k)}\geq \tr_{k+1\to k+n}\gamma^{(k+n)},\qquad \forall n,k\geq0.
\label{eq:sub-consistent} 
\end{equation}
If $\tr(\gamma^{(1)})=1$ then it can be proved that $\tr(\gamma^{(k)})=1$ for all $k\geq1$ (see Corollary~\ref{cor:strong_CV} below). The convergence must then be strong by the reciprocal of Fatou's lemma (see for example~\cite[Add. H]{Simon-79}), and the sequence $(\gamma^{(k)})_{k\geq0}$ is consistent. In this case the quantum de Finetti Theorem~\ref{thm:DeFinetti} applies to the limiting hierarchy $(\gamma^{(k)})_{k\geq0}$.

In the systems in which particles are allowed to escape to infinity, a non-consistent hierarchy can be obtained in the limit. Consider, for instance, a given orthonormal basis $(u_j)$ of the one-particle space $\gH$,  and the Hartree state
$$\Psi_N=\big(\cos(\theta)\, u_1+\sin(\theta)\, u_N\big)^{\otimes N}$$
for some $\theta\in[0,2\pi]$. In this case we get
\begin{align*}
\gamma_{\Psi_N}^{(k)}&=\Big|\big(\cos(\theta)\, u_1+\sin(\theta)\, u_N\big)^{\otimes k}\Big\rangle\Big\langle\big(\cos(\theta)\, u_1+\sin(\theta)\, u_N\big)^{\otimes k}\Big|\\
&\underset{\ast}{\wto}\Big|\big(\cos(\theta)\, u_1\big)^{\otimes k}\Big\rangle\Big\langle\big(\cos(\theta)\, u_1\big)^{\otimes k}\Big|:=\gamma^{(k)}. 
\end{align*}
We see that in this example the sequence $(\gamma^{(k)})$ can be represented by a formula similar to~\eqref{eq:melange}, but with a measure $\mu$ that is the uniform delta measure on the circle $\varphi\in[0,2\pi)\mapsto e^{i\varphi}\cos(\theta)\,u_1$ in the unit ball 
$B\gH=\{u\in\gH\ :\ \norm{u}\leq 1\}$ of $\gH$. That we end up with a measure living on the unit ball $B\gH$ instead of the unit sphere $S\gH$ is not a big surprise, of course, as we are considering weak limits. This is actually the general case, as stated in the following result, which will be an important tool in this article.

\begin{theorem}[\bf Weak quantum de Finetti]\label{thm:weak-De-Finetti}\mbox{}\\
Let $\gH$ be any separable Hilbert space and denote by $\gH^k:=\bigotimes_s^k\gH$ the corresponding bosonic $k$-particle space. Let $\Gamma_N$ be any sequence of mixed states on $\gH^N$ (that is, $\Gamma_N\geq0$ and $\tr_{\gH^N}\Gamma_N=1$) such that 
$$\gamma^{(k)}_{\Gamma_N}\wto_\ast\gamma^{(k)}$$
weakly-$\ast$ in the trace class $\gS^1(\gH^k)$ for all $k\geq1$. Then there exists a unique Borel probability measure $\mu$ on the \emph{unit ball} $B\gH$ of $\gH$, invariant under the group action of $S^1$, such that
\begin{equation}
\boxed{\gamma^{(k)}=\int_{B\gH}d\mu(u)\;|u^{\otimes k}\rangle\langle u^{\otimes k}|}
\label{eq:melange_weak}
\end{equation}
for all $k\geq0$.

Moreover, if we have $\tr(T\gamma^{(1)})<\ii$ for some self-adjoint operator $T\geq0$ on $\gH$, then $\mu$ is supported in the quadratic form domain $Q(T)$ of $T$. That is, $\mu\big(B\gH\setminus Q(T) \big)=0$.
\end{theorem}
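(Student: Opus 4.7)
The plan is to reduce the weak version to the strong quantum de Finetti theorem (Theorem~\ref{thm:DeFinetti}) through a geometric localization procedure in finite-dimensional subspaces of $\gH$. I would fix an orthonormal basis of $\gH$ and let $P_n$ denote the projection onto the span of its first $n$ vectors, so that $P_n \to I$ strongly and each $P_n\gH$ is finite-dimensional. Since $P_n^{\otimes k}\gamma^{(k)}P_n^{\otimes k} \to \gamma^{(k)}$ in trace norm as $n \to \infty$ for each fixed $k$, it is enough to produce, for every $n$, a Borel probability measure $\mu_n$ on the closed ball $B(P_n\gH) \subset B\gH$ satisfying
\[
P_n^{\otimes k}\gamma^{(k)}P_n^{\otimes k} = \int_{B(P_n\gH)} |u^{\otimes k}\rangle\langle u^{\otimes k}|\,d\mu_n(u)
\]
for all $k$, and then to extract a weak limit $\mu$ using the compactness of $B\gH$ in the weak topology of $\gH$.

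To produce $\mu_n$, I would invoke the many-body geometric localization of~\cite{Lewin-11}: for each $N$ and each finite-rank projection $P_n$, there is a mixed state $\Gamma_N^{P_n}$ on the truncated bosonic Fock space $\cFN(P_n\gH) = \bigoplus_{k=0}^N \bigotimes_s^k (P_n\gH)$ whose $k$-particle density matrix equals $P_n^{\otimes k}\gamma^{(k)}_{\Gamma_N}P_n^{\otimes k}$ for every $0 \le k \le N$. Decomposing $\Gamma_N^{P_n}$ over particle number gives weights $c_{N,k}^n \ge 0$ with $\sum_k c_{N,k}^n = 1$ and normalized sector states $G_{N,k}^n$ on $\bigotimes_s^k (P_n\gH)$. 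Because $P_n\gH$ is finite-dimensional, partial traces of the $G_{N,k}^n$ behave continuously, and a diagonal extraction provides a subsequence along which the pushforward of $(c_{N,k}^n)_k$ under $k \mapsto k/N$ converges weakly to a probability measure $\nu_n$ on $[0,1]$, and along which, for $\nu_n$-almost every $t$, the states $G_{N,k}^n$ with $k/N \to t$ produce in the limit (level by level) a consistent infinite hierarchy on $P_n\gH$. The strong de Finetti theorem in the finite-dimensional space $P_n\gH$ then yields a Borel probability measure $\sigma_{t,n}$ on the unit sphere of $P_n\gH$, and pushing $d\nu_n(t)\otimes d\sigma_{t,n}(v)$ forward under the scaling $(t,v) \mapsto \sqrt{t}\,v$ produces the desired $\mu_n$.

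It remains to let $n \to \infty$. Since $\gH$ is separable, the closed ball $B\gH$ equipped with the weak topology of $\gH$ is a compact metrizable space, so the probability measures $\mu_n$ form a tight sequence and admit a weak limit $\mu$ along a subsequence. Pairing the representation formula for $\mu_n$ with an arbitrary compact operator on $\gH^k$ and using both $P_n^{\otimes k}\gamma^{(k)}P_n^{\otimes k} \to \gamma^{(k)}$ in trace norm and the weak continuity of $u \mapsto |u^{\otimes k}\rangle\langle u^{\otimes k}|$ against compact test operators yields $\gamma^{(k)} = \int_{B\gH} |u^{\otimes k}\rangle\langle u^{\otimes k}|\,d\mu(u)$. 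The $S^1$-invariance and uniqueness of $\mu$ follow from the general fact that an $S^1$-invariant Borel probability measure on $B\gH$ is determined by the hierarchy of its moments, and the statement concerning the quadratic form domain $Q(T)$ is obtained by rerunning the whole argument with $\gH$ replaced by $Q(T)$ endowed with the scalar product $\pscal{u,(T+C)v}$, exactly as in Theorem~\ref{thm:DeFinetti}.

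The principal obstacle is the geometric localization identity used in the second step: it is the algebraic input from~\cite{Lewin-11} that promotes the a priori non-consistent weak-$\ast$ limit hierarchy to a genuine Fock-space state on each finite-dimensional truncation, thereby enabling the use of the strong de Finetti theorem. A secondary but pervasive technical difficulty is the interplay between the weak topology of $\gH$ (natural for $B\gH$ and hence for $\mu$) and the weak-$\ast$ topology on trace class operators (natural for the $\gamma^{(k)}$): the representation formula is only weakly-$\ast$ continuous against compact test operators, and this is what dictates the topology used to extract the final limit.
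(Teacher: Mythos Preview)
Your proposal is correct and follows essentially the same route as the paper: geometric localization to finite-dimensional subspaces via~\cite{Lewin-11}, application of the strong (finite-dimensional) de Finetti theorem together with a radial parameter $\lambda=k/N$, and a final passage to the limit over the projections $P_n$. The only cosmetic differences are that the paper packages the radial step through operator-valued Radon measures on $[0,1]$ (thereby avoiding the somewhat delicate ``for $\nu_n$-a.e.\ $t$'' disintegration you sketch) and glues the cylindrical measures $\mu_n$ via Skorokhod's lemma~\cite{Skorokhod-74} on cylindrical consistency rather than via weak compactness of probability measures on the metrizable ball $B\gH$.
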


Ammari and Nier have recently proved in \cite{AmmNie-08,AmmNie-11} results that imply Theorem~\ref{thm:weak-De-Finetti}. In analogy with semi-classical analysis, they called $\mu$ a \emph{Wigner measure}. They deal with an arbitrary sequence of states in Fock space and, therefore, obtain in the limit a measure $\mu$ which can live over the whole one-particle Hilbert space $\gH$, instead of the unit ball as in our situation. In~\cite[Thm. 6.2]{AmmNie-08} they first construct the Wigner measure $\mu$ by testing against anti-Wick and Weyl observables, before looking at Wick observables which are related to density matrices in~\cite[Cor. 6.14]{AmmNie-08}. The case of $\mu$ having its support in a ball is studied in~\cite{AmmNie-11}. The connection between Wigner and de Finetti measures is discussed in \cite[Section 6.3]{Ammari-HDR}.

In the present paper we provide two different proofs of Theorem~\ref{thm:weak-De-Finetti}, which are both based on Theorem~\ref{thm:DeFinetti}. The first proof is based on the \emph{finite-dimensional} de Finetti Theorem and on the geometric techniques introduced in~\cite{Lewin-11} and it has the merit of clarifying how the measure $\mu$ arises in case the density matrices $\gamma_{\Psi_N}^{(k)}$ do not converge strongly. This will be particularly important to understand unconfined quantum systems in the rest of the paper and hence we explain this first approach in details in this section. It is also possible to prove Theorem~\ref{thm:weak-De-Finetti} by following arguments similar to those of Hudson and Moody~\cite{HudMoo-75}, and we quickly explain this in Appendix~\ref{app:Hudson-Moody-proof} for completeness.

\begin{remark}
It could seem an interesting question to find the structure of the set of all the density matrices satisfying the inequality~\eqref{eq:sub-consistent}. In Theorem~\ref{thm:weak-De-Finetti} we only characterize those arising from a sequence of states $(\Gamma_N)$ with $N\to\ii$. The set of density matrices satisfying~\eqref{eq:sub-consistent} is way too large, however. For instance it contains the density matrices of $n$-particle states with $n$ fixed, for which a representation of the form~\eqref{eq:melange_weak} cannot hold. Consider
$$\gamma^{(0)}=1,\qquad\gamma^{(1)}=|u\rangle\langle u|,\qquad \gamma^{(n)}\equiv0\quad \text{for $n\geq2$},$$
which of course satisfies~\eqref{eq:sub-consistent}. These density matrices are those of a one-particle state $u\in\gH$ and they cannot be written in the form~\eqref{eq:melange_weak}. The measure $\mu$ would need to be $\mu=\delta_{u}$ but then $\gamma^{(n)}\neq0$ for $n\geq2$. 
\end{remark}

By using the weak de Finetti Theorem~\ref{thm:weak-De-Finetti} we easily recover the well-known fact that the density matrices $\gamma^{(k)}_{\Gamma_N}$ all converge strongly if and only if the one-particle density matrix $\gamma^{(1)}_{\Gamma_N}$ converges strongly.

\begin{corollary}[\textbf{Strong convergence}]\label{cor:strong_CV} \mbox{}\\
Let $\{\Gamma_N\}$ be as in Theorem~\ref{thm:weak-De-Finetti}. Then the following are equivalent:
\begin{enumerate}
\item $\tr_{\gH}\gamma^{(1)}=1$;
\item $\gamma^{(1)}_{\Gamma_N}\to\gamma^{(1)}$ strongly in the trace-class $\gS^1(\gH)$;
\item $\tr_{\gH^k}\gamma^{(k)}=1$ for all $k\geq1$;
\item $\gamma^{(k)}_{\Gamma_N}\to\gamma^{(k)}$ strongly in the trace-class $\gS^1(\gH^k)$ for all $k\geq1$;
\item the measure $\mu$ of Theorem~\ref{thm:weak-De-Finetti} has its support on the sphere $S\gH$.
\end{enumerate}
\end{corollary}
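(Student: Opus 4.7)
The plan is to leverage the weak de Finetti representation of Theorem~\ref{thm:weak-De-Finetti} together with the reciprocal of Fatou's lemma for trace-class operators (i.e.\ the fact that for positive operators in $\gS^1$, weak-$\ast$ convergence plus convergence of traces implies trace-norm convergence; see~\cite[Add.~H]{Simon-79}), arranged as a cycle of implications $(1)\Rightarrow(5)\Rightarrow(3)\Rightarrow(4)\Rightarrow(2)\Rightarrow(1)$.

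For $(1)\Rightarrow(5)$, I would start from the identity $\gamma^{(1)}=\int_{B\gH}|u\rangle\langle u|\,d\mu(u)$ provided by Theorem~\ref{thm:weak-De-Finetti} and take the trace to get $\tr_\gH\gamma^{(1)}=\int_{B\gH}\norm{u}^2\,d\mu(u)$. Since $\norm{u}\le 1$ on $B\gH$ and $\mu$ is a probability measure, the assumption $\tr_\gH\gamma^{(1)}=1$ forces $\int_{B\gH}(1-\norm{u}^2)\,d\mu(u)=0$, hence $\norm{u}=1$ for $\mu$-almost every $u$; in other words, $\mu$ is supported on $S\gH$. The implication $(5)\Rightarrow(3)$ is then immediate, because on $S\gH$ one has $\norm{u}^{2k}=1$, so $\tr_{\gH^k}\gamma^{(k)}=\int_{B\gH}\norm{u}^{2k}\,d\mu(u)=1$ for every $k$. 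For $(3)\Rightarrow(4)$, apply the reciprocal of Fatou's lemma to the positive sequence $\gamma^{(k)}_{\Gamma_N}\in\gS^1(\gH^k)$, which already satisfies $\tr_{\gH^k}\gamma^{(k)}_{\Gamma_N}=1$ and converges weakly-$\ast$ to $\gamma^{(k)}$: since the trace of the limit equals $1$, the convergence upgrades from weak-$\ast$ to trace-norm. The implication $(4)\Rightarrow(2)$ is trivial (it is just the case $k=1$), and $(2)\Rightarrow(1)$ follows from the continuity of the trace with respect to the trace norm applied to $\gamma^{(1)}_{\Gamma_N}$, whose trace is constantly equal to $1$.

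There is no serious obstacle here: once Theorem~\ref{thm:weak-De-Finetti} is at our disposal, the only mildly delicate point is the argument that $\mu$ concentrates on the sphere, which is really a one-line consequence of the fact that integrating a function bounded by $1$ against a probability measure can only equal $1$ when the function equals $1$ almost everywhere. Everything else is a standard application of the reciprocal of Fatou's lemma for positive trace-class operators, used in exactly the same spirit as in the discussion preceding the corollary.
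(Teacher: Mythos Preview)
Your proof is correct and follows essentially the same approach as the paper's own proof: the paper invokes the reciprocal of Fatou's lemma for the equivalences $(1)\Leftrightarrow(2)$ and $(3)\Leftrightarrow(4)$, and says that ``the rest follows directly from~\eqref{eq:melange_weak}'', which is exactly the de Finetti representation you use for the implications involving $(5)$. Your cycle $(1)\Rightarrow(5)\Rightarrow(3)\Rightarrow(4)\Rightarrow(2)\Rightarrow(1)$ is simply a more explicit and organized spelling-out of the same argument.
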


\begin{proof}
The equivalence between (1) and (2) (and (3) and (4)) is the reciprocal of Fatou's lemma mentioned above (see, for instance,~\cite{dellAntonio-67},~\cite[Cor. 1]{Robinson-70} and~\cite[Add. H]{Simon-79}). The rest follows directly from~\eqref{eq:melange_weak}.
\end{proof}

\subsubsection*{\bf Geometric localization}
In this section we provide a proof of Theorem~\ref{thm:weak-De-Finetti} using the concept of \emph{geometric localization}\footnote{The term ``geometric'' was first used in~\cite{Simon-77,Sigal-82} to denote the use of partitions of unity in the configuration space for many-body systems.} in Fock space, which was developed in~\cite{DerGer-99,Ammari-04} and thoroughly used for nonlinear many-body systems in~\cite{Lewin-11}. This method is the first step to understand the lack of compactness of the sequence $(\Psi_{N})$ of approximate minimizers appearing in Theorem~\ref{thm:general_intro}.

For any mixed state $\Gamma_N$ in the $N$-particle space $\gH^N$ and any self-adjoint operator $0\leq A\leq 1$ on $\gH$, called the localizing operator, the corresponding localized state is by definition the unique state in Fock space
$$\cF(\gH)=\C\oplus\gH\oplus\gH^2\oplus\cdots$$
which has $A^{\otimes n}\gamma_{\Gamma_N}^{(n)}A^{\otimes n}$ as $n$-particle density matrices, for all $1\leq n\leq N$. The localized state lives in the truncated Fock space $\C\oplus \gH\oplus\cdots\oplus\gH^N\oplus0\cdots$ and it is often a mixed state, even when the initial state $\Gamma_N$ is pure. This localized state can be explicitly computed (see~\cite[Example 10]{Lewin-11}) and it is equal to
\begin{equation}
G_N=G_{N,0}^A\oplus G_{N,1}^A\oplus\cdots\oplus G_{N,N}^A\oplus0\oplus\cdots 
\label{eq:def_localization}
\end{equation}
where
\begin{equation}
G^A_{N,k}={N\choose k}\tr_{k+1\to N}\Big(A^{\otimes k}\otimes \sqrt{1-A^2}^{\otimes N-k}\;\Gamma_N\;A^{\otimes k}\otimes \sqrt{1-A^2}^{\otimes N-k}\Big).
\label{eq:def_localization2} 
\end{equation}
An important property of the localization of $N$-particle states is that 
\begin{equation}
\tr_{\gH^k} G^A_{N,k}=\tr_{\gH^{N-k}} G^{\sqrt{1-A^2}}_{N,N-k}
\label{eq:relation-geometric}
\end{equation}
for all $k=0,...,N$. If $A= \1_D$ for some $D\subset \R ^d$, then \eqref{eq:relation-geometric} simply means that the probability of having $k$ particles inside $D$ is equal to the probability of having $N-k$ particles outside $D$.

A simple calculation also shows that
\begin{equation}
A^{\otimes n}\gamma^{(n)}_{\Gamma_N}A^{\otimes n}={N\choose n}^{-1}\sum_{k=n}^N{k\choose n}\tr_{n+1\to k}G^A_{N,k},
\label{eq:localized-DM} 
\end{equation}
see~\cite{Lewin-11} for details. We emphasize that in~\cite{Lewin-11} a different convention is used for the $n$-particle density matrix, and this is responsible for the additional factor ${N\choose n}^{-1}{k\choose n}$. 

\begin{remark}
When $A=P$ is an orthogonal projection, geometric localization is the same as using the isometry of Fock spaces $\cF(\gH)=\cF(P\gH)\otimes\cF(P^\perp\gH)$ and restricting the state to the smaller Fock space $\cF(P\gH)$, by taking a partial trace in the second variable.
\end{remark}

The link between the de Finetti measure $\mu$ and geometric localization is emphasized in the following result, which will be very useful in the proof of Theorem~\ref{thm:general_intro} below.

\begin{theorem}[\textbf{De Finetti measure and geometric localization}]\label{thm:other-localization}\mbox{}\\
Let $(\Gamma_N)$ be as in Theorem~\ref{thm:weak-De-Finetti} and assume, furthermore, that $\tr T\gamma^{(1)}_{\Gamma_N}\leq C$ for some non-negative self-adjoint operator $T$. Let $0\leq A\leq 1$ be an operator on $\gH$ such that $A(1+T)^{-1/2}$ is compact and let $G_N^A$ be the associated localized state in the Fock space $\cF(\gH)$ defined in~\eqref{eq:def_localization} and~\eqref{eq:def_localization2}. Then
\begin{equation}
\lim_{N\to\ii}\sum_{k=0}^Nf\left(\frac{k}{N}\right)\tr_{\gH^k}G_{N,k}^A=\int_{B\gH}d\mu(u)\;f(\|Au\|^2)
\label{eq:localized_de_finetti}
\end{equation}
for all continuous functions $f$ on $[0,1]$.
\end{theorem}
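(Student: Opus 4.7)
My plan proceeds in three steps.

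First, since both sides of~\eqref{eq:localized_de_finetti} are continuous linear functionals of $f\in C([0,1])$ (for the sup norm) with total variation at most one---the left-hand side because $\sum_{k=0}^N\tr_{\gH^k}G_{N,k}^A=1$ by mass conservation of geometric localization, and the right-hand side because $\mu(B\gH)\leq 1$---the Weierstrass approximation theorem reduces the proof to the monomial case $f(x)=x^n$ for arbitrary fixed $n\geq 0$. For such $f$, tracing the identity~\eqref{eq:localized-DM} over $\gH^n$ yields
$$\tr_{\gH^n} A^{\otimes n}\gamma_{\Gamma_N}^{(n)}A^{\otimes n}=\binom{N}{n}^{-1}\sum_{k=n}^N\binom{k}{n}\tr_{\gH^k}G_{N,k}^A,$$
and the elementary estimate $|\binom{k}{n}/\binom{N}{n}-(k/N)^n|\leq C_n/N$, valid uniformly in $0\leq k\leq N$, together with $\sum_k\tr G_{N,k}^A=1$, shows that
$$\sum_{k=0}^N\Big(\frac{k}{N}\Big)^n\tr_{\gH^k}G_{N,k}^A=\tr A^{\otimes n}\gamma_{\Gamma_N}^{(n)}A^{\otimes n}+O_n(1/N).$$

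The heart of the argument is then to establish
$$\tr A^{\otimes n}\gamma_{\Gamma_N}^{(n)}A^{\otimes n}\;\longrightarrow\;\tr A^{\otimes n}\gamma^{(n)}A^{\otimes n},$$
and this is the step I expect to require the most care. Since $\gamma_{\Gamma_N}^{(n)}\wto_\ast\gamma^{(n)}$ only tests directly against compact operators, whereas $(A^2)^{\otimes n}$ is merely bounded, I would introduce the spectral cutoff $\chi_M=\1_{T\leq M}$ and set $A_M=A\chi_M$. The factorization $A_M=[A(1+T)^{-1/2}]\cdot[(1+T)^{1/2}\chi_M]$ exhibits $A_M$ as the composition of a compact operator with a bounded one, hence $A_M$ is compact, and a telescoping norm approximation by finite-rank operators shows that $(A_M^2)^{\otimes n}$ is also compact on $\gH^n$. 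Weak-$\ast$ convergence therefore yields
$$\tr(A_M^2)^{\otimes n}\gamma_{\Gamma_N}^{(n)}\;\longrightarrow\;\tr(A_M^2)^{\otimes n}\gamma^{(n)}.$$
To control the error of replacing $A$ by $A_M$, I use the operator inequality $\chi_M^\perp\leq(1+T)/(1+M)$, which gives $\tr\chi_M^\perp A^2\chi_M^\perp\rho\leq(1+M)^{-1}(1+\tr T\rho)$ for every positive trace-class $\rho$ on $\gH$, and I estimate the off-diagonal pieces $\chi_M A^2\chi_M^\perp$ and $\chi_M^\perp A^2\chi_M$ by the same quantity through the Cauchy--Schwarz inequality $|\tr XY\rho|\leq(\tr XX^\ast\rho)^{1/2}(\tr Y^\ast Y\rho)^{1/2}$. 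Telescoping across the $n$ tensor slots, and using that the kinetic bound propagates by consistency of the hierarchy as $\tr T_j\gamma_{\Gamma_N}^{(n)}=\tr T\gamma_{\Gamma_N}^{(1)}\leq C$ on every slot $j$, then yields $|\tr((A^2)^{\otimes n}-(A_M^2)^{\otimes n})\gamma_{\Gamma_N}^{(n)}|\leq C_n/\sqrt{M}$ uniformly in $N$ (and the same bound for $\gamma^{(n)}$ by Fatou). Sending first $N\to\infty$ and then $M\to\infty$ closes the argument.

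Finally, Theorem~\ref{thm:weak-De-Finetti} furnishes the representation $\gamma^{(n)}=\int_{B\gH}|u^{\otimes n}\rangle\langle u^{\otimes n}|\,d\mu(u)$, with $\mu$ supported in $Q(T)$ thanks to the lower-semicontinuity bound $\tr T\gamma^{(1)}\leq\liminf_N\tr T\gamma_{\Gamma_N}^{(1)}\leq C$, so that $u\mapsto\|Au\|$ is well-defined $\mu$-almost everywhere. Using $A^{\otimes n}u^{\otimes n}=(Au)^{\otimes n}$, one computes
$$\tr A^{\otimes n}\gamma^{(n)}A^{\otimes n}=\int_{B\gH}\|Au\|^{2n}\,d\mu(u)=\int_{B\gH}f(\|Au\|^2)\,d\mu(u)$$
for $f(x)=x^n$, which together with the previous two steps and a final Weierstrass approximation proves~\eqref{eq:localized_de_finetti} for every $f\in C([0,1])$.
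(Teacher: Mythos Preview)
Your argument is correct and follows the paper's overall strategy: reduce to monomials by Weierstrass, use the trace of~\eqref{eq:localized-DM} together with the combinatorial estimate~\eqref{eq:difference_k_N}, prove that $\tr A^{\otimes n}\gamma_{\Gamma_N}^{(n)}A^{\otimes n}\to\tr A^{\otimes n}\gamma^{(n)}A^{\otimes n}$, and finally identify the limit via the de~Finetti representation.

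The only substantive difference is in how you establish the convergence of $\tr A^{\otimes n}\gamma_{\Gamma_N}^{(n)}A^{\otimes n}$. For $n=1$ the paper writes
\[
\tr A\gamma_{\Gamma_N}^{(1)}A=\tr\big[A(1+T)^{-1/2}\big]\,(1+T)^{1/2}\gamma_{\Gamma_N}^{(1)}(1+T)^{1/2}\,\big[(1+T)^{-1/2}A\big]
\]
and uses that $A(1+T)^{-1/2}$ is compact while $(1+T)^{1/2}\gamma_{\Gamma_N}^{(1)}(1+T)^{1/2}$ is bounded in $\gS^1$ and converges weakly-$\ast$. The paper then says the case $n\ge 2$ is ``similar''. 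A literal tensor-product version of that factorization would require a bound on $\tr\,(1+T)^{\otimes n}\gamma_{\Gamma_N}^{(n)}$, which is \emph{not} implied by the one-body assumption $\tr T\gamma_{\Gamma_N}^{(1)}\le C$ alone (mixing a small-probability high-energy state makes this quantity blow up for $n\ge 2$). Your spectral-cutoff route sidesteps this issue entirely: $A_M=A\chi_M$ is compact, hence $(A_M^2)^{\otimes n}$ is compact and can be tested against the weak-$\ast$ limit, and the remainder is controlled slot by slot using only $\tr T_j\gamma_{\Gamma_N}^{(n)}=\tr T\gamma_{\Gamma_N}^{(1)}\le C$. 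So your treatment is in fact more careful than the paper's sketch for $n\ge 2$; the cost is a slightly longer argument, the gain is that it uses exactly the stated hypothesis and nothing more.
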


Here geometric localization is used to detect the particles which do not escape to infinity and the role of the operator $A$ is to turn weak convergence into strong convergence~\cite{Lewin-11}. The interpretation of the convergence~\eqref{eq:localized_de_finetti} is that the mass of the de Finetti measure $\mu$ on the sphere $\{\|u\|^2=\lambda\}$ is the probability that a fraction $\lambda$ of the particles does not escape to infinity. 

Of course, we can turn the matter around and obtain an information on the number of particles that have escaped. Using the fundamental relation \eqref{eq:relation-geometric}, Theorem \ref{thm:other-localization} implies\footnote{The convergence~\eqref{eq:localized_de_finetti} does not hold in general if $A$ is replaced by $(1-A^2)^{1/2}$, because $(1-A^2)^{1/2}(1+T)^{-1/2}$ is not always compact. Indeed, $1- \|A u \|^2> \|\sqrt{1-A^2}u\|^2$ when $\norm{u}<1$.}
\begin{align}
\lim_{N\to\ii}\sum_{k=0}^N f\left(\frac{k}{N}\right)\tr_{\gH^k}G_{N,k}^{\sqrt{1-A^2}}&=\lim_{N\to\ii}\sum_{k=0}^N f\left(1-\frac{k}{N}\right)\tr_{\gH^k}G_{N,k}^{A}\nn\\
&=\int_{B\gH}d\mu(u)\;f(1- \|A u \|^2). \label{eq:localized_de_finetti2}
\end{align}
In practice, we will use~\eqref{eq:localized_de_finetti2} with $A=\chi_R(x)$, a localization function in a ball of radius $R$, which is compact relatively to the fractional Laplacian.

The rest of this section will be devoted to the proof of Theorem~\ref{thm:weak-De-Finetti} and Theorem~\ref{thm:other-localization} using geometric localization.

\begin{proof}[Proof of Theorem~\ref{thm:weak-De-Finetti}]
Our approach can be summarized as follows. First we localize the quantum state in a finite-dimensional space using a finite dimensional projection $P$, in order to convert the weak convergence into the strong convergence. By doing so we get the localized state $G_N^P$ in Fock space and we apply the quantum de Finetti to each of the projections $G_{N,k}^P$ in the $k$-particle spaces. The limiting density matrices have a representation in ``spherical coordinates'' on $B\gH$,
\begin{equation}
\gamma_{\rm loc}^{(n)}=\int_{0}^1\int_{S\gH}d\nu_{\rm loc}(\lambda,u)\;\lambda^n|u^{\otimes n}\rangle\langle u^{\otimes n}|,
\label{eq:melange_weak2}
\end{equation}
where $\lambda$ corresponds to all the possible values of $k/N$. In the end of the proof we remove the localization and get the result.

To carry out this program, we fix an orthogonal projection of finite rank $P$ on $\gH$ and let $G_N^P$ be the corresponding localized state as defined in~\eqref{eq:def_localization} and~\eqref{eq:def_localization2}. We write the sum in~\eqref{eq:localized-DM} as an integral over an additional parameter $0\leq\lambda\leq1$, which is $\lambda= k/N$. Let $M_{P,N}^{(n)}$ be the following Radon measure on $[0,1]$, with values in the set of self-adjoint operators in $\gS^{1}(\otimes_s^n P\gH)$ (that is, hermitian matrices of size $\dim\,\otimes_s^n (P\gH)$):
$$\dM_{P,N}^{(n)}(\lambda):=\sum_{k=n}^N\;\delta_{k/N}(\lambda)\;\tr_{n+1\to k}G^P_{N,k}.$$
This measure satisfies  
$$\int_0^1\tr_{\gH^n}\dM_{P,N}^{(n)}(\lambda)=\sum_{k=n}^N\;\tr_{\gH^k}G^P_{N,k}\leq\sum_{k=0}^N\;\tr_{\gH^k}G^P_{N,k}=\int_0^1 \dM_{P,N}^{(0)}(\lambda)=1.$$
Then, using \eqref{eq:localized-DM}, we have for all $n\geq0$
\begin{equation*}
\tr\left|P^{\otimes n}\gamma^{(n)}_{\Gamma_N}P^{\otimes n}-\int_0^1 \lambda^n \dM_{P,N}^{(n)}(\lambda)\right|
\leq  \sum_{k=n}^N\left| {N\choose n}^{-1} {k\choose n}-\left(\frac{k}{N}\right)^n\right|\tr G^P_{N,k}.
\end{equation*}
We can write
\begin{equation*}
\left(\frac{k}{N}\right)^n- {N\choose n}^{-1} {k\choose n}=\left(\frac{k}{N}\right)^n \left\{1-\frac{1-\frac{1}{k}}{1-\frac{1}{N}}\cdot\frac{1-\frac{2}{k}}{1-\frac{2}{N}}\cdots \frac{1-\frac{n-1}{k}}{1-\frac{n-1}{N}} \right\}
\end{equation*}
and by using Bernoulli's inequality 
\begin{multline*}
\prod_{j=1}^{n-1}\frac{1-\frac{j}{k}}{1-\frac{j}{N}}=\prod_{j=1}^{n-1}\left(1-\frac{j}{N-j}\left(\frac{N}k-1\right)\right)\\
\geq \left(1-\frac{n-1}{N-n+1}\left(\frac{N}k-1\right)\right)^{n-1}\geq 1-\frac{ (n-1)^2}{N-n+1}\left(\frac{N}k-1\right),
\end{multline*}
we obtain
\begin{equation}
\left(\frac{k}{N}\right)^n- {N\choose n}^{-1} {k\choose n}\leq \left(\frac{k}{N}\right)^{n}\left( \frac{N}k-1 \right)  \frac{(n-1)^2}{N-n+1}\leq \frac{(n-1)^2}{N-n+1}.
\label{eq:difference_k_N}
\end{equation}
Therefore, 
\begin{equation}
\tr\left|P^{\otimes n}\gamma^{(n)}_{\Gamma_N}P^{\otimes n}-\int_0^1 \lambda^n\,\dM_{P,N}^{(n)}(\lambda)\right| \leq\frac{(n-1)^2}{N-n+1}\sum_{k=n}^N\tr G^P_{N,k}
\leq\frac{(n-1)^2}{N-n+1}.
\end{equation}
Since $P$ is a projection of finite rank and by assumption $\gamma^{(n)}_{\Gamma_N}\wto \gamma^{(n)}$ weakly-$\ast$ in the trace-class, the sequence $P^{\otimes n}\gamma^{(n)}_{\Gamma_N}P^{\otimes n}$ converges strongly in the trace-class to $P^{\otimes n}\gamma^{(n)}P^{\otimes n}$. Therefore we have proved that, for any fixed $n$,
$$\lim_{N\to\ii}\tr\left|P^{\otimes n}\gamma^{(n)}P^{\otimes n}-\int_0^1 \lambda^n\, \dM_{P,N}^{(n)}(\lambda)\right|=0.$$
By construction, $M_{P,N}^{(n)}$ is a bounded sequence of (Radon) measures on $[0,1]$ with values in the cone of non-negative operators in the finite-dimensional space $\otimes_s^n P \gH \subset\gH^n$ having a trace $\leq1$. This sequence has a subsequence which converges weakly (in the sense of bounded measures on a compact set) to some measure $M_P^{(n)}$ and is therefore such that 
$$P^{\otimes n}\gamma^{(n)}P^{\otimes n}=\int_0^1 \lambda^n\,\dM_P^{(n)}(\lambda)$$
for all $n\geq1$. For $n=0$, $M_P^{(0)}$ is a Borel probability measure on $[0,1]$. For $n\geq1$, the value of the measure $M_P^{(n)}$ at $\lambda=0$ is not important, as it does not contribute to the density matrices. In order to simplify our reasoning, we will simply take $M_P^{(n)}(\{0\})=M_P^{(0)}(\{0\})\,{\rm Id}_{\otimes_s^n P\gH}\tr\big({\rm Id}_{\otimes_s^n P \gH}\big)^{-1}$.

Now, we claim that the sequence $M_P^{(n)}$ is consistent in the sense that 
\begin{equation}
\tr_{n+1\to n+k}M_P^{(n+k)}= M_P^{(n)}
\label{eq:A_P_consistent}
\end{equation}
for $n,k\geq0$. Indeed, we have
\bqq
\tr_{n+1\to n+k}M_{P,N}^{(n+k)}(\lambda)&=&\sum_{j=n+k}^N\;\delta_{j/N}(\lambda)\;\tr_{n+1\to j}G_{N,j}\hfill\\
&=&M_{P,N}^{(n)}(\lambda)-\sum_{j=n}^{n+k-1}\;\delta_{j/N}(\lambda)\;\tr_{n+1\to j}G^P_{N,j}
\eqq
and therefore
$$\lambda\tr_{\gH^n}\left|\tr_{n+1\to n+k}M_{P,N}^{(n+k)}(\lambda)-M_{P,N}^{(n)}(\lambda)\right|\leq \frac{n+k-1}{N}\sum_{j=n}^{n+k-1}\;\delta_{j/N}(\lambda)\;\tr G^P_{N,j}.$$
The sum of the right side is a uniformly bounded measure on $[0,1]$, hence we obtain~\eqref{eq:A_P_consistent}.

By the (finite-dimensional) quantum de Finetti Theorem~\ref{thm:DeFinetti}, there exists a Borel probability measure $\nu_P$ on $[0,1]\times S\gH\cap (P\gH)$, invariant under the action of $S^1$, such that 
$$\dM_P^{(n)}(\lambda)=\int_{S\gH}d\nu_P(\lambda,u)|u^{\otimes n}\rangle\langle u^{\otimes n}|.$$
The original statement applies only for each fixed $\lambda\in[0,1]$. In order to deal with the present case of a probability measure on $[0,1]$, we first approximate $M^{(n)}_P(\lambda)$ by a step function and then we pass to the limit. The weak limit is still a Borel probability measure as we are working on a compact set of a finite dimensional space. We have now proved that 
\bqq
P^{\otimes n}\gamma^{(n)}P^{\otimes n}&=&\int_{0}^1\int_{S\gH}d\nu_{P}(\lambda,u)\,\lambda^n|u^{\otimes n}\rangle\langle u^{\otimes n}|\hfill\\
&=&\int_{0}^1\int_{S\gH}d\nu_{P}(\lambda,u)\,|(\sqrt\lambda u)^{\otimes n}\rangle\langle (\sqrt\lambda u)^{\otimes n}|.
\eqq
Associated to the probability measure $\nu_P$ on $[0,1]\times S\gH\cap P\gH$, there is a unique probability measure $\mu_P$ on $B\gH\cap P\gH$ which is formally given by the formula $\mu_P(u):=\nu_P(\norm{u},u\norm{u}^{-1})$. More precisely, $\mu_P$ is defined by its action on continuous functions on $B\gH$ by
$$\int_{B\gH}f(u)\,d\mu_P(u):=\int_0^1\int_{S\gH}f(\sqrt{\lambda} u)\,d\nu_P(\lambda,u)$$
and we get as we wanted 
$$P^{\otimes n}\gamma^{(n)}P^{\otimes n}=\int_{B\gH}d\mu_P(u)\;|u^{\otimes n}\rangle\langle u^{\otimes n}|$$
for all $n\geq1$.

The argument can be applied for any chosen finite-rank orthogonal projection $P$. Consider now a sequence $P_k$ which converges strongly to the identity on $\gH$ and apply the previous argument for each $k$. We find a sequence of Borel probability measures $\mu_k$ on $BP_k\gH$, invariant under the action of $S^1$, such that 
$$P_k^{\otimes n}\gamma^{(n)}P_k^{\otimes n}=\int_{B\gH}d\mu_k(u)\,|u^{\otimes n}\rangle\langle u^{\otimes n}|.$$
By construction the measure $\mu_k$ coincides with $\mu_\ell$ for $\ell\leq k$ on cylindrical Borel sets having their base in $P_\ell\gH$. Since these measures all have their support in a bounded set in $\gH$, there exists by~\cite[Lemma 1]{Skorokhod-74}  a unique Borel probability measure $\mu$ on $B\gH$ which coincides with $\mu_k$ on cylindrical Borel subsets having their base in $P_k\gH$. This precisely means that 
$$\int_{B\gH}d\mu_k(u)\,|u^{\otimes n}\rangle\langle u^{\otimes n}|=\int_{B\gH}d\mu(u)\,|(P_ku)^{\otimes n}\rangle\langle (P_ku)^{\otimes n}|$$
and therefore
$$P_k^{\otimes n}\gamma^{(n)}P_k^{\otimes n}=P_k^{\otimes n}\left(\int_{B\gH}d\mu(u)\,|u^{\otimes n}\rangle\langle u^{\otimes n}|\right)P_k^{\otimes n}.$$
Taking now $k\to\ii$ finishes the proof of the existence of $\mu$.

We conclude with the uniqueness of $\mu$ and assume that another $S^1$--invariant Borel probability measure $\mu'$ on $B\gH$ satisfies
\begin{equation}
\int_{B\gH}|u^{\otimes k}\rangle\langle u^{\otimes k}|d\mu(u)=\int_{B\gH}|u^{\otimes k}\rangle\langle u^{\otimes k}|d\mu'(u) 
\label{eq:uniqueness}
\end{equation}
for all $k\geq1$. From~\cite[Lemma 1]{Skorokhod-74}, $\mu$ is characterized by its cylindrical projections $\mu_V$ on any finite-dimensional subspace $V\subset\gH$ and therefore it suffices to prove that $\mu_V=\mu_V'$ for all such $V$'s. Taking the projection $(P_V)^{\otimes k}$ on both sides of~\eqref{eq:uniqueness}, we see that $\mu_V$ and $\mu'_V$ satisfy the same relation as in~\eqref{eq:uniqueness}, with $B\gH$ replaced by the unit ball $BV$ of $V$. Let then $(e_1,...,e_d)$ be an orthonormal basis of $V$ and $P_i=|e_i\rangle\langle e_i|$ be the associated projections. Applying $P_{i_1}\otimes\cdots\otimes P_{i_k}$ on the left and $P_{j_1}\otimes\cdots\otimes P_{j_k}$ on the right side of~\eqref{eq:uniqueness}, we get 
$$\int_{BV}u_{i_1}\cdots u_{i_k}\overline{u_{j_1}}\cdots \overline{u_{j_k}}\,d(\mu_V-\mu_V')(u)=0$$
for all multi-indices $i_1,...,i_k$ and $j_1,...,j_k$ (with $u=\sum_{j=1}^du_je_j$). From the $S^1$--invariance of the two measures, it is clear that, for $k\neq \ell$,
$$\int_{BV}u_{i_1}\cdots u_{i_k}\overline{u_{j_1}}\cdots \overline{u_{j_\ell}}\,d(\mu_V-\mu_V')(u)=0.$$
Since polynomials in the $u_i$'s and $\overline{u_j}$'s are dense in $C^0(BV,\C)$, this obviously implies $\mu_V\equiv\mu_V'$ and the uniqueness of $\mu$ follows.
\end{proof}

We now turn to the 

\begin{proof}[Proof of Theorem~\ref{thm:other-localization}]
It suffices to prove the result for $f(\lambda)=\lambda^p$ with $p=0,1,...$. For $p=0$, this is nothing but the fact that $\sum_{k=0}^N\tr_{\gH^k}G_{N,k}^A=1$ since $G_N^A$ is a state. For $p=1$, we note that
$$\sum_{k=0}^N\frac{k}{N}\tr_{\gH^k}G_{N,k}^A=\tr(A\gamma_{\Gamma_N}^{(1)}A)$$
as follows from \eqref{eq:localized-DM}.
The right side can be written 
$$\tr(A\gamma_{\Gamma_N}^{(1)}A)=\tr(A(T+1)^{-1/2}(T+1)^{1/2}\gamma_{\Gamma_N}^{(1)}(T+1)^{1/2}(T+1)^{-1/2}A)$$
and this converges to $\tr (A\gamma^{(1)}A)$ since $A(T+1)^{-1/2}$ is compact and $(T+1)^{1/2}\gamma_{\Gamma_N}^{(1)}(T+1)^{1/2}\wto_*(T+1)^{1/2}\gamma^{(1)}(T+1)^{1/2}$ weakly-$\ast$ in $\gS^1$. So we obtain
\bqq
\lim_{N\to\ii}\sum_{k=0}^N\frac{k}{N}\tr_{\gH^k}G_{N,k}^A=\tr(A\gamma^{(1)}A)&=& \tr A\left(\int_{B\gH}d\mu(u)\, |u\rangle\langle u|\right)A \hfill\\
&=&\int_{B\gH}d\mu(u)\, \norm{Au}^2.
\eqq
The proof is similar for $p\geq2$, using that
$${N\choose p}^{-1}\sum_{k=p}^N{k\choose p}\tr_{\gH^k}G_{N,k}^A=\tr(A^{\otimes p}\gamma_{\Gamma_N}^{(p)}A^{\otimes p})$$
and the estimate~\eqref{eq:difference_k_N}.
\end{proof}

\section{Validity of Hartree's theory for trapped bosons}
\label{sec:trapped}

As a direct application of the usual de Finetti Theorem~\ref{thm:DeFinetti}, we state here a rather general result for trapped systems, which is in the spirit of the earlier works \cite{FanSpoVer-80,PetRagVer-89,RagWer-89}. 

Let $\gH$ be a separable Hilbert space and consider an $N$-body Hamiltonian of the form
\begin{equation}
H_N:=\sum_{j=1}^N T_j+\frac{1}{N-1}\sum_{1\leq k<\ell\leq N}w_{k\ell},
\label{eq:def_Hamiltonien_trapped} 
\end{equation}
acting on the bosonic space $\gH^N=\bigotimes_{s}^N \gH$. We make the assumption that 
\begin{equation}
\text{\it $T$ is bounded from below and it has a compact resolvent}
\label{eq:ass_T} 
\end{equation}
which is the mathematical formulation of the system being trapped.
We also assume that $w$ is a self-adjoint operator on $\gH^2$ which is relatively small compared to the one-particle term, in the sense that
\begin{equation}
-\beta_-(T\otimes 1+1\otimes T)-C\leq w\leq \beta_+ (T\otimes 1+1\otimes T)+C
\label{eq:ass_w}
\end{equation}
where $C\geq 0$, $\beta_+\ge 0$ and $1 >\beta_-\ge 0$. 
The upper bound in~\eqref{eq:ass_w} is not essential but it simplifies the analysis. Also, the precise form which we have chosen for $H_N$ is not really important. We could as well introduce $3$-body terms, or use abstract conditions like the ones in~\cite{RagWer-89,Werner-92}.

Under the assumption \eqref{eq:ass_w}, we deduce that
\begin{equation} 
(1-\beta_-)\sum_{j=1}^N T_j -CN \leq H_N \leq (1+\beta_+)\sum_{j=1}^N T_j + CN
\label{eq:upper_lower_bound_H_N}.
\end{equation}
This proves that $H_N$ is bounded from below and that its Friedrichs extension has the same form domain as the one-particle term $\sum_{j=1}^N T_j$. 

\subsection{Zero temperature case}
At zero temperature, we are interested in the ground state energy of $H_N$, which is given by
\begin{equation}
E(N)=\inf\sigma_{\gH^N}(H_N).
\end{equation}
The assumption that $T$ has a compact resolvent easily implies with~\eqref{eq:upper_lower_bound_H_N} that the spectrum of $H_N$ is purely discrete, hence $E(N)$ is an eigenvalue of finite multiplicity.

The corresponding Hartree problem is given by
\begin{equation}
\eH:=\inf_{\substack{u\in Q(T)\\ \norm{u}=1}}\cEH(u) := \inf_{\substack{u\in Q(T)\\ \norm{u}=1}} \left\{ \pscal{u,Tu}+\frac12\pscal{u\otimes u,w\, u\otimes u}_{\gH^2} \right\}
\label{eq:def_eH} 
\end{equation}
where $Q(T)$ is the quadratic form domain of the bounded-below operator $T$. Again the compactness of the resolvent of $T$ and our assumptions on $w$ easily imply the existence of at least one minimizer for $\eH$. Also, the set of minimizers $\cM\subset S\gH$ is bounded in $Q(T)$.

\begin{theorem}[\textbf{Validity of Hartree and BEC for trapped bosons}]\label{thm:confined}\mbox{}\\
Under the previous assumptions~\eqref{eq:ass_T} and~\eqref{eq:ass_w} on $T$ and $w$, we have 
$$\boxed{\lim_{N\to\ii}\frac{E(N)}{N}=\eH.}$$
If $(\Psi_N)$ is any sequence such that $\pscal{\Psi_N,H_N\Psi_N}=E(N)+o(N)$, then there exists a subsequence and a probability measure $\mu$ on the set $\cM$ of minimizers of $\E_{\rm H}$ (modulo a phase), such that 
$$\lim_{j\to\ii} \gamma^{(k)}_{\Psi_{N_j}}=\int_{\cM}d\mu(u)\;|u^{\otimes k}\rangle\langle u^{\otimes k}|$$
strongly in the trace-class for any fixed $k$ and, even, for the norm induced by the quadratic form of $T$:
\begin{equation}
\lim_{j\to\ii}  \norm{ A_k^{1/2}\left(\gamma^{(k)}_{\Psi_{N_j}}-\int_{\cM}d\mu(u)\;|u^{\otimes k}\rangle\langle u^{\otimes k}|\right) A_k^{1/2}}_{\gS^1}  =0.
\label{eq:limit_quad_form} 
\end{equation}
where $A_k:=\sum_{\ell=1}^k (T_\ell+C)$. In particular, if $\eH$ admits a unique minimizer $u_0$, then there is \emph{complete Bose-Einstein condensation} on $u_0$:
\begin{equation}
\lim_{N\to\ii} \gamma^{(k)}_{\Psi_N}=|u_0^{\otimes k}\rangle\langle u_0^{\otimes k}|.
\label{eq:BEC-confined} 
\end{equation}
\end{theorem}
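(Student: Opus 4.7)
The plan is to combine compactness from the compact resolvent assumption with the standard quantum de Finetti theorem (Theorem~\ref{thm:DeFinetti}) applied to the limiting hierarchy, in the style of \cite{FanSpoVer-80,RagWer-89}.

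\medskip

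First I would establish the energy convergence. The upper bound $E(N)/N \leq \eH$ is immediate by testing against $u^{\otimes N}$ for $u\in\cM$. For the lower bound, let $(\Psi_N)$ be an approximate ground state sequence. Combining the stated bound $E(N)/N \leq \eH$ with the coercivity estimate $H_N \geq (1-\beta_-)\sum_j T_j - CN$ from~\eqref{eq:upper_lower_bound_H_N}, I get the uniform a priori bound $\Tr(T\gamma^{(1)}_{\Psi_N})\leq C$. Since $(T+C)^{-1/2}$ is compact, writing $\gamma^{(1)}_{\Psi_N} = (T+C)^{-1/2}\Lambda_N (T+C)^{-1/2}$ with $\Lambda_N$ bounded in $\gS^1$ and passing to weak-$\ast$ subsequential limits shows $\gamma^{(1)}_{\Psi_N}\to \gamma^{(1)}$ strongly in $\gS^1(\gH)$. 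By Corollary~\ref{cor:strong_CV}, the whole hierarchy converges strongly, and by the usual de Finetti theorem there exists a Borel probability measure $\mu$ on $S\gH\cap Q(T)$ such that $\gamma^{(k)} = \int_{S\gH} |u^{\otimes k}\rangle\langle u^{\otimes k}|\,d\mu(u)$.

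\medskip

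Next I would pass to the limit in the energy. The one-body term is lower semicontinuous with respect to trace-norm convergence (Fatou for $T\geq \inf\sigma(T)$). For the two-body term, write $\widetilde{w}:=w+\beta_-(T\otimes 1+1\otimes T)+C\geq 0$, so that
$$\frac{\langle\Psi_N,H_N\Psi_N\rangle}{N}=(1-\beta_-)\Tr(T\gamma^{(1)}_{\Psi_N})+\frac{1}{2}\Tr(\widetilde{w}\gamma^{(2)}_{\Psi_N})-\frac{C}{2},$$
both terms being weakly lower semicontinuous quadratic forms of the density matrices. Taking $\liminf$ and re-assembling yields
$$\liminf_{N\to\infty}\frac{E(N)}{N}\geq \Tr(T\gamma^{(1)})+\frac12\Tr(w\gamma^{(2)})=\int_{S\gH}\cEH(u)\,d\mu(u)\geq \eH,$$
the last equality using the de Finetti representation. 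This proves $E(N)/N\to \eH$ and forces $\mu$ to be supported on $\cM$.

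\medskip

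Finally, for the refined convergence in the form norm~\eqref{eq:limit_quad_form}, I observe that equality in the lower-semicontinuity argument above forces $\Tr(T\gamma^{(1)}_{\Psi_N})\to \Tr(T\gamma^{(1)})$. By the consistency relations satisfied in the strong limit, this implies $\Tr(A_k\gamma^{(k)}_{\Psi_N}) = k\Tr(T\gamma^{(1)}_{\Psi_N})+kC\to \Tr(A_k\gamma^{(k)})$ for every $k$. Combined with the weak-$\ast$ convergence of $A_k^{1/2}\gamma^{(k)}_{\Psi_N}A_k^{1/2}$ to $A_k^{1/2}\gamma^{(k)}A_k^{1/2}$ in $\gS^1(\gH^k)$, the reciprocal of Fatou's lemma for the trace class (as in \cite[Add.~H]{Simon-79}) upgrades this to strong convergence in $\gS^1$, i.e.~\eqref{eq:limit_quad_form}. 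The BEC statement~\eqref{eq:BEC-confined} then follows, since if $\cM = \{e^{i\theta}u_0\}$ the $S^1$-invariance of the kernel $|u^{\otimes k}\rangle\langle u^{\otimes k}|$ forces $\int_\cM |u^{\otimes k}\rangle\langle u^{\otimes k}|\,d\mu(u)=|u_0^{\otimes k}\rangle\langle u_0^{\otimes k}|$.

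\medskip

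The main subtlety is handling the unbounded, not necessarily sign-definite operator $w$ in the passage to the limit: this is resolved by absorbing its negative part into the kinetic operator via the form bound~\eqref{eq:ass_w} to reduce every lower-semicontinuity question to a positive quadratic form, where Fatou applies directly.
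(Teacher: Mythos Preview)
Your proof is correct and follows essentially the same route as the paper's: use the compact resolvent of $T$ to upgrade weak-$\ast$ to strong convergence of $\gamma^{(1)}_{\Psi_N}$, invoke Corollary~\ref{cor:strong_CV} and the strong de Finetti theorem, then pass to the limit in the energy by absorbing the negative part of $w$ into the kinetic term via~\eqref{eq:ass_w} so that Fatou applies to two non-negative quadratic forms. The paper's decomposition~\eqref{eq:trick_pass_limit} is cosmetically different (it shifts $T$ by a constant first), and your argument for~\eqref{eq:limit_quad_form} via equality in the liminf and the reciprocal of Fatou's lemma is exactly the paper's.
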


Theorem~\ref{thm:confined} applies to several typical situations.

\begin{example}[\textbf{Bounded domain}]
Take $\gH=L^2(\Omega)$ where $\Omega$ is a bounded domain of $\R^d$ and $T=-\Delta+V$ with chosen boundary conditions and $w_{k\ell}=w(x_k-x_\ell)$ with $V,w\in L^p(\Omega)$ for some $p>\max(1,d/2)$. Hartree's theory is then valid in the mean-field limit by Theorem~\ref{thm:confined}.

In one dimension, the interaction can also be a delta potential, $w_{k\ell} = \lambda \delta (x_k-x_{\ell})$, as in the Lieb-Liniger model \cite{LieLin-63} on a finite interval $\Omega = (0,1)$. This potential, which acts as 
\begin{equation}
\pscal{\Psi,\delta(x_1-x_2)\Psi}:=\int_{\Omega}|\Psi(x,x)|^2\,dx
\label{eq:def_delta} 
\end{equation}
on two-body wave functions $\Psi$, is relatively form-bounded with respect to the Laplacian similarly as in~\eqref{eq:ass_w}, by the Sobolev embedding. By Theorem~\ref{thm:confined}, the associated Hartree functional 
\[
\cEH (u) = \int_\Omega \left(|u'(x)| ^2 +V(x)|u(x)|^2+ \frac{\lambda}{2} |u(x)| ^4 \right)dx
\]
gives the leading order of the energy per particle when $N\to \infty$, for any fixed $\lambda\in\R$. For $\lambda \geq 0$ this has been proved before in \cite{SeiYngZag-12}.
\end{example}

\begin{example}[\textbf{Confining potential}]
Take $\gH=L^2(\R^d)$, $T=-\Delta+V(x)$ with $V\in L^p_{\rm loc}(\R^d)$ satisfying $\lim_{|x|\to \infty}V(x)=+\ii$, and $w_{k\ell}=w(x_k-x_\ell)$ with $w\in L^p_{\rm loc}(\R^d)$ satisfying
\begin{equation*}
-\alpha (V(x)+V(y))- C \le w(x-y) \mathds{1} (|x-y|\ge R) \le \beta (V(x)+V(y))+C
\end{equation*}
for some constants $p>\max(1,d/2)$, $C\ge 0$, $R\ge 0$, $\beta\ge 0$, $1>\alpha \ge 0$. We see that the above assumptions~\eqref{eq:ass_T} and~\eqref{eq:ass_w} are satisfied, and therefore Hartree's theory is again valid by Theorem~\ref{thm:confined}. When $\widehat{w}\geq0$, this was proved in~\cite{GreSei-13}, where the energy $E(N)$ is even expanded to the next order.
\end{example}

\begin{example}[\textbf{Magnetic fields, rotation}]\label{ex:magn rot}
Take $\gH=L^2(\R^d)$, $T=-(\nabla+iA(x)) ^2 + V(x)$ with $V$ and $w$ as in the previous example, and $A\in L^p_{\rm loc}(\R^d)$. Hartree's theory is again valid in this case by Theorem~\ref{thm:confined}. One may for example consider $A(x) = (B/2) (x_2,- x_1,0)$ the vector potential corresponding to a uniform magnetic field $\mathrm{curl} A$ of strength $B$ pointing in the $x_3$ direction. Using the analogy between the Coriolis and the Lorentz force, one may also think of $A(x) = (\Omega/2) (x_2,- x_1,0)$ with $V = \tilde{V} - \frac{1}{4}|A| ^2$, which corresponds to a gas rotating at speed $\Omega$ around the $x_3$ axis, in the trapping potential $\tilde{V}$. The gas is then described in the rotating frame and the potential $V$ (called the effective potential) takes into account the effect of the centrifugal force. 
\end{example}

\begin{example}[\textbf{Bosons in the lowest Landau level}]
It is possible to deal with 3D bosons rotating along a fixed axis, in a harmonic confining potential and with a delta interaction (defined similarly as in~\eqref{eq:def_delta}), provided that the Hamiltonian is restricted to the lowest Landau level. See for instance~\cite{LieSeiYng-09} and the references therein for a precise definition of the model. In the Bargmann representation, the one-particle operator becomes $T=z\partial_z$ where $z\in\C$ is the complex coordinate in the plane orthogonal to the axis of rotation. The delta potential is itself a bounded operator on the Bargmann space. Theorem~\ref{thm:confined} applies to both the repulsive and attractive cases. The former was treated in~\cite{LieSeiYng-09}.
\end{example}

We can now provide the

\begin{proof}[Proof of Theorem~\ref{thm:confined}]
Let $(\Psi_N)$ be such that $\pscal{\Psi_N,H_N\Psi_N}=E(N)+o(N)$. Up to a subsequence we may assume that $\gamma_{\Psi_N}^{(k)} \wto \gamma^{(k)}$ weakly-$\ast$ in the trace class as $N\to \infty$ for all $k\ge 1$. By (\ref{eq:upper_lower_bound_H_N}) we know that $\Tr (T \gamma_{\Psi_N}^{(1)})$ is bounded and we therefore also get that $(T+C_0)^{1/2} \gamma_{\Psi_N}^{(1)} (T+C_0)^{1/2} \wto (T+C_0)^{1/2} \gamma^{(1)} (T+C_0)^{1/2}$ weakly-$\ast$  in the trace class, where $C_0$ is chosen such that $T+C_0\ge 1$. From the compactness of $(T+C_0)^{-1}$ it follows that
\begin{equation*}
\tr\gamma^{(1)}_{\Psi_N}=\tr \left(T+C_0 \right)^{-1} \left(T+C_0 \right)^{1/2} \gamma^{(1)}_{\Psi_N} \left(T+C_0 \right)^{1/2}  \to \tr\gamma^{(1)}.
\end{equation*}
By Corollary~\ref{cor:strong_CV}, we infer that $\gamma^{(k)}_{\Psi_N}\to\gamma^{(k)}$ strongly in $\gS^1(\gH^k)$ for all $k\geq1$. By Theorem~\ref{thm:DeFinetti}, there exists a probability measure $\mu$ on $Q(T)\cap S\gH$ such that 
$$\gamma^{(k)}=\int_{S\gH}d\mu(u)\;|u^{\otimes k}\rangle\langle u^{\otimes k}|$$
for all $k\geq1$. Now we go back to the $N$-body energy and notice that 
\begin{align}
\frac{\pscal{\Psi_N,H_N\Psi_N}}{N}&=\tr_{\gH} \left(T\gamma^{(1)}_{\Psi_N}\right)+\frac12\tr_{\gH^2} \left(w\gamma^{(2)}_{\Psi_N}\right)\nonumber\\
&=(\alpha-2)C_0+(1-\alpha)\tr_{\gH} \left( (T+C_0)\gamma^{(1)}_{\Psi_N}\right) \nonumber\\
&\quad+\frac12 \tr_{\gH^2} \left[ \left(w+\alpha(T\otimes 1 + 1 \otimes T)+2C_0 \right) \gamma^{(2)}_{\Psi_N} \right]\label{eq:trick_pass_limit}
\end{align}
where the constant $C_0$ is now chosen such that $T+C_0 \ge 1$ and $w+\alpha(T\otimes 1 + 1 \otimes T)+2C_0 \ge 0$. By Fatou's lemma, we obtain 
\begin{equation*}
\lim_{N\to\ii}\frac{\pscal{\Psi_N,H_N\Psi_N}}{N}\geq \tr_{\gH} \left(T\gamma^{(1)}\right)+\frac12\tr_{\gH^2} \left(w\gamma^{(2)}\right)
=\int_{S\gH}d\mu(u)\;\cEH(u)\geq \eH.
\end{equation*}
Recalling the upper bound $E(N)\le N e_{\rm H}$, we conclude that
$$\lim_{N\to\ii}\frac{\pscal{\Psi_N,H_N\Psi_N}}{N}=\int_{S\gH}d\mu(u)\;\cEH(u)=\eH.$$
It is now clear that $\mu$ must have its support on the set $\cM$ containing all the minimizers of $\eH$. 
Going back to the proof we see that equality must hold everywhere, which implies in particular that
$\tr(T\gamma^{(1)}_{\Psi_N}) \to \tr (T\gamma^{(1)}). $
By the reciprocal of Fatou's lemma, this gives the strong convergence 
$$(T+C_0)^{1/2} (\gamma_{\Psi_N}^{(1)} -\gamma^{(1)})(T+C_0)^{1/2}\to 0$$
in the trace class. To get the same for the higher density matrices, we use the same argument and the fact that  $\tr_{\gH} \left(T\gamma^{(1)}_{\Psi_N}\right)=k^{-1}\tr_{\gH^k} \left(\big(\sum_{i=1}^kT_i\big)\gamma^{(k)}_{\Psi_N}\right)$.

We mention that everything holds here for a subsequence. However, if $\cM=\{u_0\}$ then our proof shows that all the subsequences must have the same limit and the corresponding convergence follows for the whole sequence $\gamma_{ \Psi_N}^{(k)}$.
\end{proof}

\subsection{Positive temperature case}

Our result can be extended to the case of positive temperature and we quickly explain this now. We assume that 
\begin{equation}
\tr e^{-\beta T}<\ii 
\label{eq:ass_T_positive_temp}
\end{equation}
for all $\beta>0$ (which implies that the resolvent $(T+C)^{-1}$ must be trace-class, hence compact). The condition can be relaxed by assuming only $\tr e^{-\beta T}<\ii$ for \emph{one} $\beta>0$, but we use the stronger assumption~\eqref{eq:ass_T_positive_temp} for simplicity.

The $N$-particle quantum free energy at temperature $\beta^{-1}\geq0$ is now given by the formula
\begin{equation}
E(\beta,N)=-\frac{1}{\beta}\log\left(\tr_{\gH^N}e^{-\beta H_N}\right).
\end{equation}
From~\eqref{eq:upper_lower_bound_H_N} we have
$$\tr_{\gH^N}e^{-\beta H_N}\leq e^{\beta CN}\tr_{\gH^N}e^{-\beta (1-\alpha)\sum_{j=1}^N T_j}\leq e^{\beta CN} \left(\tr_\gH e^{-\beta (1-\alpha)T}\right)^N$$
which proves that
\begin{equation}
E(\beta,N)\geq -N\left(C+\frac{1}{\beta}\log\tr_\gH e^{-\beta (1-\alpha)T}\right).
\label{eq:lower_bound_positive_temp} 
\end{equation}
We see that, under the assumptions~\eqref{eq:ass_T_positive_temp} on $T$ and~\eqref{eq:ass_w} on $w$, the $N$-particle canonical free energy $E(\beta,N)$ is well defined for all $\beta>0$.
Let us denote by
\begin{equation} \label{eq:Gibb-state}
\Gamma_{\beta,N}=\frac{e^{-\beta H_N}}{\tr_{\gH^N}e^{-\beta H_N}}.
\end{equation}
the associated Gibbs state and by $\gamma^{(k)}_{\beta,N}=\tr_{k+1\to N}\Gamma_{\beta,N}$ its $k$-particle density matrices. Our result is then the following

\begin{theorem}[\textbf{Trapped bosons at positive temperature}]\label{thm:confined_positive_temp}\mbox{}\\
Under the previous assumptions~\eqref{eq:ass_T_positive_temp} and~\eqref{eq:ass_w} on $T$ and $w$, we have 
$$\boxed{\lim_{N\to\ii}\frac{E(\beta,N)}{N}=\eH}$$
for any fixed $\beta>0$. The other conclusions of Theorem~\ref{thm:confined} remain valid for the $k$-particle density matrices $\gamma^{(k)}_{\beta,N}$.
\end{theorem}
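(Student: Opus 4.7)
The strategy is to adapt the proof of Theorem~\ref{thm:confined} by replacing the approximate ground state $\Psi_N$ with the Gibbs state $\Gamma_{\beta,N}$. The main new difficulty is to control the entropy term in the free energy, which must be shown to be sub-extensive in $N$; the energy part of the argument then transfers essentially verbatim from the zero-temperature analysis.

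The upper bound $E(\beta,N) \leq N\eH$ is immediate from the Gibbs variational principle $E(\beta,N) = \inf_{\Gamma}[\tr(H_N\Gamma) - \beta^{-1}S(\Gamma)]$ tested against the pure Hartree state $|u^{\otimes N}\rangle\langle u^{\otimes N}|$, which has zero von Neumann entropy: this yields $E(\beta,N) \leq N\cEH(u)$, and an infimum over $u \in Q(T) \cap S\gH$ gives the claim.

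For the lower bound, I would extract a subsequence along which $\gamma^{(k)}_{\beta,N} \wto \gamma^{(k)}$ weakly-$\ast$ in $\gS^1(\gH^k)$ for every $k$. The coercivity $H_N \geq (1-\beta_-)\sum_j T_j - CN$ combined with the identity $\tr(H_N\Gamma_{\beta,N}) = E(\beta,N) + \beta^{-1}S(\Gamma_{\beta,N}) \leq N\eH + \beta^{-1}S(\Gamma_{\beta,N})$ will, once the entropy bound of the next paragraph is in place, yield a uniform bound on $\tr(T\gamma^{(1)}_{\beta,N})$. By the compact resolvent assumption and Corollary~\ref{cor:strong_CV}, $\gamma^{(k)}_{\beta,N}$ converges strongly in $\gS^1(\gH^k)$ to a limit, and Theorem~\ref{thm:DeFinetti} produces a probability measure $\mu$ on $S\gH$ such that $\gamma^{(k)} = \int_{S\gH}|u^{\otimes k}\rangle\langle u^{\otimes k}|\,d\mu(u)$. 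The argument of Theorem~\ref{thm:confined} then gives $\tr(H_N\Gamma_{\beta,N})/N \to \int\cEH(u)\,d\mu(u) \geq \eH$.

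The crux, and main obstacle, is to prove $S(\Gamma_{\beta,N})/N \to 0$, which combined with the above yields $E(\beta,N)/N \to \eH$. Starting from the exact identity $E(\beta,N) = E(N) - \beta^{-1}\log D_N$ with $D_N := \tr_{\gH^N}e^{-\beta(H_N - E(N))} \geq 1$ and invoking the zero-temperature asymptotics $E(N) = N\eH + o(N)$ from Theorem~\ref{thm:confined}, it suffices to show $\log D_N = o(N)$. The trapping hypothesis $\tr_\gH e^{-\beta T} < \infty$ forces the free bosonic partition function of $T$ on the symmetric $N$-particle space to satisfy $\log\tr_{\gH^N_s}e^{-\beta\sum_jT_j} = -\beta N\lambda_1(T) + O(\log N)$, reflecting complete condensation on the ground mode of $T$ for non-interacting trapped bosons at fixed $\beta$. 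Combining this with a sufficiently sharp operator lower bound on $H_N - E(N)$, sharper than the crude $(1-\beta_-)\sum_j T_j - N(C+\eH)$ and obtained for instance from the Bogoliubov-type control of the excitation spectrum of $H_N$ above its ground state in the spirit of~\cite{LewNamSerSol-13}, gives $\log D_N = o(N)$. The remaining conclusions, namely support of $\mu$ on the minimizer set $\cM$ and complete BEC in the unique-minimizer case, then follow verbatim from Theorem~\ref{thm:confined}, as every inequality in the lower-bound chain must saturate in the limit.
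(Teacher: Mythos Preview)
Your overall structure is right, and you correctly identify the crux: showing that the entropy is sub-extensive, $S(\Gamma_{\beta,N})=o(N)$, so that the free energy and the internal energy agree to leading order. But the way you propose to close this step is a genuine gap. You ask for ``a sufficiently sharp operator lower bound on $H_N - E(N)$'' and point to the Bogoliubov analysis of~\cite{LewNamSerSol-13}. That reference \emph{assumes} the validity of Hartree's theory and complete BEC as inputs (this is precisely what the present paper is meant to supply), so invoking it here is circular. It also requires a unique non-degenerate Hartree minimizer, which is not part of the hypotheses of Theorem~\ref{thm:confined_positive_temp}. Without that machinery, the only operator bound available is the crude one you yourself discard, and no $o(N)$ control on $\log D_N$ follows.

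The paper avoids this difficulty by a much more elementary device that does not require any spectral information on $H_N$ beyond $H_N\ge E(N)$. One first records the a~priori kinetic bound $\tr\big(\sum_j T_j\big)\Gamma_{\beta,N}\le CN$ (this follows from the two-sided bound~\eqref{eq:upper_lower_bound_H_N} and the $O(N)$ bounds on $E(\beta,N)$, without needing $S=o(N)$). One then adds and subtracts $\epsilon\,\tr\big(\sum_j T_j\big)\Gamma_{\beta,N}$ to write
\[
E(\beta,N)\ \ge\ E(N)\ +\ \Big[\epsilon\,\tr\big(\textstyle\sum_j T_j\big)\Gamma_{\beta,N}+\beta^{-1}\tr\Gamma_{\beta,N}\log\Gamma_{\beta,N}\Big]\ -\ \epsilon CN.
\]
The bracket is now bounded below by the \emph{non-interacting} bosonic free energy at temperature $\beta^{-1}$ with one-body operator $\epsilon T$, and the exact formula for the latter (Lemma~\ref{lem:lower_bound_uniform}, after lifting a possible degeneracy of $\lambda_1(T)$ by a finite-rank perturbation) gives $N\epsilon\lambda_1(T)-O_\epsilon(1)$. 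Sending $N\to\infty$ and then $\epsilon\to 0$ yields $E(\beta,N)=E(N)+o(N)$, after which everything reduces to Theorem~\ref{thm:confined}. The point is that the entropy is not controlled in isolation but always paired with a small multiple of the kinetic energy; this is what you are missing.
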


Remark that, at a fixed temperature $\beta^{-1}>0$, almost of the particles still condense on the Hartree ground state. In this case, the effect of the temperature is only seen on the next order of the free energy, given by Bogoliubov's theory (see, for example,~\cite[Prop. 3.1]{Suto-03} for non-interacting systems and~\cite{LewNamSerSol-13} for interacting systems). One can see the effect of the temperature in Hartree theory by taking a temperature which diverges proportionally to $N$, that is $\beta_N\simeq 1/N$, see~\cite{Gottlieb-05}.

\begin{proof}
We only have to show that $E(\beta,N)=E(N)+o(N)$ and the result then follows immediately from Theorem~\ref{thm:confined}.
First, by Gibbs' variational principle, we have
$$E(\beta,N)=\inf_{\substack{0\leq\Gamma:\gH^N\to\gH^N\\ \tr\Gamma=1}}\left(\tr_{\gH^N}H_N\Gamma+\beta^{-1}\tr_{\gH^N}\Gamma\log\Gamma\right)\leq E(N),$$
where the upper bound is obtained by taking $\Gamma=|\Psi_N\rangle\langle\Psi_N|$ with $\Psi_N$ the zero-temperature ground state. 

For the lower bound, we consider $\Gamma_{\beta,N}$ as in (\ref{eq:Gibb-state}). By arguing similarly as for the proof of~\eqref{eq:lower_bound_positive_temp}, it is easy to verify that
$$\tr_{\gH^N}\left(\sum_{i=1}^NT_i\right)\Gamma_{\beta,N}\leq CN$$
for a constant which is independent of $N$. Using this bound we can write 
\begin{equation*}
E(\beta,N)\geq E(N)+\epsilon \tr_{\gH^N}\left(\sum_{i=1}^NT_i\right)\Gamma_{\beta,N}+\beta^{-1}\tr_{\gH^N}\Gamma_{\beta,N}\log\Gamma_{\beta,N}-\epsilon CN.
\end{equation*}
where we have used that $H_N\geq E(N)$, the ground state energy at zero temperature.
In order to control the error terms, we use the exact behavior of the free energy in the non-interacting case.

\begin{lemma}[\textbf{A uniform lower bound in the non-interacting case}]\label{lem:lower_bound_uniform}\mbox{}\\
Let $K$ be an operator on $\gH$ such that $\tr_\gH e^{-\beta K}<\ii$ and let $\lambda_1$ be the first eigenvalue of $K$ which we assume to be non-degenerate. Then we have
\begin{equation}
0\leq -\log\left(\tr_{\gH^N}e^{-\sum_{i=1}^NK_i}\right) - N\lambda_1(K)+ \log\left(\tr_{\gH_\perp} \frac{1}{e^{- (K-\lambda_1)}-1}\right)\underset{N\to\ii}{\longrightarrow}0
\label{eq:lower_uniform}
\end{equation}
where $\gH_\perp={\mathds 1}_{(\lambda_1,\ii)}(K)\gH\subset \gH$.
\end{lemma}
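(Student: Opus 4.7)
The plan is to diagonalize the one-body operator $K$ in its eigenbasis and expand the bosonic canonical partition function as a sum over occupation sequences, thereby reducing the claim to a monotone-convergence argument. Let $(\phi_i)_{i\geq 1}$ be an orthonormal eigenbasis of $\gH$ with $K\phi_i=\lambda_i\phi_i$ and $\lambda_1<\lambda_2\leq\lambda_3\leq\cdots$ (strict at the first step by non-degeneracy of $\lambda_1$). An orthonormal basis of $\gH^N$ is then indexed by occupation sequences $(n_i)_{i\geq 1}$ with $n_i\in\mathbb{N}$ and $\sum_i n_i=N$, on which $\sum_{j=1}^N K_j$ is diagonal with eigenvalue $\sum_i n_i\lambda_i$. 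Hence
\[
Z_N:=\tr_{\gH^N}e^{-\sum_{j=1}^N K_j}=\sum_{(n_i)\,:\,\sum_i n_i=N}\prod_i e^{-n_i\lambda_i}.
\]

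Next I would isolate the ground-mode contribution. Writing $n_1=N-M$ with $M=\sum_{i\geq 2}n_i\in\{0,\dots,N\}$ gives the factorization $Z_N=e^{-N\lambda_1}\sum_{M=0}^N S_M$, where
\[
S_M:=\sum_{\substack{(n_i)_{i\geq 2}\\\sum_{i\geq 2}n_i=M}}\prod_{i\geq 2}e^{-n_i(\lambda_i-\lambda_1)}\;\geq\;0.
\]
Since every $S_M\geq 0$, the partial sums $\sum_{M=0}^N S_M$ are non-decreasing in $N$ and tend, as $N\to\ii$, to the bosonic Fock-space partition function associated with $K-\lambda_1$ on $\gH_\perp$:
\[
\sum_{M=0}^\infty S_M=\prod_{i\geq 2}\sum_{n=0}^\infty e^{-n(\lambda_i-\lambda_1)}=\prod_{i\geq 2}\frac{1}{1-e^{-(\lambda_i-\lambda_1)}}.
\]
The assumption $\tr_\gH e^{-\beta K}<\ii$ (used at $\beta=1$) forces $\sum_{i\geq 2}e^{-(\lambda_i-\lambda_1)}<\ii$, so the infinite product converges and $1-e^{-(K-\lambda_1)}$ is a trace-class perturbation of the identity on $\gH_\perp$ with a well-defined Fredholm determinant; the quantity appearing on the right-hand side of the lemma is then to be interpreted as $\log$ of this Fock-space partition function, namely $-\tr_{\gH_\perp}\log\bigl(1-e^{-(K-\lambda_1)}\bigr)$.

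Taking logarithms of the factorization yields
\[
\log Z_N+N\lambda_1=\log\sum_{M=0}^N S_M\;\leq\;\log\prod_{i\geq 2}\frac{1}{1-e^{-(\lambda_i-\lambda_1)}},
\]
which is precisely the stated lower bound, and the convergence to $0$ follows by monotone convergence as $N\to\ii$. I do not foresee a serious obstacle; the only slightly subtle point is the correct reading of the right-hand side as the bosonic Fock-space partition function on $\gH_\perp$ and the verification, via $\tr_\gH e^{-K}<\ii$, that the infinite product over the excited modes converges.
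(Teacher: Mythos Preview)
Your argument is correct and is precisely the standard computation the paper has in mind; the paper does not actually give a proof of this lemma but only refers to~\cite[Prop.~3.1]{Suto-03}, where this same occupation-number expansion is carried out. You have also correctly identified that the expression $\tr_{\gH_\perp}\frac{1}{e^{-(K-\lambda_1)}-1}$ in the statement should be read as the bosonic Fock-space partition function $\prod_{i\geq 2}(1-e^{-(\lambda_i-\lambda_1)})^{-1}$ on $\gH_\perp$, equivalently $\exp\big(-\tr_{\gH_\perp}\log(1-e^{-(K-\lambda_1)})\big)$; the literal reading has the wrong sign.
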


The proof of the lemma is a well-known simple calculation which can be found, for instance, in~\cite[Prop. 3.1]{Suto-03}. There is a similar statement (with a different lower bound) when the first eigenvalue is degenerate, which we will not use here. 

In our case we do not know the multiplicity of the first eigenvalue of $T$, but we can argue as follows. First we add a positive finite-rank operator $B$ to $T$ in order to remove the degeneracy of $\lambda_1(T)$, without changing the first eigenvalue. We obtain an operator $T'=T+B$ to which we can apply Lemma~\ref{lem:lower_bound_uniform}. The error made by replacing $T$ by $T'$ is $\epsilon\norm{B}N$, leading to the following estimate:
\begin{align*}
&\epsilon \tr_{\gH^N}\left(\sum_{i=1}^NT_i\right)\Gamma_{\beta,N}+\beta^{-1}\tr_{\gH^N}\Gamma_{\beta,N}\log\Gamma_{\beta,N}\\
&\qquad\qquad \geq \epsilon \tr_{\gH^N}\left(\sum_{i=1}^NT'_i\right)\Gamma_{\beta,N}+\beta^{-1}\tr_{\gH^N}\Gamma_{\beta,N}\log\Gamma_{\beta,N}-\epsilon \norm{B}N\\
&\qquad\qquad \geq N\epsilon\lambda_1(T)-\frac1\beta \log\left(\tr_{\gH_\perp} \frac{1}{e^{-\beta\epsilon (T+B-\lambda_1(T))}-1}\right)
-\epsilon \norm{B}N.
\end{align*}
By choosing $\epsilon\to0$ slowly enough, we obtain that $E(\beta,N)\geq E(N)+o(N)$, and the rest follows from Theorem~\ref{thm:confined}.
\end{proof}


\section{Validity of Hartree's theory for unconfined bosons}\label{sec:non-trapped}

In this section we deal with quantum systems in which particles are allowed to escape to infinity. We will prove Theorem~\ref{thm:general_intro} which was stated in the introduction.

\subsection{Preliminaries}\label{sec:model}
We consider the Hamiltonian
\begin{equation}
H_N^V:=\sum_{j=1}^N\Big((m^2-\Delta_{j})^{s}-m^{2s}+V(x_j)\Big)+\frac{1}{N-1}\sum_{1\leq k<\ell\leq N}w(x_k-x_\ell)
\label{eq:HN-general}
\end{equation}
on the $N$-particle bosonic space $\gH^N=\bigotimes_s^NL^2(\R^d)$, where $m\ge 0$ and $s\in (0,1]$ are given constants. It will be convenient to introduce the notation
$$K=(m^2-\Delta)^{s}-m^{2s}\qquad\text{and}\qquad T=(m^2-\Delta)^{s}-m^{2s}+V.$$

For the rest of the paper, we always work with Assumptions (\ref{eq:assumption-V-w-1})--(\ref{eq:assumption-V-w-2})--(\ref{eq:assumption-V-w-3}) on $V$ and $w$. Under these assumptions,  we have
\begin{equation}
(1-\alpha_--\beta_-)\sum_{j=1}^NK_j-CN\leq H_N^V\leq (1+\alpha_++\beta_+)\sum_{j=1}^NK_j+CN
\label{eq:lower_bd_H_N_V}
\end{equation}
for some constant $C$ which could be different from the one of~\eqref{eq:assumption-V-w-3}. In particular, the quadratic form associated with $H^V_N$ has the same domain as that of the free kinetic energy $\sum_{j=1}^NK_j$. We denote by $E^V(N)$ the bottom of the spectrum of $H^V_N$. The following says that $E^V(N)N^{-1}$ always has a limit.

\begin{lemma}[\textbf{Monotonicity of the energy per particle}]\label{lem:increasing}\mbox{}\\
The sequence $N\mapsto E^V(N)N^{-1}$ is increasing and $\leq0$.
\end{lemma}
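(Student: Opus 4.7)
The plan is to establish the two claims separately.

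\emph{Monotonicity.} Using the bosonic symmetry of $\Psi \in \gH^N$, the identity $\sum_{j=1}^N T_j = \frac{1}{N-1}\sum_{k<\ell}(T_k+T_\ell)$ allows one to rewrite
\[
H_N^V = \frac{1}{N-1}\sum_{1\le k<\ell \le N} H^V_{2,k\ell}, \qquad H_2^V := T_1 + T_2 + w_{12},
\]
so that for every bosonic mixed state $\Gamma_N$ on $\gH^N$,
\[
\frac{\Tr(H_N^V \Gamma_N)}{N} = \frac{1}{2}\Tr_{\gH^2}\bigl(H_2^V\, \gamma^{(2)}_{\Gamma_N}\bigr).
\]
Consequently, $E^V(N)/N = \tfrac12 \inf\{\Tr_{\gH^2}(H_2^V \gamma^{(2)}) : \gamma^{(2)} \in \cP_N^{(2)}\}$ for $N\ge 2$. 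Because partial-tracing any $(N{+}1)$-body mixed state yields an admissible $N$-body mixed state, $\cP_{N+1}^{(2)} \subset \cP_N^{(2)}$; the infimum over the smaller set is at least as large, giving $E^V(N)/N \le E^V(N+1)/(N+1)$.

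\emph{Upper bound.} I would produce a Hartree trial state that is simultaneously spread out and pushed to infinity. Fix $\phi \in C^\infty_c(\R^d)$ with $\|\phi\|_{L^2}=1$ and $\mathrm{supp}(\phi)\subset B_1$, and set
\[
u_R(x) := R^{-d/2}\,\phi\!\left(\frac{x - R^2 e_1}{R}\right),
\]
so that $u_R$ lives in a ball of radius $R$ around $R^2 e_1$ (both the width and distance to the origin diverge as $R\to\infty$). Then $\Psi_N = u_R^{\otimes N}$ is a unit bosonic vector and $E^V(N)/N \le \cEH^V(u_R)$, so it suffices to check that each contribution to $\cEH^V(u_R)$ tends to $0$. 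The kinetic term, in Fourier variables, reads $\int [(m^2+|q|^2/R^2)^s - m^{2s}]\,|\hat\phi(q)|^2\,dq$ and vanishes by dominated convergence together with the elementary bound $(m^2+|q|^2/R^2)^s - m^{2s} \le s\,m^{2s-2}|q|^2/R^2$ for $s\in(0,1]$, $m>0$ (direct scaling handling $m=0$). For the external term, $\mathrm{supp}(u_R)\subset \R^d\setminus B_{R_0}$ for $R$ large, and there $V=f_1+f_2$; Hölder gives $|\int V|u_R|^2|\lesssim \|f_j\|_{L^{p_j}}\||u_R|^2\|_{L^{p_j'}} \sim R^{-d/p_j} \to 0$, the $p_j=\infty$ case being handled by splitting according to the level sets of $f_j$. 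For the interaction, split $w = w\,\1_{B_{R_0}} + f_3 + f_4$; Young's inequality applied to $\iint f_j(x-y)|u_R|^2(x)|u_R|^2(y)$ gives a decay $O(R^{-d/p_j})$, while the local part obeys $|\iint w\,\1_{B_{R_0}}(x-y)|u_R|^2|u_R|^2| \le \|w\|_{L^1(B_{R_0})}\||u_R|^2*|u_R|^2\|_\infty = O(R^{-d})$.

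\emph{Main obstacle.} The delicate step is justifying the local interaction bound when $w$ carries a nonintegrable singularity. I would circumvent this by first deducing $w \in L^1_{\mathrm{loc}}(\R^d)$ from the form-bound \eqref{eq:assumption-V-w-3}: testing $w_\pm \le \beta_\pm(-\Delta)^{s/2} + C$ against a smooth compactly supported $\phi$ with $|\phi|^2 \ge \1_{B_R}$ gives $\int_{B_R}|w| \le (\beta_++\beta_-)\|\phi\|_{H^{s/2}}^2 + 2C\|\phi\|^2 < \infty$. Once this is known, the spreading factor $|u_R|^2 \sim R^{-d}$ dominates the local singularity and the $O(R^{-d})$ estimate goes through, completing the upper bound.
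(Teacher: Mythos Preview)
Your monotonicity argument is the same as the paper's. Your upper bound is correct but follows a genuinely different route. The paper fixes a single smooth bump $v\in C^\infty_c(\R^d)$, builds the trial state $u=k^{-1/2}\sum_{j=1}^k v(\cdot-x_n^j)$ out of $k$ widely separated translates, and observes that in the limit of large separation the Hartree energy decouples into
\[
\frac{E^V(N)}{N}\le \langle v,Kv\rangle+\frac{1}{2k}\iint w(x-y)|v(x)|^2|v(y)|^2\,dx\,dy.
\]
Sending $k\to\infty$ kills the interaction (the self-interaction of a \emph{fixed} $v$ is finite directly from the form bound~\eqref{eq:assumption-V-w-3}, so no $L^1_{\rm loc}$ information is needed), and then taking the infimum over $v$ recovers $\inf\sigma(K)=0$. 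Your single spreading-and-translating bump $u_R$ achieves the same endpoint in one stroke, at the price of having to control the local part of the interaction under dilation; this is precisely why you need the auxiliary observation $w\in L^1_{\rm loc}(\R^d)$, which you extract correctly from the form bound. Both arguments are sound; the paper's is slightly more economical in that it never touches the local singularity of $w$, while yours is more self-contained and avoids the two-parameter limit.
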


\begin{proof}
We have already explained in the introduction that the energy per particle is increasing because of the monotonicity of the sets $\cP^{(k)}_N$. It remains to show that $E^V(N)\leq0$ for all $N\geq1$. By taking a Hartree state $u^{\otimes N}$ with $u\in H^s(\R^d)$, we find
\begin{equation*}
\frac{E^V(N)}N\leq \pscal{u,Tu}+\frac12\iint w(x-y)|u(x)|^2|u(y)|^2\,dx\,dy. 
\end{equation*}
We fix a smooth normalized function of compact support $v\in C^\ii_c(\R^d)$ and take $u=k^{-1/2}\sum_{j=1}^kv(\cdot-x_n^j)$. In the limit where $|x_n^j|\to\ii$ as $n\to\ii$ and $|x^j_n-x^{j'}_n|\to\ii$ for $j\neq j'$, the energy of $u$ becomes the sum of the energies of the pieces, leading to
$$\frac{E^V(N)}N\leq \pscal{v,Kv}+\frac1{2k}\iint w(x-y)|v(x)|^2|v(y)|^2\,dx\,dy.$$
We have used here that
$$\lim_{|z|\to\ii}\iint w(x-y)|v(x)|^2|v(y+z)|^2\,dx\,dy=0$$
for any fixed function $v\in C^\ii_c(\R^d)$, which follows from our assumption~\eqref{eq:assumption-V-w-2} on $w$ at infinity.
Taking now $k\to\ii$ leads to
$E^V(N)/N\leq\inf\sigma((m^2-\Delta)^{s/2}-m^s)=0$,
which concludes the proof.
\end{proof}

Now we turn to the elementary properties of Hartree functional 
\begin{equation*}
\cEH^V(u)=\pscal{u,\left(\big(m^2-\Delta)^{s/2}-m^s+V\right)u}+\frac12\iint w(x-y)|u(x)|^2|u(y)|^2\,dx\,dy.
\end{equation*}
By~\eqref{eq:lower_bd_H_N_V}, it is bounded from below on bounded subsets of $L^2(\R^d)$. We denote its infimum on the sphere of radius $\sqrt{\lambda}$ by
$$\eH^V(\lambda):=\inf_{\substack{u\in H^s(\R^d)\\ \norm{u}^2=\lambda}}\cEH^V(u).$$
Some basic properties of Hartree's theory are given in the following lemma.

\begin{lemma}[\textbf{Hartree theory}]\label{le:Hartree}\mbox{}\\
Under Assumptions \eqref{eq:assumption-V-w-1}, \eqref{eq:assumption-V-w-2} and \eqref{eq:assumption-V-w-3}, we have 
\begin{equation}
\eH^V(1)\leq \eH^V(\lambda) + e_{\rm H}^0 (1-\lambda) \le \eH^0(\lambda)\leq0 
\label{eq:large_binding_Hartree}
\end{equation}
for all $\lambda\in [0,1]$. Moreover, if $$\eH^V(1)< \eH^V(\lambda) + e_{\rm H}^0 (1-\lambda)$$
for all $\lambda\in [0,1)$, then all the minimizing sequences for $e_{\rm H}(1)$ are relatively compact in $H^{s}(\R^d)$ and converge, after extraction of a subsequence, to a minimizer in $S\gH \cap Q(T)$. 
\end{lemma}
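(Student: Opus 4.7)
The plan splits the result into the inequality chain~\eqref{eq:large_binding_Hartree} (which requires no binding hypothesis) and the compactness of minimizing sequences under the strict binding assumption.

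For~\eqref{eq:large_binding_Hartree}, I would treat each link separately via explicit trial states. For the subadditivity $\eH^V(1)\leq \eH^V(\lambda)+\eH^0(1-\lambda)$, pick $u_\eps,v_\eps\in C_c^\infty(\R^d)$ with $\|u_\eps\|^2=\lambda$, $\|v_\eps\|^2=1-\lambda$, and energies within $\eps$ of $\eH^V(\lambda)$ and $\eH^0(1-\lambda)$ respectively (using density of $C_c^\infty$ in $H^s$ together with continuity of the functionals on $H^s$-bounded sets, provided by the form bounds~\eqref{eq:assumption-V-w-3}); then consider $u_\eps+v_\eps(\cdot-z_n)$ with $|z_n|\to\infty$. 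For $n$ large the two pieces have disjoint supports, so the trial state has exact $L^2$-norm $1$ and the one-body and self-interaction energies are additive; the cross interaction $\iint w(x-y)|u_\eps(x)|^2|v_\eps(y-z_n)|^2\,dx\,dy$ and the potential contribution $\int V\,|v_\eps(\cdot-z_n)|^2$ both tend to $0$ by the decay~\eqref{eq:assumption-V-w-2}. The middle inequality $\eH^V(\lambda)+\eH^0(1-\lambda)\leq\eH^0(\lambda)$ then follows by combining $\eH^V(\lambda)\leq\eH^0(\lambda)$ (translating a minimizing sequence of $\eH^0(\lambda)$ to infinity, where $V$ vanishes) with the non-positivity $\eH^0(1-\lambda)\leq 0$. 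The latter comes from the dilated trial function $u_R(x)=(\lambda/R^d)^{1/2}\phi(x/R)$ with $\phi\in C_c^\infty$: a Fourier computation shows $\langle u_R,Ku_R\rangle\to 0$ as $R\to\infty$ since $(m^2+|\xi|^2)^s-m^{2s}\to 0$ at low frequencies, while the interaction vanishes by dominated convergence using~\eqref{eq:assumption-V-w-2} and~\eqref{eq:assumption-V-w-3}.

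For the compactness part, I would run a concentration-compactness argument. Let $(u_n)\subset H^s(\R^d)$ be minimizing for $\eH^V(1)$. The Hartree analogue of~\eqref{eq:lower_bd_H_N_V} shows $(u_n)$ is bounded in $H^s$; extract a subsequence with $u_n\rightharpoonup u$ weakly in $H^s$ and set $\lambda=\|u\|_{L^2}^2\in[0,1]$, $v_n=u_n-u\rightharpoonup 0$. The key step is the energy splitting
\[\cEH^V(u_n)=\cEH^V(u)+\cEH^0(v_n)+o(1),\]
where the kinetic cross terms vanish by weak convergence, the $V$ cross terms vanish by combining local Rellich compactness ($v_n\to 0$ in $L^p_{\rm loc}$) with the decay of $V$ at infinity, and a Brezis--Lieb-type expansion of the nonlinear term reduces matters to cross terms of the shape $\iint w(x-y)|u(x)|^2|v_n(y)|^2\,dx\,dy$ that vanish by the same mechanism applied to $w*|u|^2$. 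Since $\cEH^V(u)\geq\eH^V(\lambda)$ and $\liminf_n\cEH^0(v_n)\geq\eH^0(1-\lambda)$, passing to the limit yields $\eH^V(1)\geq\eH^V(\lambda)+\eH^0(1-\lambda)$. The strict binding inequality then forces $\lambda=1$, hence $u_n\to u$ strongly in $L^2$; the convergence of the full energy, combined with weak lower semicontinuity of $\langle\cdot,K\cdot\rangle$, upgrades this to strong convergence in $H^s$, and the limit $u\in Q(T)\cap S\gH$ is a minimizer of $\eH^V(1)$.

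The main technical obstacle is the energy splitting of the nonlinear term $\iint w(x-y)|u_n|^2|u_n|^2$. Since $w$ need not lie in $L^1$ and may carry local singularities controlled only by the kinetic form~\eqref{eq:assumption-V-w-3}, a direct Brezis--Lieb argument does not apply. I would decompose $w=w\1_{B_R}+w\1_{\R^d\setminus B_R}$: the tail is handled via~\eqref{eq:assumption-V-w-2}, so that $w*|u|^2$ is a bounded function vanishing at infinity (modulo a controlled local singularity), while the bulk part is a bounded bilinear form on $H^s$ whose four cross terms in the expansion of $|u+v_n|^2|u+v_n|^2$ vanish thanks to local Rellich compactness. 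The bookkeeping of these cross terms is delicate but essentially routine once the decomposition of $w$ is in place.
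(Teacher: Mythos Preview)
Your proposal is correct and follows essentially the same route as the paper, which omits the proof and merely refers back to the trial-state argument of Lemma~\ref{lem:increasing} for the chain~\eqref{eq:large_binding_Hartree} and to the concentration-compactness/localization procedure of~\cite{LenLew-11} for the compactness statement. The only minor difference is that for $\eH^0(\lambda)\le 0$ the paper splits mass into $k$ far-apart copies (making the interaction term $O(1/k)$) rather than using your dilation $u_R$, and for the compactness step the cited reference works with smooth spatial cutoffs $\chi_R,\eta_R$ and the fractional IMS formula rather than the weak-limit/Brezis--Lieb decomposition you outline; both implementations are standard and lead to the same conclusion.
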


\begin{proof}
That $\eH^V(\lambda)\leq \eH^0(\lambda)\leq0$ was already shown in the proof of Lemma~\ref{lem:increasing}. 
The rest follows from the concentration-compactness method and the localization procedure explained in the appendix of~\cite{LenLew-11}. The proof will be omitted.
\end{proof}

In the next subsections we prove the validity of Hartree's theory for the many-particle operator $H^V_N$. We first deal with repulsive systems, then consider the purely translation-invariant case $V\equiv0$, before we turn to the general case.

\subsection{Weakly lower semi-continuous systems}\label{sec:wlsc}

We deal here with the case in which the particles escaping to infinity cannot form bound states. This is formalized by saying that $E^0(N)=0$ for all $N\geq2$ which, in the mean-field regime studied in this paper, turns out to be equivalent to $E^0(2)=0$.

\begin{lemma}[\textbf{Absence of bound states at infinity}]\label{lem:no-bound-state}\mbox{}\\
The following are equivalent:
\begin{enumerate}
\item[$(i)$] the operator $(m^2-\Delta)^{s}-m^{2s}+w/2$ has no negative eigenvalue on $L^2(\R^d)$;
\item[$(ii)$] $E^0(2)=0$;
\item[$(iii)$] $E^0(N)=0$ for all $N\geq1$.
\end{enumerate}
\end{lemma}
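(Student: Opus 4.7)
The plan is to close the cycle $(iii)\Rightarrow(ii)\Rightarrow(i)\Rightarrow(iii)$. Since $w$ decays at infinity by~\eqref{eq:assumption-V-w-2}, Weyl's theorem gives $\sigma_{\rm ess}(K+w/2)=[0,\infty)$ where $K=(m^2-\Delta)^{s}-m^{2s}$, so $(i)$ is equivalent to $K+w/2\geq 0$ as a quadratic form on $L^2(\R^d)$. The implication $(iii)\Rightarrow(ii)$ is tautological.

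For $(i)\Rightarrow(iii)$, I would symmetrize over pairs using the identity $H_N^0=\frac{1}{N-1}\sum_{j<k}(K_j+K_k+w(x_j-x_k))$. The key observation is that, for any $\Psi\in\gH^N$ and any $j\neq k$, freezing every coordinate except $x_j$, the one-particle operator $K_{x_j}+\tfrac12 w(x_j-x_k)$ acting on $L^2_{x_j}$ is unitarily equivalent to $K+w/2$ via translation by $x_k$, hence non-negative by~$(i)$. Integrating out the remaining variables then gives
\[
\langle\Psi,K_j\Psi\rangle+\tfrac12\langle\Psi,w_{jk}\Psi\rangle\geq 0,\qquad \forall j\neq k.
\]
Summing over all ordered pairs $(j,k)$ with $j\neq k$ (using $w_{jk}=w_{kj}$ from the evenness of $w$) and dividing by $N-1$ produces $\langle\Psi,H_N^0\Psi\rangle\geq 0$, so $E^0(N)\geq 0$. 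Combined with $E^0(N)\leq 0$ from Lemma~\ref{lem:increasing}, this forces $E^0(N)=0$.

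For $(ii)\Rightarrow(i)$, I would argue by contrapositive with a center-of-mass trial state. If $K+w/2$ has a negative eigenvalue, Perron--Frobenius applied to the positivity-improving semigroup $e^{-t(K+w/2)}$ (obtained via Trotter from the positive subordinator kernel of $e^{-tK}$ combined with multiplication by $e^{-tw/2}$) yields a unique positive ground state $\phi$, which is moreover even because $K+w/2$ commutes with the parity $y\mapsto -y$. Take
\[
\Psi_R(x_1,x_2):=\chi_R\!\left(\tfrac{x_1+x_2}{2}\right)\phi(x_1-x_2),\qquad \chi_R(X)=R^{-d/2}\chi(X/R),
\]
with $\chi\in C_c^\infty(\R^d)$ fixed and $L^2$-normalized; evenness of $\phi$ makes $\Psi_R$ symmetric. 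In the center-of-mass and relative variables $(X,y)=((x_1+x_2)/2,x_1-x_2)$ with dual momenta $(P,q)$, one has $K_1+K_2=f(P/2+q)+f(P/2-q)$ with $f(p)=(m^2+p^2)^{s}-m^{2s}$, and using evenness of $\hat\phi$ to merge the two kinetic contributions,
\[
\langle\Psi_R,H_2^0\Psi_R\rangle=2\iint f\!\left(\tfrac{P}{2}+q\right)|\hat\chi_R(P)|^2|\hat\phi(q)|^2\,dP\,dq+\langle\phi,w\phi\rangle.
\]
As $R\to\infty$, the probability measure $|\hat\chi_R(P)|^2\,dP$ concentrates at $P=0$, and dominated convergence (with the bound $f(P/2+q)\leq C(1+|q|^{2s})$ for $|P|\leq 1$, integrable against $|\hat\phi(q)|^2\,dq$ since $\phi\in H^s$) gives the limit $2\langle\phi,(K+w/2)\phi\rangle<0$, contradicting $E^0(2)=0$.

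The main technical nuance is the evenness of $\phi$: an odd eigenfunction would make $\Psi_R$ antisymmetric, inadmissible for bosons. This relies on the positivity-improving property of $e^{-t(K+w/2)}$, which is standard for Schr\"odinger operators built from subordinate Brownian motion generators but should be verified under the weak hypotheses on $w$ permitted here.
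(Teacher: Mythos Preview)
Your argument is correct but takes a genuinely different route from the paper's. The paper proves $(i)\Leftrightarrow(ii)$ in one stroke by an operator identity: after removing the center of mass as in~\cite[Appendix~A]{LewSieVug-97}, the positivity of $H_2^0$ reduces to that of the zero-momentum fiber, which is exactly $2(K+w/2)$. It then obtains $(ii)\Rightarrow(iii)$ from the monotonicity of $E^0(N)/N$ established in Lemma~\ref{lem:increasing}. Your $(i)\Rightarrow(iii)$ via the pair decomposition $H_N^0=\frac{1}{N-1}\sum_{j<k}(K_j+K_k+w_{jk})$ together with the one-variable bound $K_{x_j}+\tfrac12 w(x_j-x_k)\geq0$ is more elementary and more direct: it bypasses both the center-of-mass machinery and the monotonicity lemma, and gives $H_N^0\geq0$ on the full (not just bosonic) $N$-particle space for every $N$ at once. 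Conversely, your $(ii)\Rightarrow(i)$ is heavier than the paper's: the trial state $\chi_R(X)\phi(y)$ is essentially a variational reconstruction of one half of the center-of-mass removal, and you are forced to invoke Perron--Frobenius for $K+w/2$ to secure an \emph{even} $\phi$ so that $\Psi_R$ is bosonic. The paper delegates this issue to the cited reference, but both approaches ultimately rest on the same positivity input (the heat kernel of $K$ is strictly positive by subordination, so Perron--Frobenius is available). Your flagged caveat about verifying positivity-improvement under the weak hypotheses on $w$ is well taken but routine in this setting.
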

\begin{proof}
Since $E^0(2)$ is the bottom of the spectrum of $H^0_2$, $E^0(2)=0$ is clearly equivalent to $H^0_2=(m^2-\Delta_x)^{s}+(m^2-\Delta_y)^{s}-2m^{2s}+w(x-y)\geq0$. By removing the center of mass similarly as in~\cite[Appendix A]{LewSieVug-97}, this is the same as $(m^2-\Delta)^{s}-m^{2s}+w/2\geq0$. Finally, since $N\mapsto E^0(N)N^{-1}$ is non-decreasing and $\leq0$ by Lemma~\ref{lem:increasing}, $E^0(2)=0$ clearly implies $E^0(N)=0$ for all $N\geq2$.
\end{proof}

Now, the following says that, under one of the equivalent assumptions of Lemma~\ref{lem:no-bound-state}, the energy is a weakly  lower semi-continuous function of the one- and two-particle density matrices. 

\begin{proposition}[\textbf{Weak lower semi-continuity of the energy}]\label{prop:wlsc}\mbox{}\\
We assume that $V$ and $w$ satisfy \eqref{eq:assumption-V-w-1}--\eqref{eq:assumption-V-w-2}--\eqref{eq:assumption-V-w-3}, and that $w$ satisfies one of the equivalent assumptions of Lemma~\ref{lem:no-bound-state}. Then the energy of $H^V_N$ is a weakly lower semi-continuous function of the first two density matrices: If we have two sequences $\gamma^{(1)}_N,\gamma^{(2)}_N\geq0$ with $\tr_{\gH^2}\gamma^{(2)}_N=1$ and $\gamma_N^{(1)}=\Tr_{2}\gamma^{(2)}_N$, such that $\gamma^{(k)}_N\wto_*\gamma^{(k)}$ weakly-$\ast$ in $\gS^1(\gH^k)$ for $k=1,2$, then
\begin{equation}
\liminf_{N\to \infty} \left( \Tr_\gH[T \gamma_N^{(1)}]+\frac{1}{2}\Tr_{\gH^2}[w \gamma_N^{(2)}] \right) \ge \Tr_\gH[T \gamma^{(1)}]+\frac{1}{2}\Tr_{\gH^2}[w \gamma^{(2)}].
\label{eq:wlsc} 
\end{equation}
\end{proposition}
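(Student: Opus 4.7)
The plan is to separately handle the external potential contribution and the combined kinetic--interaction contribution, with the hypothesis $K_1+K_2+w\geq 0$ doing the work for the latter. Without loss of generality the liminf on the left of~\eqref{eq:wlsc} is finite along a subsequence, so~\eqref{eq:lower_bd_H_N_V} gives the uniform kinetic bound $\Tr K\gamma_N^{(1)}\leq C$ and hence $\Tr K\gamma^{(1)}\leq C$ by Fatou for trace-class operators.

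I would first establish the genuine weak continuity $\Tr V\gamma_N^{(1)}\to\Tr V\gamma^{(1)}$ by writing $V=V\mathds{1}_{B_R}+V\mathds{1}_{B_R^c}$. The near part is compact relative to the kinetic energy: by~\eqref{eq:assumption-V-w-3} and Rellich--Kondrachov, $(1+K)^{-1/2}V\mathds{1}_{B_R}(1+K)^{-1/2}$ is a compact operator on $L^2(\R^d)$, and its pairing with the bounded weak-$\ast$ convergent sequence $(1+K)^{1/2}\gamma_N^{(1)}(1+K)^{1/2}$ converges. The tail is form-bounded by $\delta(R)(1+K)$ with $\delta(R)\to 0$ by~\eqref{eq:assumption-V-w-2} (splitting the $L^{p_1}$ and $L^\infty$-vanishing parts, both of small tail on $B_R^c$), so its contribution is $O(\delta(R))$ uniformly in $N$ and also for $\gamma^{(1)}$, and vanishes as $R\to\infty$.

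The crucial step is the kinetic plus interaction, which at finite $N$ equals $\tfrac12\Tr(K_1+K_2+w)\gamma_N^{(2)}$ by $\gamma_N^{(1)}=\Tr_2\gamma_N^{(2)}$. Fatou for the non-negative operator $A:=K_1+K_2+w$ immediately gives $\liminf\tfrac12\Tr A\gamma_N^{(2)}\geq\tfrac12\Tr A\gamma^{(2)}$, but this only contributes $\Tr K(\Tr_2\gamma^{(2)})$ on the kinetic side, missing the deficit $\Tr K(\gamma^{(1)}-\Tr_2\gamma^{(2)})\geq 0$ carried by the mass that escapes in the partial trace limit. To capture this deficit I would introduce a smooth partition $\chi_R^2+\eta_R^2=1$ with $\chi_R\in C_c^\infty$ supported in $B_{2R}$ and equal to $1$ on $B_R$, decompose $\gamma_N^{(2)}$ along the four operators $p_i\in\{\chi_R\otimes\chi_R,\chi_R\otimes\eta_R,\eta_R\otimes\chi_R,\eta_R\otimes\eta_R\}$ (which satisfy $\sum p_i^2=1$), and analyze each sandwich $p_i\gamma_N^{(2)}p_i$ separately. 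The inside--inside piece converges strongly in trace norm by compactness of $(\chi_R\otimes\chi_R)(1+K_1+K_2)^{-1/2}$, and its $H_2$-energy passes to the limit; the outside--outside piece has non-negative $A$-energy by hypothesis together with a uniformly small $V$-contribution; the two mixed pieces and the IMS commutators from $[K_j,\chi_R]$ are of order $R^{-2s}$ and vanish as $R\to\infty$. Combining this with the direct bound $\liminf\Tr K\gamma_N^{(1)}\geq \Tr K\gamma^{(1)}$ applied to the staying-mass part recovers the kinetic deficit and closes the gap.

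The essential obstacle is exactly this accounting of the escaping mass. Because $w(x_1-x_2)$ is translation invariant, a blob of mass moving to infinity carries its own interaction energy along, unlike $V$ which vanishes at infinity; the no-bound-state hypothesis $K+w/2\geq 0$ is precisely what forces this carried energy to be non-negative, without which an explicit counterexample is built from a bound-state component escaping to infinity. The scheme is a spatial-localization refinement of the additive rearrangement~\eqref{eq:trick_pass_limit} from the trapped case, where the compact resolvent of $T$ ensured strong convergence of the density matrices and made spatial localization unnecessary.
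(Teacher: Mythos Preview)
Your overall architecture is the paper's: treat the $V$-contribution as genuinely weakly continuous (compact near the origin, small tail), localize via a smooth partition $\chi_R^2+\eta_R^2=1$, pass to the limit on the inside piece by strong local convergence, and throw away the outside piece using $H_2^0=K_1+K_2+w\geq 0$. The paper also handles the kinetic deficit, but in a cleaner way than you indicate: it keeps the kinetic energy at the one-body level throughout, uses the elementary inequality $\chi_R\gamma_N^{(1)}\chi_R-\Tr_2\big[\chi_R^{\otimes 2}\gamma_N^{(2)}\chi_R^{\otimes 2}\big]\geq 0$, and applies Fatou separately to $(T+C_0)$ against this non-negative one-body remainder and to $(T_1+T_2+2C_0+w)$ against $\chi_R^{\otimes 2}\gamma_N^{(2)}\chi_R^{\otimes 2}$.

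There is, however, a genuine gap in your treatment of the mixed pieces. Your claim that the $(\chi_R\otimes\eta_R)$ and $(\eta_R\otimes\chi_R)$ contributions are ``of order $R^{-2s}$'' is correct for the kinetic IMS commutators but \emph{not} for the interaction: the cross term
\[
\iint \chi_R^2(x)\,|w(x-y)|\,\eta_R^2(y)\,\rho_N^{(2)}(x,y)\,dx\,dy
\]
is not small uniformly in $N$ for fixed $R$. The supports of $\chi_R$ and $\eta_R$ overlap on an annulus where $|x-y|$ can be arbitrarily small, and even where $|x-y|\gtrsim R$ the decay hypothesis~\eqref{eq:assumption-V-w-2} only furnishes a form bound that, against a merely bounded kinetic trace, gives $O(1)$. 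The paper deals with this by a genuine concentration-compactness step: split $\eta_R^2=\eta_{4R}^2+(\eta_R^2-\eta_{4R}^2)$, control the first (far-field) part using $|x-y|\geq R$ and the smallness of $w$ there, and for the annular part introduce the monotone concentration functions $Q_N(R)=\iint |w(x-y)|\1(|y|\geq R)\rho_N^{(2)}$, pass to a subsequence via Helly's selection so that $Q_{N_k}(R)\to Q(R)$ pointwise, and conclude $\lim_R\lim_k\big(Q_{N_k}(R)-Q_{N_k}(8R)\big)=0$. Without this (or an equivalent argument), if you instead bound the mixed pieces below by $0$ using $A\geq 0$, you discard precisely the kinetic energy you then claim to ``recover'' in your last sentence; but that energy has already been spent and there is no independent reservoir of $\Tr K\gamma_N^{(1)}$ left to appeal to.
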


By using Proposition~\ref{prop:wlsc} we can now easily prove the main theorem for such systems.

\begin{theorem}[\textbf{Validity of Hartree's theory in the wlsc case}]\label{thm:wlsc}\mbox{}\\
We assume that $V$ and $w$ satisfy \eqref{eq:assumption-V-w-1}--\eqref{eq:assumption-V-w-2}--\eqref{eq:assumption-V-w-3}, and that $w$ satisfies one of the equivalent assumptions of Lemma~\ref{lem:no-bound-state}.

\medskip

\noindent $(i)$ We always have
\begin{equation}
\boxed{\lim_{N\to\ii}\frac{E^V(N)}{N}=\eH^V(1).}
\label{eq:limit_energy_wlsc}
\end{equation}

\medskip

\noindent $(ii)$ Denote by $\Psi_N$ a sequence of (normalized) approximate ground states in $\gH^N$, that is, such that $\pscal{\Psi_N,H_N^V\Psi_N}= E^V(N)+o(N)$, and by $\gamma^{(k)}_N$ the corresponding density matrices. Then there exists a subsequence $(N_j)$ and a Borel probability measure $\mu$ on the unit ball $B\gH$, supported on the set $\cM^V=\{u\in B\gH\ :\ \cEH^V(u)=\eH^V(\norm{u}^2)=\eH^V(1)\}$,
such that 
\begin{equation}
\boxed{\gamma^{(k)}_{N_j}\wto_*\int_{\cM^V}|u^{\otimes k}\rangle\langle u^{\otimes k}|\,d\mu(u)}
\label{eq:limit_weak}
\end{equation}
weakly-$\ast$ in $\gS^1(\gH^k)$, for all $k\geq1$.

\medskip

\noindent $(iii)$ Assume now that the binding inequality $\eH^V(1)<\eH^V(\lambda)$ is satisfied for all $0\leq\lambda<1$. Then the previous measure $\mu$ is supported on $S\gH$ and the limit~\eqref{eq:limit_weak} for $\gamma^{(k)}_{N_j}$ is strong in the trace-class. 
\end{theorem}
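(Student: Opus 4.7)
The plan is to combine the trial-state upper bound, the weak quantum de Finetti Theorem~\ref{thm:weak-De-Finetti}, and the assumed weak lower semi-continuity of Proposition~\ref{prop:wlsc}. The upper bound $E^V(N)/N \le \eH^V(1)$ is immediate: testing $H_N^V$ against a Hartree state $u^{\otimes N}$ with $\norm{u}=1$ gives energy per particle exactly $\cEH^V(u)$, and minimizing over $u$ yields $\eH^V(1)$. So all the real work is in proving the matching lower bound, from which the de Finetti structure of the limits and the convergence statements will follow almost for free.

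For the lower bound, I would fix a sequence $\Psi_N$ of approximate ground states. The coercivity estimate~\eqref{eq:lower_bd_H_N_V} combined with $E^V(N)\le 0$ (Lemma~\ref{lem:increasing}) gives a uniform bound on $\Tr K\gamma^{(1)}_N$, so each $\gamma^{(k)}_N$ is bounded in $\gS^1(\gH^k)$. A diagonal extraction provides a subsequence $(N_j)$ along which $\gamma^{(k)}_{N_j}\wto_\ast \gamma^{(k)}$ weakly-$\ast$ in the trace class for every $k\ge 1$. The weak quantum de Finetti Theorem~\ref{thm:weak-De-Finetti} then supplies a Borel probability measure $\mu$ on $B\gH$, supported in $Q(K)=H^s(\R^d)$, such that $\gamma^{(k)}=\int_{B\gH}|u^{\otimes k}\rangle\langle u^{\otimes k}|\,d\mu(u)$ for every $k$.

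Applying Proposition~\ref{prop:wlsc} to $(\gamma^{(1)}_{N_j},\gamma^{(2)}_{N_j})$ yields
\begin{equation*}
\liminf_{j\to\ii}\frac{E^V(N_j)}{N_j}\ge \Tr T\gamma^{(1)}+\frac12 \Tr w\gamma^{(2)}=\int_{B\gH}\cEH^V(u)\,d\mu(u).
\end{equation*}
The key step is then the pointwise lower bound $\cEH^V(u)\ge \eH^V(1)$ for every $u\in B\gH$: writing $\lambda=\norm{u}^2\in[0,1]$, by definition $\cEH^V(u)\ge \eH^V(\lambda)$, and by the large binding inequality~\eqref{eq:large_binding_Hartree} of Lemma~\ref{le:Hartree} together with $\eH^0\le 0$, one has $\eH^V(\lambda)\ge \eH^V(1)-\eH^0(1-\lambda)\ge \eH^V(1)$. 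Integrating against the probability measure $\mu$ gives $\int \cEH^V\,d\mu\ge \eH^V(1)$, which closes the lower bound, proves~\eqref{eq:limit_energy_wlsc}, and forces $\mu$ to be supported on $\cM^V=\{u\in B\gH:\cEH^V(u)=\eH^V(\norm{u}^2)=\eH^V(1)\}$, proving (ii).

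For (iii), the strict binding hypothesis $\eH^V(1)<\eH^V(\lambda)$ for $\lambda<1$ forces $\norm{u}=1$ for $\mu$-a.e.~$u$ in the equality case of the above chain, so $\mu$ is supported on $S\gH$. By Corollary~\ref{cor:strong_CV}, the weak-$\ast$ convergence of the density matrices upgrades to strong convergence in $\gS^1(\gH^k)$ for every $k$. The only nontrivial ingredient of the plan is Proposition~\ref{prop:wlsc}, whose proof (treated separately) relies on the wlsc of the one-body part plus the nonnegativity $H_2^0\ge 0$ to control the interaction despite possible loss of mass at infinity; once it is in hand, the conceptual point is that the non-strict binding inequality together with $\eH^0\le 0$ is already enough for the lower bound, so no additional compactness argument is needed in this wlsc setting.
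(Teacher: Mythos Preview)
Your proposal is correct and follows essentially the same approach as the paper's proof: extract weak-$\ast$ limits of the density matrices, apply the weak quantum de Finetti theorem, invoke Proposition~\ref{prop:wlsc} to pass to the liminf, and conclude via the pointwise inequality $\cEH^V(u)\ge \eH^V(\norm{u}^2)\ge \eH^V(1)$ from Lemma~\ref{le:Hartree}. Your treatment of (iii) via Corollary~\ref{cor:strong_CV} is also exactly what the paper intends.
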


The proof of this theorem is an immediate consequence of Proposition~\ref{prop:wlsc} and of the weak de Finetti Theorem~\ref{thm:weak-De-Finetti}.

\begin{proof}
Let $\{\Psi_N\}$ be a sequence of approximate ground states as in the statement $(ii)$. Extracting subsequences if necessary, we may assume that $\gamma^{(k)}_N\wto_\ast\gamma^{(k)}$ for all $k\geq1$. By the weak de Finetti Theorem~\ref{thm:weak-De-Finetti}, there exists a Borel probability measure $\mu$ on $B\gH\cap H^s(\R^d)$ such that 
\bq \label{eq:conv-general}
\gamma^{(k)}=\int_{B\gH}d\mu(u)\;|u^{\otimes k}\rangle\langle u^{\otimes k}|
\eq
for all $k\ge 1$. From Proposition~\ref{prop:wlsc}, we have
\begin{multline*}
 \lim_{N\to \infty} \frac{E(N)}{N} = \lim_{N\to \infty} \left( \Tr_\gH[T \gamma_N^{(1)}]+\Tr_{\gH^2}[w \gamma_N^{(2)}] \right)\\
 \ge \Tr_\gH[T \gamma^{(1)}]+\Tr_{\gH^2}[w \gamma^{(2)}] = \int_{B\gH} d\mu(u) \cEH^V(u) \ge \int_{B\gH} d\mu(u) \eH^V(\|u\|^2).  
\end{multline*}
By~\eqref{eq:large_binding_Hartree} we have $\eH^V(\norm{u}^2)\geq \eH^V(1)$ and therefore the result follows from the fact that $\mu$ is a probability measure.
\end{proof}

\begin{remark}[An abstract result]
It is clear that the proof of Theorem~\ref{thm:wlsc} is general and there is a similar statement in a purely abstract setting. The only two properties which we have used are $(i)$ that the energy of $H^V_N$ is a weakly lower semi-continuous function of the first two density matrices; and $(ii)$ that $\eH^V(\lambda)\geq\eH^V(1)$ for all $0\leq \lambda\leq1$. The first means that particles escaping to infinity always carry a non-negative energy. On the other hand, the second property  means that, in Hartree theory, we can send some of the particles to infinity without any cost. The two assumptions are complementary and both are necessary here. By adding a constant to $T$, we can always ensure either $(i)$ or $(ii)$, but not both at the same time.
\end{remark}

Before proving Proposition~\ref{prop:wlsc} under the general assumptions of Lemma~\ref{lem:no-bound-state}, let us remark that this proof is obvious if $w\geq0$. In this case~\eqref{eq:wlsc} follows immediately from Fatou's lemma for operators and from the fact that the essential spectrum of $T$ starts at 0, that is, $\1(T\leq0)$ is a compact operator. An example of such purely repulsive systems is a bosonic atom.

\begin{example}[\textbf{Bosonic atoms}]\mbox{}\\
After scaling, an atom with a classical nucleus at the origin in $\R^3$ and $N$ ``bosonic electrons" is described by the Hamiltonian
\begin{equation} \label{eq:HN-bosonic-atom}
H_N:=\sum_{j=1}^N\left(-\Delta_i -\frac{1}{t|x_i|} \right)+\frac{1}{N-1}\sum_{1\leq k<\ell\leq N}\frac{1}{|x_k-x_\ell|},
\end{equation}
acting on $\gH^N = \bigotimes_{s}^N L^2(\R^3)$, where $t=(N-1)/Z$ is the ratio between the number of electrons and the nuclear charge. This Hamiltonian satisfies all our assumptions (\ref{eq:assumption-V-w-1})--(\ref{eq:assumption-V-w-2})--(\ref{eq:assumption-V-w-3}) with $d=3$, $s=1$, $V(x)=-(t|x|)^{-1}$ and $w(x)=|x|^{-1}$. Since $w\geq0$, the equivalent assumptions of Lemma~\ref{lem:no-bound-state} are also satisfied and~\eqref{eq:wlsc} follows immediately.

For any fixed $t>0$, the Hartree energy $e^V_{\rm H}(\lambda)$ is decreasing on $[0,\min(1,t_c/t]$ and constant on $[t_c/t,\ii)$, where $t_c\simeq 1.21$~\cite{BenLie-83,Baumgartner-84}. Furthermore, $\eH^V(\lambda)$ admits a unique minimizer $u_0$ if and only if $\lambda\leq t_c/t$. We deduce that the set $\cM^V$ appearing in the statement of Theorem~\ref{thm:wlsc} is
$$\cM^V=\{e^{i\theta} u_0\}_{\theta\in[0,2\pi)}\qquad\text{where}\qquad \int_{\R^3}|u_0|^2=\begin{cases}
1&\text{if $t\le t_c$,}\\
t_c/t&\text{if $t> t_c$,}\\
\end{cases}$$
where $u_0$ is the unique minimizer for $\eH^V(1)$ if $t\le t_c$ and for $\eH^V(t_c/t)$ if $t> t_c$.
Now we deduce from Theorem~\ref{thm:wlsc} that 
$$\gamma^{(k)}_N\begin{cases}
\to |u_0^{\otimes k}\rangle\langle u_0^{\otimes k}| & \text{strongly in $\gS^1(\gH^k)$, if $t\le t_c$,}\\[0.1cm]
\wto_\ast |u_0^{\otimes k}\rangle\langle u_0^{\otimes k}| & \text{weakly-$\ast$ in $\gS^1(\gH^k)$, if $t> t_c$,}\\
\end{cases}$$
for all $k\geq1$. To our knowledge, this is the first result about approximate minimizers for bosonic atoms beyond the critical value $t_c$.

The occurrence of complete Bose-Einstein condensation was shown by Benguria and Lieb in~\cite{BenLie-83} for $t<t_c$ (see also~\cite{Solovej-90,Bach-91,BacLewLieSie-93}). Their proof used the particular form of the Coulomb interaction through the Lieb-Oxford inequality. Recently, Kiessling also considered bosonic atoms in~\cite{Kiessling-12} for which he used the \emph{classical} de Finetti Theorem.
\end{example}

We end this section by providing the

\begin{proof}[Proof of Proposition~\ref{prop:wlsc}] If $\Tr_\gH[T\gamma_N^{(1)}]$ is not bounded, then there is nothing to prove. So we may assume $\Tr_\gH[T\gamma_N^{(1)}] \le C$. The proof is divided into two steps. 

\subsubsection*{\bf Step 1. Splitting of the energy.}
The first step is a classical result used many times in the literature, and which does not require that $w$ has no bound state at infinity. It is convenient to write $T=K+V$ with  $K=(m^2-\Delta)^{s/2}-m^s \ge 0.$

\begin{lemma}[\textbf{Splitting of the energy}]\label{lem:splitting}\mbox{}\\
Assume that \eqref{eq:assumption-V-w-1}, \eqref{eq:assumption-V-w-2} and \eqref{eq:assumption-V-w-3} are satisfied. Consider a smooth partition of unity $\chi^2+\eta^2=1$ where $\chi(x)=1$ if $|x|\le 1$ and $\chi(x)=0$ if $x\ge 2$, and define $\chi_R(x)=\chi(x/R)$ and $\eta_R=\eta(x/R)$. Let $\gamma^{(1)}_N$ and $\gamma^{(2)}_N$ be as in Proposition~\ref{prop:wlsc}. We have 
\begin{multline}
\label{eq:split-energy}
\liminf_{N\to \infty} \left( \Tr_\gH[T  
\gamma_N^{(1)}]+\frac{1}{2}\Tr_{\gH^2}[w \gamma_N^{(2)}] \right) \\
  \ge \liminf_{R\to \infty}\liminf_{N\to \infty} \left( \Tr_\gH[T  
\chi_R \gamma_N^{(1)} \chi_R ]+\frac{1}{2}\Tr_{\gH^2}[w  
\chi_R^{\otimes 2}\gamma_N^{(2)}\chi_R^{\otimes 2}] \right.\\
  \left. + \Tr_\gH[K \eta_R  
\gamma_N^{(1)}\eta_R]+\frac{1}{2}\Tr_{\gH^2}[w \eta_R^{\otimes  
2}\gamma_N^{(2)}\eta_R^{\otimes 2}] \right) .
\end{multline}
\end{lemma}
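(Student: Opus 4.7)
The plan is to prove the inequality \eqref{eq:split-energy} by an IMS-type localization argument: we split the total energy additively using the partition of unity $\chi_R^2 + \eta_R^2 = 1$, handle separately the fractional kinetic, external potential and pair interaction contributions, and show the localization errors vanish in the iterated limit $\liminf_{R\to\infty}\liminf_{N\to\infty}$.

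For the kinetic part, I would establish a fractional IMS-type lower bound
$$K = (m^2-\Delta)^{s} - m^{2s} \;\geq\; \chi_R K \chi_R + \eta_R K \eta_R - L_R,$$
where $L_R$ is a bounded self-adjoint operator of norm $O(R^{-2s})$. For $s=1$ this is the classical IMS identity with localization error $|\nabla\chi_R|^2 + |\nabla\eta_R|^2$; for general $s\in(0,1]$ it follows from the integral representation $(m^2-\Delta)^s = c_s \int_0^\infty t^{-1-s} (1 - e^{-t(m^2-\Delta)})\,dt$ (Balakrishnan), in the spirit of Lieb-Yau~\cite{LieYau-87}. Upon tracing against $\gamma_N^{(1)}$, the localization error $\Tr[L_R \gamma_N^{(1)}] \to 0$ as $R\to\infty$ uniformly in $N$, since $\Tr[K\gamma_N^{(1)}] \leq C$ by hypothesis. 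For the external potential, since $V$ is a multiplication operator,
$$\Tr[V\gamma_N^{(1)}] = \Tr[V\chi_R\gamma_N^{(1)}\chi_R] + \Tr[V\eta_R\gamma_N^{(1)}\eta_R],$$
and the decay assumption \eqref{eq:assumption-V-w-2} combined with standard H\"older/Sobolev arguments (using $p_1 > d/(2s)$) yields $|V|\1_{|x|>R}\leq \epsilon_R((-\Delta)^{s/2}+1)$ as quadratic forms, with $\epsilon_R\to 0$; together with the uniform kinetic bound this gives $\Tr[V\eta_R\gamma_N^{(1)}\eta_R]\to 0$ uniformly in $N$. Combining these produces the $\Tr[T\chi_R\gamma_N^{(1)}\chi_R] + \Tr[K\eta_R\gamma_N^{(1)}\eta_R]$ block on the right-hand side of \eqref{eq:split-energy}.

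The main obstacle is the pair interaction, for which the partition of unity in both variables together with the symmetry of $\gamma_N^{(2)}$ gives
$$\Tr[w\gamma_N^{(2)}] = \Tr[w\chi_R^{\otimes 2}\gamma_N^{(2)}\chi_R^{\otimes 2}] + \Tr[w\eta_R^{\otimes 2}\gamma_N^{(2)}\eta_R^{\otimes 2}] + 2\Tr[w(\chi_R\otimes\eta_R)\gamma_N^{(2)}(\chi_R\otimes\eta_R)].$$
Since $w$ is not sign-definite, the cross term must be shown to be negligible in the iterated limit. I would decompose $w = w^{\rm loc}_{R_\delta} + w^{\rm tail}_{R_\delta}$, with $w^{\rm tail}_{R_\delta} = w\1_{|z|>R_\delta}$. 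By \eqref{eq:assumption-V-w-2} and the infinitesimal relative form-boundedness of $L^p$ potentials ($p>d/(2s)$) with respect to the fractional Laplacian, one has $|w^{\rm tail}_{R_\delta}| \leq \delta(K_1 + K_2 + 1)$ on $\gH^2$ with $\delta \to 0$ as $R_\delta \to \infty$; using the uniform bound on the two-body kinetic energy, this piece contributes $O(\delta)$ to the cross term uniformly in $R, N$. The local piece $w^{\rm loc}_{R_\delta}$ is supported in $\{|x-y|\leq R_\delta\}$ and pointwise bounded away from $|x-y|\leq R_\delta$; combined with the factor $\chi_R(x)\eta_R(y)$ this restricts integration to $\{|x|\geq R-R_\delta,\ |y|\geq R\}$. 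Taking $\liminf_N$ first, weak-$\ast$ convergence $\gamma_N^{(2)}\wto_\ast \gamma^{(2)}$ combined with the relative compactness of $(\chi_R\otimes\eta_R)\1_{|x-y|\leq R_\delta}$ with respect to $K_1+K_2$ promotes the cross term to its limit against $\gamma^{(2)}$; then pointwise convergence $\chi_R(x)^2\eta_R(y)^2 \to 0$ together with dominated convergence on $\rho^{(2)}$ (whose total mass is finite and kinetic energy is bounded) kills this limit as $R\to\infty$.

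Combining the three localizations yields \eqref{eq:split-energy} with all errors vanishing in the iterated limit $\liminf_R\liminf_N$. The crucial difficulty, distinguishing this lemma from the trivial case $w\geq 0$ (where the cross term is automatically non-negative and can simply be dropped), is the control of the indefinite-sign interaction cross term, which combines the infinitesimal form-boundedness of the tail of $w$ (for the long-range part) with a compactness-plus-dominated-convergence argument for the locally singular part, both made possible by the uniform kinetic energy bound stemming from \eqref{eq:lower_bd_H_N_V}.
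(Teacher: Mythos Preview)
Your treatment of the kinetic and external potential terms matches the paper's: the fractional IMS estimate \eqref{eq:IMS-estimate} and the vanishing of $\Tr[V\eta_R\gamma_N^{(1)}\eta_R]$ via \eqref{eq:control_V_outside} are exactly what is used. The difference lies in how the interaction cross term is handled.

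The paper does \emph{not} pass to the weak limit in $N$ for the cross term. Instead it writes $\eta_R^2 = (\eta_R^2-\eta_{4R}^2)+\eta_{4R}^2$. On the support of $\chi_R^2(x)\eta_{4R}^2(y)$ one has $|x-y|\ge 2R$, so this piece is bounded by $\1_{|x-y|\ge R}|w(x-y)|$ and is dealt with exactly like $V$ at infinity. The annulus piece $\chi_R^2(x)[\eta_R^2(y)-\eta_{4R}^2(y)]$ is controlled by a concentration-compactness argument: the monotone functions $Q_N(R)=\iint |w(x-y)|\1(|y|\ge R)\rho_N^{(2)}\,dx\,dy$ are uniformly bounded (from the form bound on $|w|$), so Helly's selection principle yields a subsequence along which $Q_{N_k}(R)-Q_{N_k}(8R)\to 0$ in the iterated limit. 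This uses nothing about $w$ beyond the uniform bound $\Tr[|w|\gamma_N^{(2)}]\le C$.

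Your route---split $w$ by $|x-y|\lessgtr R_\delta$, treat the tail by infinitesimal form-boundedness, and for the local piece pass to the limit $N\to\infty$ via compactness before sending $R\to\infty$ by dominated convergence---is a reasonable alternative, but as written it has a gap. You invoke ``relative compactness of $(\chi_R\otimes\eta_R)\1_{|x-y|\le R_\delta}$ with respect to $K_1+K_2$'', which is true, but that is not what is needed: to get
\[
\Tr\big[w^{\rm loc}_{R_\delta}(\chi_R\otimes\eta_R)\gamma_N^{(2)}(\chi_R\otimes\eta_R)\big]\longrightarrow \Tr\big[w^{\rm loc}_{R_\delta}(\chi_R\otimes\eta_R)\gamma^{(2)}(\chi_R\otimes\eta_R)\big]
\]
you need $(K_1+K_2+1)^{-1/2}(\chi_R\otimes\eta_R)\,w^{\rm loc}_{R_\delta}\,(\chi_R\otimes\eta_R)(K_1+K_2+1)^{-1/2}$ to be compact, \emph{including} the possibly singular factor $w^{\rm loc}_{R_\delta}$. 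Assumption \eqref{eq:assumption-V-w-3} gives only a relative form bound $w_\pm\le\beta_\pm(-\Delta)^{s/2}+C$, with no local $L^p$ control on $w$ inside $B_R$; under that hypothesis alone the compactness of this sandwiched operator is not established (and it is not clear that $|w|\1_{|w|>M}$ has small relative form bound, which would be the natural approximation argument). If you additionally assume $w\in L^p_{\rm loc}$ with $p>d/(2s)$ your argument goes through, but the lemma is stated under the weaker \eqref{eq:assumption-V-w-3}. The paper's Helly argument sidesteps this entirely by never testing $w$ against the weak limit $\gamma^{(2)}$; it works directly with the uniformly bounded scalar quantities $Q_N(R)$.
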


\begin{proof}[Proof of Lemma~\ref{lem:splitting}]
Again we may assume that $\Tr_\gH[K \gamma_N^{(1)}]\leq C$.
To split the kinetic energy, we use the estimate
\bq \label{eq:IMS-estimate}
\lim_{R\to \infty} \|K- \chi_R K \chi_R -\eta_R K \eta_R \|_{L^2\to L^2}=0
\eq
which follows for instance from the fractional IMS formula which can be found in~\cite[Lemma 7]{LenLew-11}. Moreover, using that $K^{1/2}\gamma_N^{(1)}K^{1/2}$ is bounded in the trace-class, we have
\begin{equation}
\left|\Tr(V\eta_R \gamma_N^{(1)}\eta_R)\right|\leq C\norm{(1-\Delta)^{-s/2}\eta_RV\eta_R(1-\Delta)^{-s/2}}_{L^2\to L^2}.
\label{eq:control_V_outside} 
\end{equation}
Our assumption (\ref{eq:assumption-V-w-2}) on $V$ implies that the norm on the right side goes to 0 when $R\to\ii$ and we obtain 
$$ \lim_{R\to \infty} \liminf_{N\to \infty} \Tr(V\eta_R \gamma_N^{(1)}\eta_R)=0.$$
This means that the particles at infinity will never see the potential $V$.

To split the two-body term, we insert $1=(\chi_R^2(x)+\eta_R^2(x))(\chi_R^2(y)+\eta_R^2(y))$ and expand. We only have to control the cross term, namely we have to prove that (possibly for a subsequence)  
\bq \label{eq:2body-cross}
\lim_{R\to \infty} \lim_{N\to \infty} \left(  \iint \chi_R^2(x) |w(x-y)| \eta_{R}^2(y) \rho^{(2)}_N(x,y)\,dx\,dy \right) =0
\eq
where $\rho_N^{(2)}(x,y)=\gamma_N^{(2)}(x,y;x,y)$. To this end we write $\eta_R^2(y)=\eta_R^2(y)-\eta_{4R}^2(y)+\eta_{4R}^2(y)$ and remark that $\chi_R^2(x) |w(x-y)| \eta_{4R}^2(y) \le \1_{\{|x-y|\ge R\}} w(x-y)$ which can be easily controlled using the same argument as for~\eqref{eq:control_V_outside}. So it remains to treat the term with $\chi_R^2(x)[\eta_R^2 (y) -\eta_{4R}^2(y) ]$, for which we claim that
\begin{multline*}
\lim_{R\to \infty}\lim_{k\to \infty} \left(  \iint |w(x-y)| \chi_R^2(x)[\eta_R^2 (y) -\eta_{4R}^2(y) ] \rho^{(2)}_{N_k}(x,y)\,dx\,dy \right)\\
\leq\lim_{R\to \infty}\lim_{k\to \infty} \left(  \iint |w(x-y)| \1(R\leq|y|\leq 8R)\rho^{(2)}_{N_k}(x,y)\,dx\,dy \right)=0 
\end{multline*}
for an appropriate subsequence $(N_k)$. This is an adaption of Lions' concentration-compactness argument. Let us introduce the concentration functions
$$ Q_{N}(R):= \iint |w(x-y)| \1(|y|\geq R) \rho^{(2)}_N(x,y)\,dx\,dy. $$
For every $N$, the function $R\mapsto Q_N(R)$ is non-increasing on $[0,\infty)$. Moreover, $0\le Q_N (R) \le \Tr_{\gH^2}[|w| \gamma_N^{(2)}] \le C_0$ by~\eqref{eq:assumption-V-w-3}. Therefore, by Helly's selection principle, there exists a subsequence $N_k$ and a decreasing function $Q:[0,\infty) \to [0,C_0]$ such that $Q_{N_k}(R)\to Q(R)$ for all $R\in [0,\infty)$. Since $\lim_{R\to \infty} Q(R)$ exists, we conclude that 
$$\lim_{R\to \infty} \lim_{k\to \infty} (Q_{N_k}(R) - Q_{N_k}(8R))= \lim_{R\to \infty} (Q(R)-  Q(8R) )=0.$$
This is the desired convergence.
\end{proof}

\subsubsection*{\bf Step 2. Passing to the limit.}
Now we look at the right side of (\ref{eq:split-energy}). 
By the local compactness we have
\bq \label{eq:wlsc-strong-conv-chi} \lim_{N\to \infty} \chi_R^{\otimes k} \gamma_N^{(k)} \chi_R^{\otimes k} = \chi_R^{\otimes k}\gamma_N^{(k)}\chi_R^{\otimes k}
\eq
strongly in the trace class for all $k\ge 1$ and any fixed $R$. Using the fact that for any one-body operator $0\leq A \leq 1$
\begin{equation}
A \gamma_N^{(1)}A - \tr_2 [A^{\otimes 2} \gamma_N^{(2)}A^{\otimes 2}] \ge 0~~\text{on}~\gH,
\label{eq:link_1pdm_2pdm} 
\end{equation}
(where $\Tr_2$ is the partial trace in the second variable) and Fatou's lemma, we get
\begin{multline}
\label{eq:wlsc-Fatou-1body}
\liminf_{N\to \infty} \Tr_\gH \left[(T+C_0) \left( \chi_R \gamma_N^{(1)}\chi_R - \tr_2 [\chi_R^{\otimes 2} \gamma_N^{(2)}\chi_R^{\otimes 2}] \right) \right]\\
\ge    \Tr_\gH \left[(T+C_0)\left( \chi_R \gamma^{(1)}\chi_R - \tr_2 [\chi_R^{\otimes 2} \gamma^{(2)}\chi_R^{\otimes 2}\ \right) \right]
\end{multline}
for a large enough $C_0$ chosen such that $T+C_0\geq0$. By Fatou's lemma again,
\begin{multline}
\label{eq:wlsc-Fatou-2body}
\liminf_{N\to \infty} \frac{1}{2} \Tr_{\gH^2} \left[(T\otimes 1 + 1\otimes T + 2C_0 +w ) \chi_R^{\otimes 2} \gamma_N^{(2)}\chi_R^{\otimes 2}] \right]\\
\ge \frac{1}{2}\Tr_{\gH^2} \left[(T\otimes 1 + 1\otimes T + 2C_0 +w ) \chi_R^{\otimes 2} \gamma^{(2)}\chi_R^{\otimes 2}] \right].
\end{multline}
Here we have to choose $C_0$ even larger to make sure that $(T\otimes 1 + 1\otimes T + 2C_0 +w \ge 0$ on $\gH^2$. Putting (\ref{eq:wlsc-Fatou-1body}) and (\ref{eq:wlsc-Fatou-2body}) together and using the strong convergence (\ref{eq:wlsc-strong-conv-chi}), we find that
\begin{multline}
\label{eq:wlsc-chi-limit-N}
\liminf_{N\to \infty} \left( \Tr_\gH[T\chi_R \gamma_N^{(1)}\chi_R] + \frac{1}{2} \Tr_{\gH^2} [w\chi_R^{\otimes 2} \gamma_N^{(2)}\chi_R^{\otimes 2}] \right)\\
\ge\Tr_\gH[T\chi_R \gamma^{(1)}\chi_R] + \frac{1}{2} \Tr_{\gH^2} [w\chi_R^{\otimes 2} \gamma^{(2)}\chi_R^{\otimes 2}].
\end{multline}
By taking now the limit $R\to \infty$ we arrive at
\begin{multline}
\label{eq:wlsc-chi-limit-N-R}
\liminf_{R\to \infty}\liminf_{N\to \infty} \left( \Tr_\gH[T\chi_R \gamma_N^{(1)}\chi_R] + \frac{1}{2} \Tr_{\gH^2} [w\chi_R^{\otimes 2} \gamma_N^{(2)}\chi_R^{\otimes 2}] \right)\\
\ge \Tr_\gH[T \gamma^{(1)}] + \frac{1}{2} \Tr_{\gH^2} [w\gamma^{(2)}].
\end{multline}

Up to now the argument is general and it holds without the assumption that $w$ has no bound state. We will only use this fact to estimate the second term on the right side of (\ref{eq:split-energy}), that is, to prove that the particles far away have a non negative energy. Indeed, using~\eqref{eq:link_1pdm_2pdm} and  the assumption that $H^0(2)\geq0$, we obtain
\begin{equation}
\label{eq:wlsc-eta-limit-N-R}
\Tr_\gH[K\eta_R \gamma_N^{(1)}\eta_R] + \frac{1}{2} \Tr_{\gH^2} [w\eta_R^{\otimes 2} \gamma_N^{(2)}\eta_R^{\otimes 2}]
\geq \frac{1}{2} \Tr_{\gH^2} \left[H^0_2 \eta_R^{\otimes 2}\gamma_N^{(2)}\eta_R^{\otimes 2}\right] \ge 0.
\end{equation}
The weak lower semi continuity  (\ref{eq:wlsc}) now follows from (\ref{eq:split-energy}), (\ref{eq:wlsc-chi-limit-N-R}) and (\ref{eq:wlsc-chi-limit-N-R}).
\end{proof}

\subsection{Translation invariant case} \label{sec:tr-in}

In this subsection, we ignore the potential $V$ and consider the ground state energy $E^0(N)$ of the fully translation-invariant Hamiltonian 
$$ H_N^0 = \sum_{i=1}^N ((m^2-\Delta_i)^{s}-m^{2s}) + \frac{1}{N-1} \sum_{i<j}^N w(x_i-x_j).$$
We work under the same assumptions (\ref{eq:assumption-V-w-1}), (\ref{eq:assumption-V-w-2}) and (\ref{eq:assumption-V-w-3}) as before, but do not assume any of the equivalent statements of Lemma \ref{lem:no-bound-state}.
The particles escaping to infinity can thus have a negative energy, and the simple proof of Theorem~\ref{thm:wlsc} in the previous section does not apply. Furthermore, the situation is complicated by the fact that we do not expect the local convergence of all the sequences of approximate ground states. By using the translation invariance and the kinetic energy of the center of mass, one can construct sequences for which $\gamma^{(1)}_{\Psi_N}(\cdot-x_N)\wto_\ast0$ for any translation $(x_N)\subset\R^d$. This is called \emph{vanishing} in Lions' terminology~\cite{Lions-84}. Our proof of the validity of Hartree's theory will be based on some ideas of Lieb, Thirring and Yau~\cite{LieThi-84,LieYau-87} and on the geometric techniques of~\cite{Lewin-11}. Our main result is the following.

\begin{theorem}[\textbf{Translation-invariant systems}]\label{thm:tr-in}\mbox{}\\
Under Assumptions \eqref{eq:assumption-V-w-1}, \eqref{eq:assumption-V-w-2} and \eqref{eq:assumption-V-w-3} on $w$, we have  
$$\boxed{ \lim_{N\to \infty} \frac{E^0(N)}{N} =e^0_{\rm H}(1).}$$
\end{theorem}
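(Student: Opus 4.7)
The upper bound $\limsup_N E^0(N)/N \leq e_H^0(1)$ is immediate from the Hartree trial states $u^{\otimes N}$ with normalised $u \in H^s(\R^d)$. The lower bound is the substantive content: by Lemma~\ref{lem:increasing}, the limit $e_\infty := \lim_N E^0(N)/N \leq 0$ exists, and the task is to show $e_\infty \geq e_H^0(1)$.

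Take an approximate ground-state sequence $\Psi_N$, and translate each $\Psi_N$ so as to maximise the concentration function $R \mapsto \sup_y \int_{B_R(y)}\rho^{(1)}_{\Psi_N}$—this is allowed since $H_N^0$ is translation invariant. After a diagonal extraction, $\gamma^{(k)}_{\Psi_N} \wto_\ast \gamma^{(k)}$ in $\gS^1(\gH^k)$ for every $k \geq 1$, and the weak de Finetti Theorem~\ref{thm:weak-De-Finetti} produces a Borel probability measure $\mu$ on $B\gH$ with $\gamma^{(k)} = \int |u^{\otimes k}\rangle\langle u^{\otimes k}|\,d\mu(u)$. Set the ``escaped mass'' $\lambda := \int_{B\gH}(1 - \|u\|^2)\,d\mu(u) \in [0,1]$.

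The heart of the proof is an inside/outside energy split with localisation $\chi_R^2 + \eta_R^2 = 1$, exactly as in Lemma~\ref{lem:splitting} (whose argument goes through for $V \equiv 0$). For the inside piece, strong local compactness of $\chi_R^{\otimes k}\gamma^{(k)}_N\chi_R^{\otimes k}$ together with Fatou's lemma give
\begin{equation*}
\lim_{R\to\infty}\liminf_{N\to\infty}\Big(\Tr[K \chi_R \gamma_N^{(1)}\chi_R] + \tfrac{1}{2}\Tr[w\chi_R^{\otimes 2}\gamma_N^{(2)}\chi_R^{\otimes 2}]\Big) \geq \int_{B\gH}\cEH^0(u)\,d\mu(u).
\end{equation*}
I would then establish the pointwise scaling inequality $e_H^0(p) \geq p\, e_H^0(1)$ for $p \in [0,1]$: writing $u = \sqrt{p}\,v$ with $\|v\| = 1$, one has $\cEH^0(u) = p\langle v, Kv\rangle + (p^2/2) D(v)$, where $D(v) := \iint w(x-y)|v(x)|^2|v(y)|^2\,dx\,dy$; a case distinction on the sign of $D(v)$ combined with $K \geq 0$ yields the claim. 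Together with $\cEH^0(u) \geq e_H^0(\|u\|^2) \geq \|u\|^2 e_H^0(1)$ this bounds the inside contribution below by $(1-\lambda)\,e_H^0(1)$. For the outside piece, I would expand it through identity~\eqref{eq:localized-DM} applied to $A = \eta_R$, then use the mean-field operator bound
\begin{equation*}
\sum_{i=1}^k K_i + \frac{1}{N-1}\sum_{1\leq i<j\leq k} w_{ij} \;\geq\; \frac{k-1}{N-1}\,H_k^0 \;\geq\; \frac{k-1}{N-1}\, E^0(k).
\end{equation*}
Combined with $E^0(k)/k \nearrow e_\infty$, the uniform lower bound $E^0(k)/k \geq -C$, and Theorem~\ref{thm:other-localization} in the form~\eqref{eq:localized_de_finetti2} applied to $A = \chi_R$, this yields, for any fixed $\epsilon > 0$, the outside lower bound $(e_\infty - \epsilon)\lambda$ in the iterated limit $\liminf_N$ then $\lim_R$. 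Adding the inside and outside bounds and sending $\epsilon \to 0$ gives $(1-\lambda)\,e_\infty \geq (1-\lambda)\,e_H^0(1)$, whence $e_\infty \geq e_H^0(1)$ as soon as $\lambda < 1$.

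The main obstacle is the fully vanishing case $\lambda = 1$ (equivalently $\mu = \delta_0$), in which the previous inequality degenerates to $0 \geq 0$. By the Hartree upper bound this case can only be consistent with $e_H^0(1) = 0$, since otherwise Hartree trial states give strictly negative energies incompatible with the vanishing of all density matrices, but one must still prove $e_\infty \geq 0$ directly. My plan is to show that, after having chosen the optimal translation so that concentration-compactness vanishing of $\rho^{(1)}_N$ is genuine, the two-body interaction $\tfrac{1}{2}\Tr_{\gH^2}[w\gamma_N^{(2)}]$ also tends to zero: split $w(x-y) = w\chi_R(x-y) + w\eta_R(x-y)$, control the long-range $w\eta_R$-part via the decay assumption~\eqref{eq:assumption-V-w-2} exactly as in Lemma~\ref{lem:splitting}, and for the compactly supported relative piece argue by a second, pair-centred translation: persistent localisation of the pair distribution $z \mapsto \int \rho^{(2)}_N(x, x+z)\,dx$ around some $z_0$ would, after recentring on a clustering pair, contradict vanishing of $\rho_N^{(1)}$. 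Once $\Tr[w\gamma_N^{(2)}] = o(1)$ is in hand, the non-negativity $K \geq 0$ forces $E^0(N)/N \geq o(1)$, hence $e_\infty \geq 0 = e_H^0(1)$, finishing the proof.
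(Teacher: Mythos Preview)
Your approach is genuinely different from the paper's and largely sound in the non-vanishing case, but the treatment of vanishing has a real gap.

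\textbf{Non-vanishing case.} Your outside bound is slightly mis-stated: the operator inequality $\sum K_i + \tfrac{1}{N-1}\sum w_{ij} \geq \tfrac{k-1}{N-1}E^0(k)$ is weighted by $\tfrac{k(k-1)}{N(N-1)}$ after dividing by $N$, so what Theorem~\ref{thm:other-localization} actually yields is $(e_\infty-\epsilon)\,\lambda_2$ with $\lambda_2 := \int (1-\|u\|^2)^2\,d\mu$, not $(e_\infty-\epsilon)\,\lambda$. This is harmless: since $(1-p)^2 \leq 1-p$ on $[0,1]$ one has $\lambda_2 \leq \lambda$, and from $e_\infty(1-\lambda_2) \geq (1-\lambda)\,e_H^0(1)$ together with $e_H^0(1) \leq 0$ one still concludes $e_\infty \geq e_H^0(1)$ whenever $\lambda<1$.

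\textbf{Vanishing case.} Here the argument does not go through. Your claim that vanishing of $\rho_N^{(1)}$ forces $\Tr[w\gamma_N^{(2)}]\to 0$ is not established. The ``pair-centred translation'' step is not a well-defined operation on an indistinguishable $N$-body state: concentration of the relative distribution $z\mapsto \int\rho_N^{(2)}(x,x+z)\,dx$ near the diagonal gives no control on the spatial location of close pairs, so there is nothing to translate to. More concretely, one would need a bound of the type $\int_{|x-y|\leq R}\rho_N^{(2)}(x,y)\,dx\,dy \to 0$ from $\sup_y\int_{B_R(y)}\rho_N^{(1)}\to 0$, and no such estimate is available for general bosonic states---the only universal inequality, $\Tr[A^{\otimes 2}\gamma_N^{(2)}]\leq \Tr[A\gamma_N^{(1)}]$, gives boundedness, not decay. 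Your preliminary remark that ``this case can only be consistent with $e_H^0(1)=0$'' already presupposes the very claim you are trying to prove.

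\textbf{Comparison with the paper.} The paper avoids this obstacle entirely by the Lieb--Yau device: it sacrifices one particle to create a small attractive one-body potential $-\epsilon w_-$, obtaining $E^0(N)/N \geq E_\epsilon(N-1)/(N-1)$ for a modified Hamiltonian $H_{\epsilon,N-1}$ that is no longer translation-invariant. When $a := \lim E^0(N)/N < 0$, one has $\inf\sigma(K-\epsilon w_-) > a$ for small $\epsilon$, and this strict gap is then exploited inside the geometric localization to force $\Tr\gamma^{(1)}=1$ (concentration) for the modified problem. The conclusion for the original problem follows by letting $\epsilon\to 0$. This trick is precisely what replaces your missing vanishing argument, and without it (or an equivalent mechanism that rules out vanishing when $e_\infty<0$) the proof is incomplete.
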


We recall that the ground state energy in Hartree theory is 
$$ e^0_{\rm H}(1):= \inf_{\norm{u}^2=1} \left\{ \langle u,K u\rangle + \frac{1}{2} \iint |u(x)|^2 w(x-y) |u(y)|^2 dxdy \right\}$$
where $K:= (m^2-\Delta)^{s}-m^{2s}$.

\medskip

\begin{example}[\textbf{Boson stars}]\mbox{}\\
A pseudo-relativistic model of a star with $N$ gravitating bosons may be described by the Hamiltonian $H_N^0$ in dimension $d=3$  with $m=1, s=1/2$ and $w(x)=-\kappa |x|^{-1}$ for some  $0<\kappa<2/\pi$. The validity of Hartree's theory in this case was proved by Lieb and Yau \cite{LieYau-87}. Their proof is based on a clever replacement of the two-body potential by a one-body potential, which is also crucial in our proof. 
\end{example}

The outline of our proof is as follows. Following Lieb and Yau~\cite{LieYau-87}, we use some part of the two-body potential to create a negative one-body potential which breaks the translation invariance of the system. Since $e_{\rm H}^0(1) \le 0$ by Lemma \ref{le:Hartree}, it suffices to consider the case when $\lim_{N\to \infty}{E(N)}/{N}<0$. In this case, we may rule out both the vanishing and dichotomy for the modified model, and hence the geometric method of~\cite{Lewin-11} applies. The final result then follows from an approximation argument.

\begin{proof}[Proof of Theorem \ref{thm:tr-in}] First, we remove the center of mass to create a negative one-body potential. To be precise, for any $N$-body wave function $\Psi\in \gH^N$ and $\eps>0$, we can write, similarly as in~\cite{LieThi-84,LieYau-87},
\begin{multline}
\frac{N-1}{N}\langle \Psi , H_N^0  \Psi \rangle\\ =\left\langle \Psi, \left( \sum_{i=1}^{N-1}\Big(K_i -\eps w_-(x_i-x_N) \Big) + \frac{1}{N-2} \sum_{i<j}^{N-1} w_\eps(x_i-x_j) \right) \Psi \right\rangle \label{eq:trick-Nam}
\end{multline}
where 
$w_-(x):=\max\{0,-w(x)\}$ and $w_\eps(x):=w(x)+2\eps w_-(x)$.
In~\eqref{eq:trick-Nam}, the Hamiltonian in the parenthesis depends on $x_N$ but, by translation-invariance of the other terms, the bottom of its spectrum is actually independent of $x_N$. Therefore, 
$$\frac{E^0(N)}{N}\ge \frac{E_\eps(N-1)}{N-1}$$
where $E_\eps(N-1)$ is the ground state energy of the modified Hamiltonian
$$ H_{\eps,N-1}= \sum_{i=1}^{N-1}\Big(K_i -\eps w_-(x_i) \Big) + \frac{1}{N-2} \sum_{i<j}^{N-1} w_\eps (x_i-x_j)$$
in $\gH^{N-1}$. Since $E^0(N)/N$ and $E_\eps(N)/N$ are increasing sequences by Lemma~\ref{lem:increasing}, the following limits exist:
$$a_\eps:=\lim_{N\to \infty}\frac{E_\eps(N)}{N} \le \lim_{N\to \infty} \frac{E^0(N)}{N}=:a \le e^0_{\rm H}(1)\le 0.$$

We may now assume that $a<0$; otherwise $a=e^0_{\rm H}(1)=0$ and the proof is finished. Note that 
$$\lim_{\eps\to 0}\inf \sigma(K-\eps w_-) = \inf \sigma(K)=0,$$
where we have used the lower bound $\alpha_- K+w \ge -C$ by~\eqref{eq:assumption-V-w-3}. Therefore, $\inf \sigma(K-\eps w_-)>a$ for $\eps>0$ small enough. In this case we shall show that $a_\eps$ is exactly equal to the modified Hartree energy  
\bqq e_{\rm H,\eps}(1) := \inf_{\norm{u}^2=1} \left\{ \langle u,(K-\eps w_-) u\rangle + \frac{1}{2} \iint |u(x)|^2 w_\eps (x-y) |u(y)|^2 dxdy \right\}.
\eqq
Since $a_\eps \to a$ and $e_{{\rm H},\eps}(1) \to e^0_{\rm H}(1)$ as $\eps\to 0$, we then conclude that $a=e^0_{\rm H}(1)$ as desired. 

In order to prove that $a_\eps=e_{{\rm H},\eps}(1)$, we consider a sequence $\Psi_N$ of wave functions such that $\langle \Psi_N, H_{\eps,N} \Psi_N \rangle = E_{\eps}(N)+o(N)$. Then
$$
a_\eps = \lim_{N\to \infty}\frac{\langle \Psi_N, H_{\eps,N} \Psi_N \rangle}{N}=  \lim_{N\to \infty} \left( \Tr_\gH[(K-\eps w_-)\gamma_N^{(1)}] + \frac{1}{2} \Tr_{\gH^2}[w_\eps \gamma_N^{(2)}] \right)
$$ 
where $\gamma_N^{(k)}$ are the $k$-particle density matrices of $\Psi_N$. Up to extraction of a subsequence we may assume that $\gamma^{(k)}_N\wto_\ast\gamma^{(k)}$ for all $k\geq1$. 
The equality $a_\eps = e_{\rm H,\eps}$ will follow immediately if we can show that
\bq \label{eq:concentration-case}
\tr[\gamma^{(1)}]=1.
\eq
Indeed, by Corollary~\ref{cor:strong_CV},~\eqref{eq:concentration-case} implies that $\gamma^{(k)}_N\to\gamma^{(k)}$ strongly for all $k\geq1$. Then, arguing like in~\eqref{eq:trick_pass_limit} for the proof of Theorem~\ref{thm:confined} in the confined case, one sees that, in this case,
\begin{multline*}
\liminf_{N\to\ii}\left( \Tr_\gH[(K-\eps w_-)\gamma_N^{(1)}] + \frac{1}{2} \Tr_{\gH^2}[w_\eps \gamma_N^{(2)}] \right)\\
\geq \Tr_\gH[(K-\eps w_-)\gamma^{(1)}] + \frac{1}{2} \Tr_{\gH^2}[w_\eps \gamma^{(2)}].
\end{multline*}
The result then follows from the (strong) quantum de Finetti Theorem~\ref{thm:DeFinetti}. 

So it only remains to prove our claim~\eqref{eq:concentration-case}. To this end we use the geometric localization method. We pick a smooth partition $\chi_R^2 +\eta_R^2=1$ like in Lemma~\ref{lem:splitting} and get 
\begin{multline}
\label{eq:split-energy-translation}
a_\eps \ge  \liminf_{R\to \infty}\liminf_{N\to \infty} \left\{ \Tr_\gH[(K-\eps w_-) \chi_R \gamma_N^{(1)}\chi_R] + \frac{1}{2} \Tr_{\gH^2}[w_\eps \chi_R^{\otimes 2}\gamma_N^{(2)}\chi_R^{\otimes 2}] \right. \\
\left. + \Tr_\gH[ K \eta_R \gamma_N^{(1)}\eta_R] + \frac{1}{2} \Tr_{\gH^2}[w_\eps \eta_R^{\otimes 2}\gamma_N^{(2)}\eta_R^{\otimes 2}] \right \} .
\end{multline}

We consider the $\chi_R$-- and $\eta_R$--localized states $G_N^\chi$ and $G_N^\eta$ of $\Psi_N$, defined in~\eqref{eq:def_localization} and~\eqref{eq:def_localization2}.
We have 
\begin{multline}
\label{eq:localization-chi-eps}
\Tr_\gH[(K-\eps w_-) \chi_R \gamma_N^{(1)}\chi_R] + \frac{1}{2} \Tr_{\gH^2}[w_\eps \chi_R^{\otimes 2}\gamma_N^{(2)}\chi_R^{\otimes 2}]\\
= \frac{1}{N} \sum_{k=1}^N\Tr_{\gH^k}\left[ \left( \sum_{i=1}^k (K-\eps w_-)_i + \frac{1}{N-1} \sum_{i<j}^k w_\eps (x_i-x_j)\right) G^\chi_{N,k} \right]
\end{multline}
and
\begin{multline}
\label{eq:localization-translation-eta}
\Tr_\gH[ T \eta_R \gamma_N^{(1)}\eta_R] + \frac{1}{2} \Tr_{\gH^2}[w_\eps \eta_R^{\otimes 2}\gamma_N^{(2)}\eta_R^{\otimes 2}]\\
= \frac{1}{N} \sum_{k=1}^N \Tr_{\gH^k}\left[ \left( \sum_{i=1}^k K_i + \frac{1}{N-1} \sum_{i<j}^k w_\eps (x_i-x_j)\right) {G}^{\eta}_{N,k} \right].
\end{multline}
We have to estimate these two terms from below.

\medskip

\paragraph*{\bf Estimate on~\eqref{eq:localization-chi-eps}.}
We apply the variational inequality 
\bq\label{eq:A-tB}
A+tB = (1-t)A+ t(A+B) \ge (1-t) \inf \sigma(A) + t \inf\sigma  (A+B) 
\eq
for $A=\sum_{\ell =1}^k (K-\eps w_-)_\ell$, $A+B=H_{\eps,k}$ and $t=(k-1)/(N-1)$. Note that in this case $\inf\sigma(A)\ge \inf \sigma(A+B)$ since 
$$\inf \sigma(K-\eps w_-) > a \ge a_\eps \ge k^{-1}\inf \sigma(H_{\eps,k}) .$$
Thus from (\ref{eq:localization-chi-eps}) it follows that  
$$\Tr_\gH[(K-\eps w_-) \chi_R \gamma_N^{(1)}\chi_R] + \frac{1}{2} \Tr_{\gH^2}[w_\eps \chi_R^{\otimes 2}\gamma_N^{(2)}\chi_R^{\otimes 2}]  \ge \sum_{k=1}^N  \frac{k\Tr G^{\chi}_{N,k}}{N} \; \frac{E_\eps (k)}{k}.$$
Since 
$$\sum_{k=0}^N \frac{k\Tr G^\chi_{N,k}}{N} = \Tr [\chi_R^2 \gamma_N^{(1)}]\underset{N\to\ii}{\longrightarrow}\Tr [\chi_R^2 \gamma^{(1)}]\quad \text{and}\quad \lim_{k\to \infty} \frac{E_\eps (k)}{k} =a_\eps,$$
we conclude that
\begin{multline}
\label{eq:energy-chiR-translation}
\liminf_{N\to \infty} \left( \Tr_\gH[(K-\eps w_-) \chi_R \gamma_N^{(1)}\chi_R] + \frac{1}{2} \Tr_{\gH^2}[w_\eps \chi_R^{\otimes 2}\gamma_N^{(2)}\chi_R^{\otimes 2}] \right)\\
\geq a_\eps \Tr [\chi_R^2 \gamma^{(1)}]. 
\end{multline}
 
\medskip

\paragraph*{\bf Estimate on~\eqref{eq:localization-translation-eta}.}
Using $K\ge 0$, $w_\eps = w+ 2\eps w_- \ge (1-2\eps)w$ and $E^0(k)\le ak<0$, we find that 
\bqq
\sum_{i=1}^k K_i + \frac{1}{N-1} \sum_{i<j}^k w_\eps (x_i-x_j) &\ge& \frac{(1-2\eps)(k-1)}{N-1} H^0_k \hfill\\
&\ge &  \frac{(1-2\eps)(k-1)}{N-1} E^0 (k) \ge E^0(k) - 2\eps a k
\eqq
for every $k\ge 1$. Thus from (\ref{eq:localization-translation-eta}) it follows that
\begin{equation*}
\Tr_\gH[ K \eta_R \gamma_N^{(1)}\eta_R] + \frac{1}{2} \Tr_{\gH^2}[w_\eps \eta_R^{\otimes 2}\gamma_N^{(2)}\eta_R^{\otimes 2}] \ge \sum_{k=1}^N  \frac{k\Tr G^{\eta}_{N,k}}{N} \cdot \left(\frac{E^0(k)}{k} - 2\eps a \right).
\end{equation*}
Since 
$$\sum_{k=0}^N \frac{k\Tr G^\eta_{N,k}}{N} = \Tr [\eta_R^2 \gamma_N^{(1)} ]\underset{N\to\ii}{\longrightarrow}1-\Tr [\chi_R^2 \gamma^{(1)}]\quad \text{and}\quad \lim_{k\to \infty} \frac{E^0(k)}{k}  =a,$$
we deduce that
\begin{multline}
\label{eq:energy-etaR-translation}
\liminf_{N\to \infty} \left( \Tr_\gH[ T \eta_R \gamma_N^{(1)}\eta_R] + \frac{1}{2} \Tr_{\gH^2}[w_\eps \eta_R^{\otimes 2}\gamma_N^{(2)}\eta_R^{\otimes 2}]\right)\geq (1-2 \eps) a \big(1-\Tr[\chi_R^2 \gamma^{(1)}]\big). 
\end{multline}

Substituting now (\ref{eq:energy-chiR-translation}) and (\ref{eq:energy-etaR-translation}) into (\ref{eq:split-energy-translation}), we find that
\begin{align*}
a_\eps &\ge  \liminf_{R\to \infty}\left( a_\eps \Tr [\chi_R^2 \gamma^{(1)}  ] + (1-2\eps) a \big(1-\Tr [\chi_R^2\gamma^{(1)}]\big) \right)\\
&=  a_\eps \Tr [\gamma^{(1)}] + (1-2\eps) a \big(1-\Tr [\gamma^{(1)}]\big)
\end{align*}
Since $a_\eps \le a <0$, we conclude that $\Tr [\gamma^{(1)}] = 1$ as stated in~\eqref{eq:concentration-case} and the proof is complete. 
\end{proof}

\subsection{General case: Proof of Theorem~\ref{thm:general_intro}} \label{sec:general-proof}

We are now able to prove our main result, Theorem \ref{thm:general_intro}, which was stated in the introduction. Our strategy is to split the energy into two parts corresponding to the particles staying in a neighborhood of 0 and those escaping to infinity. We use the weak de Finetti Theorem \ref{thm:weak-De-Finetti} for the local part. The particles far from the origin form a fully translation-invariant system for which we have already shown that Hartree's theory is valid. The conclusion then follows from the binding inequality in Hartree's theory.

\begin{proof} Let $\Psi_N$ be a sequence of wave functions such that $\langle \Psi_N, H_N^V \Psi_N \rangle = E^V(N)+o(N)$ and denote by $\gamma_N^{(k)}$ the $k$-body density matrix of $\Psi_N$. Up to a subsequence we may assume that $\gamma_N^{(k)}\wto \gamma^{(k)}$ weakly-$\ast$ in the trace class for all $k\ge 1$. Let us denote by $\mu$ the probability measure on $B\gH$ associated with $\{\gamma^{(k)}\}_{k=1}^\infty$ as in the weak de Finetti Theorem \ref{thm:weak-De-Finetti}.

First we proceed similarly as before. Let $\chi_R^2 +\eta_R^2=1$ be a smooth partition of unity as in Lemma \ref{lem:splitting} and consider the associated localized states $G_N^\chi$ and $G_N^\eta$ in the Fock space $\cF(\gH)$. By (\ref{eq:split-energy}), we have
\begin{align} \label{eq:split-energy-general}
\lim_{N\to \infty}\frac{E^V(N)}{N} &= \lim_{N\to \infty} \left( \Tr_\gH[T\gamma_N^{(1)}] + \frac{1}{2} \Tr_{\gH^2}[w \gamma_N^{(2)}] \right) \nn\\
&\ge  \liminf_{R\to \infty}\liminf_{N\to \infty} \left\{ \Tr_\gH[T \chi_R \gamma_N^{(1)}\chi_R] + \frac{1}{2} \Tr_{\gH^2}[w \chi_R^{\otimes 2}\gamma_N^{(2)}\chi_R^{\otimes 2}] \right. \nn\\
&\qquad\left. +\Tr_\gH[ K \eta_R \gamma_N^{(1)}\eta_R] + \frac{1}{2} \Tr_{\gH^2}[w\eta_R^{\otimes 2}\gamma_N^{(2)}\eta_R^{\otimes 2}] \right\}.
\end{align}
By the strong local compactness (as in (\ref{eq:wlsc-chi-limit-N})) and the weak quantum de Finetti Theorem~\ref{thm:weak-De-Finetti}, we infer
\begin{multline}
\liminf_{N\to \infty} \left\{ \Tr_\gH[T \chi_R \gamma_N^{(1)}\chi_R] + \frac{1}{2} \Tr_{\gH^2}[w \chi_R^{\otimes 2}\gamma_N^{(2)}\chi_R^{\otimes 2}]\right\}\\
\geq  \Tr_\gH[T \chi_R \gamma^{(1)}\chi_R] + \frac{1}{2} \Tr_{\gH^2}[w \chi_R^{\otimes 2}\gamma^{(2)}\chi_R^{\otimes 2}]= \int_{B\gH} \E^V_{\rm H}(\chi_R u) d\mu(u).
\label{eq:localization-chi-general}
\end{multline}
For the second term of the right side of (\ref{eq:split-energy-general}), by using the geometric  localization method we can show that 
\begin{multline}
\label{eq:localization-eta-general}
\liminf_{N\to \infty} \left( \Tr[T \eta_R \gamma_N^{(1)}\eta_R] + \frac{1}{2} \Tr_{\gH^2}[w \eta_R^{\otimes 2}\gamma_N^{(2)}\eta_R^{\otimes 2}]\right)\\ \ge \int_{B\gH} e^{0}_{\rm H}(1-\norm{\chi_R u}^2) d\mu(u). 
\end{multline}

Before proving (\ref{eq:localization-eta-general}), we explain how to conclude the proof of the theorem. By substituting (\ref{eq:localization-chi-general}) and (\ref{eq:localization-eta-general}) into (\ref{eq:split-energy-general}) and using Fatou's lemma we find that
\bqq
\lim_{N\to \infty}\frac{E^V(N)}{N} &\ge & \liminf_{R\to \infty} \left( \int_{S\gH} \left[\E^V_{\rm H}(\chi_R u)+ e^0_{\rm H}(1-\norm{\chi_R u}^2) \right] d\mu(u)\right) \hfill\\
&\ge & \int_{B\gH} \liminf_{R\to \infty} \left[\E^V_{\rm H}(\chi_R u)+ e^0_{\rm H}(1-\norm{\chi_R u}^2)\right] d\mu(u)\hfill\\
&=& \int_{B\gH} \left[\E^V_{\rm H}(u)+ e^0_{\rm H}(1-\norm{u}^2) \right] d\mu(u)\\
&\geq & \int_{B\gH} \left[\eH^V(\norm{u}^2)+ e^0_{\rm H}(1-\norm{u}^2) \right] d\mu(u)\ge e_{\rm H}(1). 
\eqq
Here we have used the continuity of $\lambda \mapsto e_{\rm H}^0(\lambda)$, which will be proved below, and the non-strict inequality 
$\eH^V(1)\leq \eH^V(\lambda)+\eH^0(1-\lambda)$ which is taken from Lemma~\ref{le:Hartree}.
Given the upper bound $E^V(N)/N\le e^V_{\rm H}(1)$, we conclude that $\lim_{N\to \infty} E(N)/N = e^V_{\rm H}(1)$ and that $\mu$ has its support in 
$$\cM^V=\left\{u\in B\gH\ :\ \cEH^V(u)=\eH^V(\norm{u}^2)=e_{\rm H}^V(1)-e_{\rm H}^0(1-\norm{u}^2)\right\}.$$
Moreover, when $e_{\rm H}^V(1)<e^V_{\rm H}(\lambda)+e_{\rm H}^0(1-\lambda)$ for all $0\le \lambda<1$, we can deduce stronger statements as in the proof of Theorem \ref{thm:confined}. 

Now we prove (\ref{eq:localization-eta-general}). Let us consider the $\eta_R$-localized state $G_N^\eta$ of $\Psi_N$ which is such that 
\begin{align*}
&\Tr_\gH[ K \eta_R \gamma_N^{(1)}\eta_R] + \frac{1}{2} \Tr_{\gH^2}[w \eta_R^{\otimes 2}\gamma_N^{(2)}\eta_R^{\otimes 2}]\nn\\
&\qquad\qquad=\frac{1}{N} \sum_{k=1}^N \Tr_{\gH^k}\left[ \left( \sum_{i=1}^k K_i + \frac{1}{N-1} \sum_{i<j}^k w_{ij}\right) {G}^{\eta}_{N,k} \right] \hfill\\
&\qquad\qquad\ge\sum_{k=1}^N \frac{\Tr {G}^{\eta}_{N,k}}{N} \cdot \inf \sigma_{\gH^k} \left( \sum_{i=1}^k K_i + \frac{1}{N-1} \sum_{i<j}^k w_{ij}\right) . 
\end{align*}
On the other hand, by the fundamental relation $\tr G_{N,k}^\eta=\tr G_{N,N-k}^\chi$ (mentioned before in~\eqref{eq:relation-geometric}) and Theorem \ref{thm:other-localization}, we have 
\begin{multline*}
\lim_{N\to\ii}\sum_{k=0}^N \tr G_{N,k}^\eta\; e_{\rm H}^0 \left(\frac{k}{N}\right) =\lim_{N\to\ii}\sum_{k=0}^N \tr G_{N,N-k}^\chi \;e_{\rm H}^0 \left(\frac{k}{N}\right)\\
=\lim_{N\to\ii}\sum_{k=0}^N \tr G_{N,k}^\chi \;e_{\rm H}^0 \left(1-\frac{k}{N}\right)=\int_{B\gH}\;e_{\rm H}^0 (1-\|\chi_R u\|^2)d\mu(u).
\end{multline*}
Therefore, in order to prove (\ref{eq:localization-eta-general}) it suffices to show that
\begin{equation} \label{eq:localization-eta-bk}
\lim_{N\to\ii}\sum_{k=0}^N \tr G_{N,k}^\eta \left( \frac{k}{N} b_k \left( \frac{k-1}{N-1}\right) - e_{\rm H}^0 \left(\frac{k}{N}\right) \right) = 0,
\end{equation}
where $b_1(\lambda)\equiv 0$ and 
$$b_k(\lambda):= \frac{1}{k}\inf \sigma_{\gH^k} \left( \sum_{i=1}^k K_i + \frac{\lambda}{k-1} \sum_{i<j}^k w_{ij} \right)~~\text{when}~k\ge 2.$$
Note that we have to deal here with a Hamiltonian of the same form as $H^0_N$ but with a factor $\lambda\in[0,1]$ in front of the interaction. By Theorem~\ref{thm:tr-in}, we know that Hartree's theory is correct for such Hamiltonians, that is, we know that 
\begin{align*}
\lim_{k\to\ii} \lambda\, b_k(\lambda)&=\lambda \inf_{\norm{u}^2=1}\left(\pscal{u,Ku}+\frac{\lambda}{2}\iint w(x-y)|u(x)|^2|u(y)|^2\right)\\
&=\inf_{\norm{u}^2=\lambda}\left(\pscal{u,Ku}+\frac{1}{2}\iint w(x-y)|u(x)|^2|u(y)|^2\right)=\eH^0(\lambda).
\end{align*}
So we are almost done. In order to justify~\eqref{eq:localization-eta-bk}, let us show that the functions $\{b_k\}_{k=1}^\infty$ are equicontinuous on $[0,1]$. By adapting the variational estimate (\ref{eq:A-tB}), we obtain 
$$b_k(\lambda)\ge b_k(\lambda')~~\text{for all}~0\le \lambda<\lambda'\le 1.$$
On the other hand, if we denote $\delta := (\lambda'-\lambda)(\alpha^{-1}-\lambda)^{-1}$, then
\bqq
\frac{1}{k}\left( \sum_{i=1}^k K_i + \frac{\lambda'}{k-1} \sum_{i<j}^k w_{ij} \right) &=& \frac{1-\delta}{k} \left( \sum_{i=1}^k K_i + \frac{\lambda}{k-1} \sum_{i<j}^k w_{ij} \right) \hfill\\
&~& + \frac{\delta}{k\alpha} \left( \alpha \sum_{i=1}^k K_i + \frac{1}{k-1}\sum_{i<j}^k w_{ij} \right) \hfill\\
&\ge & (1-\delta) b_k(\lambda) - \frac{C\delta}{\alpha}.
\eqq
Thus
$$ 0\le b_k(\lambda)- b_k(\lambda') \le \delta (b_k(\lambda)+C\alpha^{-1})\le C|\lambda'-\lambda| $$
for a constant $C>0$ independent of $\lambda$, $\lambda'$ and $k$. 

The equicontinuity of $\{b_k\}_{k=1}^\infty$ and the pointwise convergence $\lim_{k\to \infty} \lambda b_k(\lambda)=e_{\rm H}^{0}(\lambda)$ yield the uniform convergence
$$ \lim_{M \to \infty} \sup_{N\ge k\ge M} \left|\frac{k}{N} b_k\left(\frac{k-1}{N-1}\right)-e_{\rm H}^{0}\left(\frac{k}{N}\right)\right| =0.$$
Consequently, we find that
$$ \lim_{N \to \infty} \sup_{k=1,2,...,N} \left|\frac{k}{N} b_k\left(\frac{k-1}{N-1}\right)-e_{\rm H}^{0}\left(\frac{k}{N}\right)\right| =0$$
and (\ref{eq:localization-eta-bk}) follows. The proof is complete.
\end{proof}

\section{Further extensions} \label{sec:extensions}

We conclude this paper by mentioning four interesting cases that may also be dealt with using our method.

\begin{remark}[Bosons in a magnetic field] \label{rmk:magnetic} Our results in Theorem \ref{thm:general_intro} are still valid when the fractional Laplacian $(m^2-\Delta)^{s}$ is replaced by its magnetic version $(m^2+|\nabla+iA(x)|^2)^{s}$, where $A:\R^d \to \R^d$ is a Borel measurable vector potential. For simplicity we assume that 
$|A|^{2s}=f_5+f_6$ with $f_j$ being as in (\ref{eq:assumption-V-w-2}). In this case, the IMS-type estimate 
\begin{equation}
\lim_{R\to \infty} \| K- \chi_R K \chi_R -\eta_R K \eta_R \|_{L^2\to L^2}=0
\label{eq:IMS-magnetic} 
\end{equation}
in (\ref{eq:IMS-estimate}) still holds true with $K$ replaced by $(m^2+|\nabla+iA(x)|^2)^{s}$. The proof of~\eqref{eq:IMS-magnetic} follows the same argument as~\cite[Lemma 7]{LenLew-11}. Moreover, by using the Cauchy-Schwarz inequality and the operator monotonicity of $t \mapsto t^{s}$ when $0<s\le 1$, we can show that 
$$
\lim_{R\to 0}\norm{(1-\Delta)^{-s/2}\eta_R \Big((m^2+|\nabla+iA(x)|^2)^{s} - (m^2-\Delta)^{s} \Big)\eta_R(1-\Delta)^{-s/2}}=0,
$$
which is a substitution for (\ref{eq:control_V_outside}). Therefore, our approach applies exactly as in the non-magnetic case. 
\end{remark}

\begin{remark}[Bosons hoping on a lattice] \label{rmk:lattice}
In this paper we mainly considered continuous systems for simplicity. Our method applies as well to bosons living on a lattice $\mathcal{L}\subset\R^d$ (a discrete subgroup of $\R^d$ with compact fundamental domain) and with a kinetic energy described by the discrete Laplacian. In this case we simply assume that the potentials $V$ and $w$ are in $\ell^\ii(\mathcal{L})$ and tend to zero at infinity (they are then compact operators on $\gH=\ell^2(\mathcal{L})$). As there is always local compactness on the lattice and as the discrete Laplacian satisfies an IMS localization formula similar to~\eqref{eq:IMS-magnetic}, our method applies \emph{mutatis mutandis} and  Theorem~\ref{thm:general_intro} holds in this case as well, without any change.
\end{remark}

\begin{remark}[The absolute ground state]\label{rmk:absolute}
Our method may also be used to investigate the absolute ground state energy of a quantum mechanical Hamiltonian, that is the infimum of the spectrum with no symmetry restriction on the admissible states. The absolute ground state energy coincides with the bosonic ground state energy in the situation covered by Theorem \ref{thm:general_intro} by a well-known method \cite[Section 3.24]{LieSei-09}, but it need not be the case in general, for example in the presence of magnetic fields or rotation (see Example \ref{ex:magn rot} and Remark \ref{rmk:magnetic}).

Observe that any Hamiltonian $H_N$ of the form \eqref{eq:intro hamil} satisfies 
\[
U_{\sigma_N} H_N U_{\sigma_N} ^*  = H_N
\]
for all permutations $\sigma_N$ of $N$ variables, where $U_{\sigma_N}$ is the unitary operator permuting the order of variables according to $\sigma_N$. Therefore, when analyzing the absolute ground state energy of $H_N$ we may consider only the mixed symmetric states, which are positive trace-class operators $\gamma ^{(N)}$ acting on  $\gH ^{\otimes N}$ and satisfying
\begin{equation}\label{eq:symmetry}
U_{\sigma_N} \gamma ^{(N)} U_{\sigma_N} ^* =\gamma ^{(N)}
\end{equation}
for all permutations $\sigma_N$. In this language the assumption on the Bose-Einstein symmetry corresponds to the stronger condition
\begin{equation}\label{eq:BE symmetry}
U_{\sigma_N} \gamma ^{(N)} =  \gamma^{(N)} U_{\sigma_N} =  \gamma^{(N)}
\end{equation}
for all permutations $\sigma_N$.

If we are given an infinite sequence $\left\{\gamma ^{(k)}\right\}_{k=0} ^{\infty}$ of $k$-particle positive trace-class operators that satisfy the symmetry assumption (\ref{eq:symmetry}) and the consistency assumption \eqref{eq:consistent}, then a generalization of Theorem 2.1 proved in \cite{Stormer-69,HudMoo-75} implies that there exists a Borel probability measure $\mu$ on the set of positive trace class operators on $\gH$ with trace $1$, 
such that
\begin{equation}
\gamma^{(k)}=\int \gamma^{\otimes k} \, d\mu(\gamma)
\label{eq:melange3}
\end{equation}
for all $k\geq0$. 

 It is then not difficult to adapt our approach to deduce a weak version as in Section~\ref{sec:de_Finetti} and use it to prove results about the absolute ground state that parallel those we presented for the bosonic ground state.

In this case, one obtains as limit object a Hartree theory for mixed one-body states, that is the ground state energy per particle is given in the limit $N\to \infty$ by the minimization of the functional
\begin{equation}\label{eq:Hartree mixed}
\cEH (\gamma):= \tr_{\gH} [T\gamma] + \frac12\tr_{\gH ^{\otimes 2}} [w\gamma ^{\otimes 2}]
\end{equation}
over all positive trace-class operators $\gamma$ on $\gH$ with $\Tr \gamma=1$. 

The general question of when the minimization of \eqref{eq:Hartree mixed} reduces to that of \eqref{eq:intro Hartree} (that is when absolute minimizers are asymptotically bosonic in the mean-field limit) seems to be mostly open. In \cite{Sei-03} the absolute ground state energy of a rotating trapped Bose gas with repulsive interactions has been considered in the Gross-Pitaevskii limit, and a functional similar to \eqref{eq:Hartree mixed} has been derived. In this particular case, sufficient and necessary conditions are also given for the minimization of \eqref{eq:Hartree mixed} to reduce to that of \eqref{eq:intro Hartree}. These conditions are intimately linked to the question of symmetry breaking and nucleation of vortices in rotating Bose gases. 
\end{remark}


\appendix

\section{Alternative proof of the weak de Finetti theorem}\label{app:Hudson-Moody-proof}

Here we explain how to derive Theorem \ref{thm:weak-De-Finetti} from the strong de Finetti theorem, following ideas of Hudson and Moody~\cite{HudMoo-75}. This proof is more direct but not as constructive as the one we gave in Section~\ref{sec:de_Finetti}, which also allowed us to relate the de Finetti measure $\mu$ to geometric localization in Theorem \ref{thm:other-localization}.

Let us consider a sequence of normal states $\Gamma_N \in \gS ^1 (\gH^N)$ and the weak-$\ast$ limits $\gamma^{(k)}$, $k\geq 1$, of their reduced density matrices, as in the statement of Theorem \ref{thm:weak-De-Finetti}. Following~\cite{RagWer-89,Werner-92}, it is useful to think of $\Gamma_N$ as a state $\omega_N$ on $\cB (\gH ^{\otimes\infty})$ which is the $C^{\ast}$ inductive limit of the sequence $\cB(\gH^{\otimes N})$ of all bounded operators on the tensor product $\gH^{\otimes N}$ without symmetry. This means that
$$\omega_N (b_1\otimes \ldots \otimes b_M) := \tr_{\gH^{\otimes N}} \left(\Gamma_N\ b_{1}\otimes \ldots \otimes b_{N}\right)$$
for all $M\geq N$ and where $\Gamma_N$ is extended to 0 outside of $\gH^N\subset\gH^{\otimes N}$. Here we have made the abuse of notation to identify the operator $b_1\otimes \ldots \otimes b_M$ with $b_1\otimes \ldots \otimes b_M\otimes \1\otimes\ldots$. Note that, using the bosonic symmetry of $\Gamma_N$
\begin{equation}\label{eq:symmetry finite N}
\omega_N (b_1\otimes \ldots \otimes b_M)  =  \omega_N (b_{\sigma(1)}\otimes \ldots \otimes b_{\sigma(N)}\otimes b_{N+1} \otimes \ldots \otimes b_M)
\end{equation}
for any $M\geq N$ and any permutation $\sigma$ of the first $N$ variables.

By the Banach-Alaoglu Theorem, the sequence $\omega_N$ admits a weak-$\ast$ cluster point $\omega$, a state on $\cB(\gH^{\otimes\ii})$. So $\omega_N$ converges to $\omega$ along a subnet, which means that for any $n$ and any $b_1,\ldots,b_n \in \cB (\gH)$
\begin{equation}\label{eq:extract abstract}
\omega_{h(\alpha)} (b_1 \otimes \ldots \otimes b_n) \to \omega (b_1\otimes \ldots \otimes b_n) 
\end{equation}
where $h:A\mapsto \mathbb{N}$ is a monotone cofinal function from some directed set $A$ to the integers. It is of course important to be able to test against the identity operator in \eqref{eq:extract abstract}, to ensure that $\omega$ is a state.

The state $\omega$ determines a hierarchy of $n$-particle states $\omega^{(n)}$ on $\cB(\gH^{\otimes n})$ with the consistency relations
\begin{equation}\label{eq:consistent abstract}
\omega^{(m)} (b_1\otimes \ldots \otimes b_m) =\omega (b_1\otimes \ldots \otimes b_m )= \omega^{(n)} (b_1\otimes \ldots \otimes b_m \otimes \1 ^{\otimes n-m} )
\end{equation}
for all $n\geq m$, and all $b_1, \ldots, b_m \in \cB(\gH)$. 

From~\eqref{eq:symmetry finite N} and~\eqref{eq:extract abstract} we deduce that the cluster point $\omega$ is \emph{symmetric}, that is, $\omega^{(n)}(b_{\sigma(1)}\otimes\cdots\otimes b_{\sigma(n)})=\omega^{(n)}(b_1\otimes\cdots\otimes b_n)$ for all bounded operators $b_1,...,b_n\in\cB(\gH)$ and every permutation $\sigma$. Now, from the strong de Finetti theorem (for abstract states on an algebra) of \cite{Stormer-69,HudMoo-75}, there exists a Borel probability measure $\mu$ on the set of states $\cS (\cB(\gH))$ on $\cB(\gH)$ such that, for any $n\geq 0$,
\begin{equation}\label{eq:de Finetti C star}
\omega ^{(n)} = \int_{\cS (\cB(\gH))} d\mu (\omega)\, \omega ^{\otimes n}.
\end{equation}
This is a consequence of the fact, proved first in \cite{Stormer-69}, that the tensor powers $\omega ^{\otimes \infty}$ are the extreme points of the convex set of symmetric states on $\cB(\gH ^{\infty})$. The link between $\omega^{(n)}$ and the weak-$\ast$ limits $\gamma^{(n)}$ of the density matrices of the sequence $\Gamma_N$ is that 
\begin{equation}
\tr_{\gH^N}\big(\gamma^{(n)} K\big) = \lim_{N\to\ii}\tr_{\gH^N}\big(\gamma_N^{(n)} K\big)= \lim_{N\to\ii} \omega_N(K)=\omega(K)=\omega^{(n)}(K)
\label{eq:coincide-compact} 
\end{equation}
for every symmetric compact operator $K$ on $\gH^N$.

We recall that any state $\omega$ on $\cB(\mathfrak{K})$ can be restricted to the algebra $\cK(\mathfrak{K})$ of compact operators (here $\mathfrak{K}$ is  any fixed separable Hilbert space). The associated non-negative linear form $\omega _{\rm nor}$ is called the \emph{normal part} of $\omega$ and it necessarily arises from a trace-class density operator $\gamma_\omega\in\gS^1(\mathfrak{K})$, since $\gS^1(\mathfrak{K})$ is the dual of $\cK(\mathfrak{K})$. Indeed, the map $\omega\mapsto \gamma_\omega$ is continuous from the set of states into the trace-class and we have 
$$\tr(\gamma_\omega)=\sup_{\substack{K\in\cK(\mathfrak{K})\\ 0\leq K\le 1}}\tr(\gamma_\omega K)=\sup_{\substack{K\in\cK(\mathfrak{K})\\ 0\leq K\le 1}}\omega(K)\leq \sup_{\substack{B\in\cB(\mathfrak{K})\\ 0\leq B\le 1}}\omega(B)= \omega(\1)=1.$$
By~\eqref{eq:coincide-compact}, we deduce that $\gamma^{(n)}$ must be the density operator associated with the normal part of the state $\omega^{(n)}$.

The main point of the proof is the remark that the normal part of a tensor product is the tensor product of the normal part,
\begin{equation}\label{eq:normal power}
\left( \omega ^{\otimes n}\right) _{\rm nor} = \left( \omega _{\rm nor} \right) ^{\otimes n},
\end{equation}
which follows from the fact that $\cK(\gH^{\otimes N})=\cK(\gH)^{\otimes N}$. From this we deduce that 
\begin{equation}\label{eq:de Finetti normal}
\gamma^{(n)} = \int_{\cS (\cB(\gH))} d\mu (\omega) (\gamma_\omega) ^{\otimes n}. 
\end{equation} 
Since, as we have said above, the map $\omega\mapsto \gamma_\omega$ is continuous, we can consider the push-forward Borel probability measure $\tilde\mu$ on the unit ball $B\gS^1(\gH)$ of the trace-class, which is such that $\tilde\mu(A)=\mu(\{\omega\, :\, \gamma_\omega\in A\})$ for all Borel sets $A\subset B\gS^1(\gH)$. Hence
$$\gamma^{(n)} = \int_{B\gS^1(\gH)} d\tilde\mu (\gamma)\, \gamma^{\otimes n}.$$

To conclude the proof of the weak de Finetti theorem, there only remains to show that $\tilde\mu$ is supported on the set of pure states:
$\tilde\mu \left( \left\{ \ketl u \ketr \bral u \brar, \: u \in B\gH  \right\}\right) = 1$. 
This follows exactly \cite[Section 4]{HudMoo-75} and this is where we need the important fact that our state has the Bose-Einstein symmetry, that is, $S_n \gamma ^{(n)} = \gamma ^{(n)}$ where $S_n$ is the symmetrization operator. Taking the trace against $S_n$ we find
\begin{align*}
\tr\gamma^{(n)} =\tr S_n\gamma^{(n)}&= \int_{B\gS^1(\gH)} d\tilde\mu (\gamma)\, \tr \big(S_n \gamma^{\otimes n}\big)\\
&\leq \int_{B\gS^1(\gH)} d\tilde\mu (\gamma)\, \tr \gamma^{\otimes n}=\tr\gamma^{(n)}
\end{align*}
where we have used that $S_n\leq1$. From this we deduce that $\tr (S_n\gamma^{\otimes n})=\tr \gamma^{\otimes n}$ for all $n\geq1$ and $\tilde\mu$-almost all $\gamma$. This is equivalent to $\gamma=|u\rangle\langle u|$ by~\cite[Proposition 3]{HudMoo-75}. Therefore $\tilde\mu$ is supported on rank-one density operators. Associated with this measure, there is a unique $S^1$-invariant Borel probability measure $\mu'$ on the ball $B\gH$ such that 
$$\gamma^{(n)} = \int_{B\gH} d\mu' (u) |u^{\otimes n}\rangle\langle u^{\otimes n}|,$$ 
and $\mu'$ is the sought-after de Finetti measure of Theorem~\ref{thm:weak-De-Finetti}.\qed



\end{document}